\documentclass[
    aps,
    prx,
    reprint,
    superscriptaddress,
    longbibliography,
    nofootinbib,
    floatfix,
]{revtex4-2}

\usepackage[T1]{fontenc}
\usepackage[utf8]{inputenc}
\usepackage{microtype}

\usepackage{graphicx}

\usepackage{booktabs}
\usepackage{makecell}

\usepackage{adjustbox}
\usepackage{capt-of}

\usepackage{amsmath,amssymb,mathtools}
\usepackage{amsthm}
\usepackage{braket} 
\usepackage{bm}
\usepackage{siunitx}

\usepackage[colorlinks=true,citecolor=blue,linkcolor=blue,urlcolor=blue]{hyperref}
\usepackage[nameinlink,capitalize]{cleveref}

\crefname{figure}{Fig.}{Figs.}
\Crefname{figure}{Fig.}{Figs.}

\crefname{equation}{Eq.}{Eqs.}
\Crefname{equation}{Eq.}{Eqs.}

\crefname{section}{Sec.}{Secs.}
\Crefname{section}{Sec.}{Secs.}

\crefname{appendix}{App.}{Apps.}
\Crefname{appendix}{App.}{Apps.}

\crefname{table}{Table}{Tables}
\Crefname{table}{Table}{Tables}

\crefname{definition}{Definition}{Definitions}
\Crefname{definition}{Definition}{Definitions}

\crefname{theorem}{Theorem}{Theorems}
\Crefname{theorem}{Theorem}{Theorems}

\crefname{proposition}{Proposition}{Propositions}
\Crefname{proposition}{Proposition}{Propositions}

\crefname{lemma}{Lemma}{Lemmas}
\Crefname{lemma}{Lemma}{Lemmas}

\crefname{corollary}{Corollary}{Corollaries}
\Crefname{corollary}{Corollary}{Corollaries}

\newcommand{\Tr}{\mathrm{Tr}}
\newcommand{\inp}{\mathfrak{i}}
\newcommand{\out}{\mathfrak{o}}

\newcommand{\ketbra}[2]{\ket{#1}\!\!\bra{#2}}  
\newcommand{\dket}[1]{| #1 \rangle \! \rangle}  

\newcommand{\dketbra}[2]{| #1 \rangle \! \rangle \! \langle \! \langle #2 |}

\newtheorem{theorem}{Theorem}
\newtheorem{lemma}[theorem]{Lemma}
\newtheorem{proposition}[theorem]{Proposition}
\newtheorem{corollary}[theorem]{Corollary}
\newtheorem{definition}[theorem]{Definition}

\newcommand{\valignbox}{\adjustbox{valign=c,padding=0 0em}}

\begin{document}

\title{Spatiotemporal Pauli processes: Quantum combs for modeling correlated noise in quantum error correction}

\author{John F. Kam}
\email{john.kam@monash.edu}
\affiliation{School of Physics and Astronomy, Monash University, Clayton, VIC 3168, Australia}
\affiliation{Quantum Innovation Centre (Q.InC), Agency for Science, Technology and Research (A*STAR), 4 Fusionopolis Way, Kinesis, \#05, Singapore 138635, Republic of Singapore}
\affiliation{Quantum Systems, Data61, CSIRO, Clayton, VIC 3168, Australia}

\author{Angus Southwell}
\affiliation{School of Physics and Astronomy, Monash University, Clayton, VIC 3168, Australia}

\author{Spiro Gicev}
\affiliation{School of Physics, The University of Melbourne, Parkville, VIC 3052, Australia}

\author{Muhammad Usman}
\affiliation{Quantum Systems, Data61, CSIRO, Clayton, VIC 3168, Australia}
\affiliation{School of Physics, The University of Melbourne, Parkville, VIC 3052, Australia}

\author{Kavan Modi}
\affiliation{Science, Mathematics and Technology Cluster, Singapore University of Technology and Design, 8 Somapah Road, 487372 Singapore}
\affiliation{Centre for Quantum Technologies, National University of Singapore, 3 Science Drive 2, 117543}


\begin{abstract}
    Correlated noise is a critical failure mode in quantum error correction (QEC), yet a gap remains between the stochastic Pauli models used for scalable QEC analyses and the microscopic, non-Markovian descriptions of noise in physical devices. We bridge this gap by introducing \emph{Spatiotemporal Pauli Processes} (SPPs): the natural multi-time generalization of Pauli channels---joint probability distributions over Pauli faults across space and time---obtained exactly from any noise process, however non-Markovian, under standard Pauli-frame randomization. SPPs thereby provide a single language in which correlated noise can be recast, compared, and extended, compatible with both microscopic open systems modeling and the stabilizer workflow of QEC design and analysis. Our central result is constructive: the multi-time Pauli twirl acts by local contractions on a process tensor network, yielding an explicit classical tensor network whose virtual bonds encode memory, bounded by the environment's Liouville-space dimension. Transfer operator diagnostics link memory spectra to correlation decay, and hidden Markov representations enable efficient sampling of correlated Pauli fault trajectories---mapping microscopically derived noise directly into circuit-level QEC simulation. We demonstrate this with surface code memory and stability benchmarks up to distance \(19\). A temporal ``storm'' model with tunable correlation time shows that memory alone, at strictly fixed marginal error rates, systematically erodes the exponential error suppression expected from code distance. A genuinely spatiotemporal quantum cellular automaton bath maps exactly, under system twirling, to a nonlinear probabilistic cellular automaton; tuning its coherent interactions drives the noise into a pseudo-critical regime with critical slowing down and macroscopic error avalanches that reverse surface code distance scaling. These results establish that below-threshold average error rates alone do not guarantee scalable error suppression, and position SPPs as a practical bridge from microscopic correlated dynamics to circuit-level QEC.
\end{abstract}

\maketitle

\tableofcontents

\section{Introduction}\label{sec:introduction}
Correlations are increasingly recognized as decisive failure modes in quantum error correction (QEC)~\cite{harperLearningCorrelatedNoise2023,fkamDetrimentalNonMarkovianErrors2025}. As devices reach below-threshold \emph{average} error rates, logical failure becomes dominated by rare correlated events; the physics of real devices permits temporal memory and spatial structure that can concentrate faults into bursts~\cite{mcewenRemovingLeakageinducedCorrelated2021,mcewenResolvingCatastrophicError2022,googlequantumaiandcollaboratorsQuantumErrorCorrection2025,bratrudMeasurementCorrelatedCharge2025,harringtonSynchronousDetectionCosmic2025}, induce heavy tails in syndrome statistics~\cite{claderImpactCorrelationsHeavy2021}, and ultimately undermine the independence assumptions behind standard threshold theorems~\cite{terhalFaulttolerantQuantumComputation2005,aharonovFaultTolerantQuantumComputation2006,alickiInternalConsistencyFaulttolerant2006}. Yet the modeling tools on either side of the physics--QEC interface remain disconnected. On one hand, stochastic Pauli models are the effective language of QEC design and analysis---underpinning the scalable simulation~\cite{gidneyStimFastStabilizer2021}, decoding~\cite{higgottPyMatchingPythonPackage2022,higgottSparseBlossomCorrecting2025}, and threshold analyses~\cite{dennisTopologicalQuantumMemory2002,fowlerSurfaceCodesPractical2012} that have driven much recent progress---but they typically impose near-independence, with correlations postulated phenomenologically through specialized, model-specific extensions. On the other hand, microscopic open quantum systems theories capture coherent dynamics and non-Markovian memory~\cite{breuerTheoryOpenQuantum2007,breuerColloquiumNonMarkovianDynamics2016,milzQuantumStochasticProcesses2021}, but lack a direct interface with circuit-level benchmarking at scale.

A natural starting point for describing general open system dynamics is the quantum \emph{process tensor}~\cite{pollockNonMarkovianQuantumProcesses2018,costaQuantumCausalModelling2016,milzQuantumStochasticProcesses2021}. Conceptually, it is the multi-time generalization of a quantum channel, and is represented as a \emph{quantum comb}~\cite{chiribellaQuantumCircuitArchitecture2008,chiribellaTheoreticalFrameworkQuantum2009,oreshkovQuantumCorrelationsNo2012,whiteUnifyingNonMarkovianCharacterization2025,tarantoHigherOrderQuantumOperations2025}. It provides an operationally complete description of non-Markovian dynamics by encoding the joint statistics of the system's response under arbitrary interventions at multiple times. It thereby captures the full temporal correlation structure induced by an environment with quantum memory, and versions of this object have been characterized experimentally on quantum hardware~\cite{whiteDemonstrationNonMarkovianProcess2020,whiteNonMarkovianQuantumProcess2022,whiteManytimePhysicsPractice2024,zhangLearningForecastingOpen2025}.

For QEC applications~\cite{tanggaraStrategicCodeUnified2024,kobayashiTensornetworkDecodersProcess2024}, however, this generality exposes a clear tension between expressivity and tractability. Process tensors are intrinsically quantum, high-rank objects whose complexity scales exponentially with both time and system dimension. In contrast, scalable circuit-level simulation and decoding are typically built around stochastic Pauli fault models. While such models are natural in stabilizer settings~\cite{gottesmanStabilizerCodesQuantum1997}, they are most often used in independent or weakly correlated forms, and therefore may fail to capture the genuinely spatiotemporal correlations present in real hardware. What is needed is a principled intermediate description---rich enough to encapsulate genuine spatiotemporal correlations, yet structured enough to support scalable circuit-level simulation, inference, and benchmarking.

\begin{figure*}[t]
    \centering
    \includegraphics[scale=1.0]{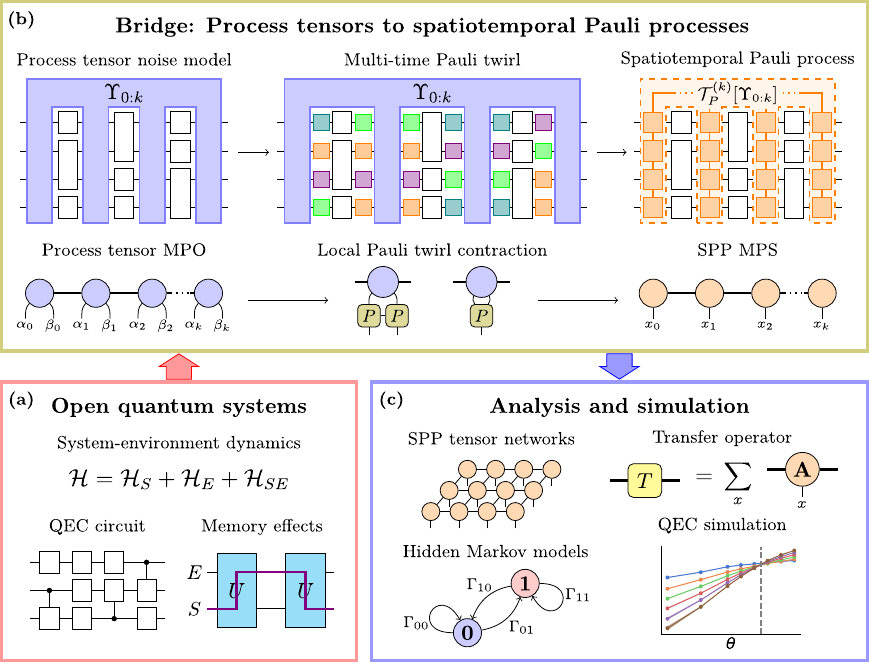}
    \caption{\textbf{Framework overview: Process tensors to spatiotemporal Pauli processes.} \textbf{(a)} Starting from microscopic open quantum system dynamics, we represent a process tensor description of noise and its tensor network form. \textbf{(b)} Applying the multi-time Pauli twirl---as induced operationally by Pauli-frame randomization---maps \(\Upsilon_{0:k}\) onto a \emph{spatiotemporal Pauli process} (SPP), i.e., a joint distribution over Pauli trajectories with a classical tensor network representation. \textbf{(c)} The resulting SPP admits scalable analysis and simulation tools, including transfer operator correlation diagnostics, hidden Markov model realizations, and Monte Carlo sampling for QEC benchmarking.}
    \label{fig:paper-overview}
\end{figure*}

To bridge this divide, we introduce \emph{Spatiotemporal Pauli Processes} (SPPs): the natural multi-time generalization of Pauli channels, describing stochastic processes over Pauli operators across space and time. Formally, an SPP is obtainable from any noise process---however non-Markovian---by applying the \emph{multi-time Pauli twirl} to its process tensor (\cref{fig:paper-overview}(b), upper), yielding a \emph{process-separable} Pauli comb, equivalently a joint probability distribution over Pauli trajectories. Crucially, this projection is operationally motivated: it formalizes the effective noise model induced by Pauli-frame randomization~\cite{silvaScalableProtocolIdentification2008,magesanModelingQuantumNoise2013} and related randomized compiling techniques~\cite{wallmanNoiseTailoringScalable2016,wareExperimentalPauliframeRandomization2021,bealeRandomizedCompilingFaulttolerant2023,jainImprovedQuantumError2023}. SPPs thereby constitute an operationally grounded noise framework compatible with both microscopic open systems modeling and the stabilizer workflow of QEC design and analysis---a single language in which existing correlated noise models can be recast, compared, and extended.

We go beyond the operational definition by showing that SPPs admit constructive tensor network representations directly inherited from underlying system-environment dynamics (\cref{fig:paper-overview}(b)). Because the multi-time Pauli twirl acts strictly via fixed \emph{local} contractions on the system legs, the induced SPP inherits a classical tensor network form whose entries encode joint spatiotemporal Pauli trajectory probabilities. Moreover, the inner bond dimension of this network is bounded by the Liouville space~\cite{woodTensorNetworksGraphical2015} dimension of the underlying environment, tying the effective memory of the Pauli process to the size of the physical bath. To make these correlations interpretable, we develop a transfer operator formalism for SPPs that links spectral properties to correlation times, and demonstrate that, for suitable classes of SPPs, these tensor networks can be recast exactly as classical hidden Markov models. Together, these structures make the bridge practical rather than merely formal: a Hamiltonian-level device model can be mapped, exactly, into the standard stabilizer Monte Carlo pipelines of circuit-level QEC simulation, supplying generative correlated noise models and priors for correlation-aware decoding and benchmarking.

We deploy this framework to systematically benchmark primitives of surface code computation---memory and stability---under correlated noise, with circuit-level simulations of code distances up to \(19\). First, a one-dimensional temporal ``storm'' model sweeps the correlation time while fixing single-round marginal error rates, isolating memory as the sole variable. This controlled setting shows that temporal correlations alone progressively degrade logical memory and stability, blunting the effective exponential suppression of errors expected from distance scaling. Second, we introduce a genuinely spatiotemporal \((2+1)D\) SPP derived from a two-dimensional quantum cellular automaton (QCA)~\cite{schumacherReversibleQuantumCellular2004,farrellyReviewQuantumCellular2020} bath. Under system Pauli twirling, the QCA dynamics maps to an effective SPP described \emph{exactly} by a probabilistic cellular automaton (PCA) with a nonlinear transition kernel---the nonlinearity a distinctly quantum footprint of bath coherence. Tuning a single parameter controlling coherent bath interactions drives the Pauli process into a pseudo-critical regime with critical slowing down and macroscopic error avalanches, whose spatiotemporal fault patterns cause a breakdown and eventual reversal of standard surface code distance scaling. That correlated environments can defeat fault tolerance in principle has long been established analytically, through studies of codes coupled to common bosonic baths and threshold theorems for correlated Hamiltonian noise~\cite{klesseQuantumErrorCorrection2005,novaisDecoherenceCorrelatedNoise2006,novaisSurfaceCodeThreshold2013,hutterBreakdownSurfacecodeError2014,terhalFaulttolerantQuantumComputation2005,aharonovFaultTolerantQuantumComputation2006,ngFaulttolerantQuantumComputation2009,preskillSufficientConditionNoise2013}. Our results supply a constructive counterpart: the same class of microscopic mechanisms, rendered as SPPs, can be simulated, tuned, and benchmarked directly at the circuit level.

Significantly, rather than positing correlated bursts phenomenologically, we obtain them---with their full spatiotemporal profile---from a single microscopic parameter. This inverts the standard workflow at the physics--QEC interface: correlated failure modes need not be inserted by hand at the circuit level, but can be derived from device-level dynamics, tuned continuously, and stress-tested against full circuit-level QEC pipelines, providing a microscopically grounded testbed for the correlated error mechanisms observed but not yet fully understood in hardware. More broadly, these results establish a concrete, computationally tractable link between non-equilibrium noise dynamics and QEC performance, and demonstrate that below-threshold average error rates alone do not guarantee scalable error suppression.

Several related works have also investigated non-Markovian dynamics under Pauli randomization or structure. Operational Markovianization has been studied in randomized benchmarking, where randomized compiling and dynamical decoupling can suppress aspects of non-Markovian behavior~\cite{figueroa-romeroOperationalMarkovianizationRandomized2024}. Raza \emph{et al.}~\cite{razaOnlineLearningQuantum2024} developed online learning and shadow tomography results for Pauli channels and quantum combs; notably, their multi-time construction includes a closely related Pauli twirl of the comb's Choi representation. Liu \emph{et al.}~\cite{liuNonMarkovianNoiseSuppression2024} derived an analogous reduction using the associated ``Choi-channel'' representation, with related effects demonstrated experimentally in noise-mitigation settings~\cite{liuRealizingUniversalNonMarkovian2025}. Furthermore, a related, but conceptually distinct line of work has investigated the \emph{single-snapshot} non-Markovianity of Pauli and Pauli-twirled channels via CP divisibility and embeddability criteria~\cite{kattemolleNonMarkovianityInducedPaulitwirling2026,seifSingleSnapshotNonMarkovianity2026}. Our framework, in contrast, treats non-Markovianity as an explicitly multi-time property of general quantum combs; none of these works provide the constructive tensor network implementation, physical memory bounds, or circuit-level QEC deployment developed here.

The remainder of this paper is structured as follows. \Cref{sec:background} reviews background on quantum channels, Pauli twirling, and process tensors. \Cref{sec:pt-tn} constructs tensor network representations of multi-time processes from microscopic dynamics. \Cref{sec:spp} introduces SPPs via the multi-time Pauli twirl and derives their tensor network structure, with worked examples. \Cref{sec:correlation-structure} develops the transfer operator formalism and the connection to hidden Markov models. \Cref{sec:temporalstormmodel} benchmarks surface code memory and stability under the temporally correlated storm model. \Cref{sec:qcamodel} derives an effective PCA from a coherent quantum bath and analyzes the resulting pseudo-critical breakdown of distance scaling. Finally, \cref{sec:discussion} concludes and discusses future applications.

\section{Background and formalism}\label{sec:background}
In this section, we establish the requisite machinery for our spatiotemporal Pauli process framework. We begin by reviewing quantum channels and their representations, then focus on Pauli channels and Pauli twirling, motivating their central role in quantum error correction while discussing their limitations. Finally, we generalize from two-time maps to multi-time dynamics via the process tensor formalism, which provides a rigorous description of non-Markovian memory and sets the stage for our later analysis of spatiotemporal noise correlations.

\subsection{Quantum channels and representations}
The physical evolution of a quantum state \(\rho \in \mathcal{B}(\mathcal{H})\), where \(\mathcal{B}(\mathcal{H})\) is the space of bounded linear operators on a Hilbert space \(\mathcal{H}\), is entirely described by a \emph{quantum channel} \cite{nielsenQuantumComputationQuantum2010}. A quantum channel is a linear map \(\mathcal{E}: \mathcal{B}(\mathcal{H}_{\mathrm{in}}) \to \mathcal{B}(\mathcal{H}_{\mathrm{out}})\) that is completely positive and trace preserving (CPTP). Formally, a quantum channel \(\mathcal{E}\) must satisfy
\begin{enumerate}
\item \textbf{Linearity}
\begin{equation}
    \mathcal{E}[\alpha\rho_1+\beta\rho_2]=\alpha\mathcal{E}[\rho_1]+\beta\mathcal{E}[\rho_2].
\end{equation}
\item \textbf{Complete positivity (CP)}
\begin{equation}
    (\mathcal{I}_A\otimes \mathcal{E}_B)[\rho_{AB}]\geq 0, \quad \forall \rho_{AB} \ge 0.
\end{equation}
\item \textbf{Trace preservation (TP)}
\begin{equation}
    \Tr(\mathcal{E}[\rho]) = \Tr(\rho).
\end{equation}
\end{enumerate}

Physically, a channel represents a \emph{two-time} quantum map evolving an initial state \(\rho(t_0)\) at time \(t_0\) to a final state \(\rho(t_1)\) at time \(t_1\).

A quantum channel admits several equivalent representations, each useful in different contexts. The most well-known representation is the Kraus (operator-sum) representation, which expresses the channel's action on an input state \(\rho\) as
\begin{equation}
    \mathcal{E}[\rho] = \sum_i K_i \rho K_i^\dagger,
\end{equation}
where the Kraus operators \(\{K_i\}\) satisfy \(\sum_i K_i^\dagger K_i = \mathbb{I}\) to ensure trace preservation. 

Another important representation is provided by the Choi-Jamiołkowski isomorphism (CJI), which establishes a bijection between CP maps \(\mathcal{E}: \mathcal{B}(\mathcal{H}_{\mathrm{in}}) \to \mathcal{B}(\mathcal{H}_{\mathrm{out}})\) and positive semidefinite operators \(\Upsilon \in \mathcal{B}(\mathcal{H}_{\mathrm{in}} \otimes \mathcal{H}_{\mathrm{out}})\). The Choi operator \(\Upsilon\) is defined as
\begin{equation}
    \Upsilon = (\mathcal{I} \otimes \mathcal{E})[\ketbra{\Phi^+}{\Phi^+}],
\end{equation}
where \(\ket{\Phi^+} = \sum_i \ket{i}\otimes \ket{i}\) is the (unnormalized) Bell state on \(\mathcal{H}_{\mathrm{in}} \otimes \mathcal{H}_{\mathrm{in}}\). The action of the Choi operator \(\Upsilon\) on an input state \(\rho\) can then be written as
\begin{equation}\label{eq:choi-rep}
    \mathcal{E}[\rho] = \Tr_{\mathrm{in}}[(\rho^T \otimes \mathbb{I}_{\mathrm{out}})\Upsilon].
\end{equation}

The CP condition on \(\mathcal{E}\) translates into the positive semidefiniteness of \(\Upsilon\), while the TP condition translates to the partial trace relation \(\Tr_{\mathrm{out}}[\Upsilon] = \mathbb{I}_{\mathrm{in}}\). This representation directly encodes the complete dynamical information of a channel into a static bipartite state. This duality is generalized to later multi-time processes described by process tensors in \cref{sec:process_tensors}.

The Stinespring dilation theorem provides an additional, physically motivated perspective on quantum channels. It states that any CPTP map acting on a system space \(\mathcal{B}(\mathcal{H}_S)\) can be realized as a unitary operation \(U\) on a larger, dilated space \(\mathcal{B}(\mathcal{H}_S \otimes \mathcal{H}_E)\), followed by a partial trace over an auxiliary environment \(E\). This is expressed as
\begin{equation}
    \mathcal{E}[\rho_S] = \Tr_E[U(\rho_S \otimes \sigma_E)U^\dagger],
\end{equation}
where \(\sigma_E\) is an initial state of the environment, which must be initially separable from \(\rho_S\).

Throughout this work, we also employ further representations, including the \(\chi\)-matrix\footnote{Also known as a process matrix, a term we avoid using due to a terminological clash with the higher-order quantum object with potentially indefinite causal order~\cite{oreshkovQuantumCorrelationsNo2012}.} representation and the (Liouville-)superoperator representation. The \(\chi\)-matrix is related to the Choi matrix via a change of basis to the Pauli basis, while the superoperator representation is convenient for numerical calculations. We do not explicitly review these representations nor the detailed mappings between them. For a comprehensive overview of quantum channel representations and the transformations between them in a graphical calculus, we refer the reader to Ref.~\cite{woodTensorNetworksGraphical2015}.

\subsection{Pauli channels and Pauli twirling}
An important class of quantum channels, especially in quantum error correction, is the class of Pauli channels~\cite{nielsenQuantumComputationQuantum2010}. A Pauli channel \(\mathcal{E}_P\) acting on \(n\) qubits is defined as a convex mixture of Pauli operators,
\begin{equation}\label{eq:pauli-channel}
    \mathcal{E}_P[\rho] = \sum_{P \in \mathbb{P}^{(n)}} p_P P \rho P,
\end{equation}
where \(\mathbb{P}^{(n)} = \{I, X, Y, Z\}^{\otimes n}\) is the \(n\)-qubit Pauli basis, and coefficients \(p_P \geq 0\) satisfy \(\sum_{P} p_P = 1\). 

Any general channel \(\mathcal{E}\) can be projected into a Pauli channel via a procedure known as \emph{Pauli twirling}. The Pauli twirl of \(\mathcal{E}\) is defined as the group average
\begin{equation}\label{eq:pauli-twirl}
    \mathcal{T}_P(\mathcal{E})[\rho] = \frac{1}{|\mathbb{P}^{(n)}|} \sum_{P \in \mathbb{P}^{(n)}} P \mathcal{E}(P \rho P) P,
\end{equation}
where \(|\mathbb{P}^{(n)}| = 4^n\). The image of the twirling operation is strictly a Pauli channel of the form in \cref{eq:pauli-channel}. Equivalently, given the \(\chi\)-matrix representation of \(\mathcal{E}\), the twirled channel \(\mathcal{T}_P(\mathcal{E})\) can be computed by zeroing all off-diagonal elements, leaving only the diagonal elements corresponding to probabilities \(p_P\) of the Pauli operators. In the Choi representation, the twirling operation acts as a CPTP \emph{projector} (an idempotent and hermiticity-preserving map) on the Choi operator \(\Upsilon\):
\begin{equation}
    \mathcal{T}_P(\Upsilon) = \frac{1}{4^n} \sum_{P \in \mathbb{P}^{(n)}} (P \otimes P) \Upsilon (P \otimes P).
\end{equation}

This operation removes coherences between different Pauli operators (off-diagonal elements in the \(\chi\)-matrix) while preserving the CPTP condition. Pauli twirling can therefore be viewed as a \emph{supermap}~\cite{chiribellaQuantumCircuitArchitecture2008}---a higher-order quantum operation---as it maps valid quantum channels to valid quantum channels. We will later generalize this projection in the process tensor framework. Before turning to multi-time processes, however, we first discuss the relevance and limitations of Pauli noise models in quantum error correction. 

\subsection{Pauli noise as an effective model for QEC}
Pauli noise models occupy a privileged role in quantum error correction. The discretization of errors via projective syndrome measurements collapses continuous error processes into discrete outcomes associated with Pauli errors---allowing the correction of general noise~\cite{knillTheoryQuantumErrorcorrecting1997,gottesmanStabilizerCodesQuantum1997}. Furthermore, via the Gottesman-Knill theorem~\cite{gottesmanHeisenbergRepresentationQuantum1998}, Clifford circuits subject to Pauli noise are amenable to efficient classical simulation, enabling large-scale Monte Carlo studies of QEC performance in regimes relevant for fault-tolerant computation~\cite{aaronsonImprovedSimulationStabilizer2004,gidneyStimFastStabilizer2021}.

Physical noise processes, however, are described by general CPTP maps that need not be Pauli channels~\cite{nielsenQuantumComputationQuantum2010}. Coherent unitary errors---arising from miscalibrated gates, drift, or unwanted system-system or system-environment couplings---induce structure in noise that is explicitly discarded by the Pauli twirling approximation. Consequently, stochastic Pauli models cannot wholly capture the full underlying dynamics and can, in some cases, underestimate logical error rates, as seen in worst-case analyses with highly coherent gate-based noise~\cite{greenbaumModelingCoherentErrors2017,huangPerformanceQuantumError2019}. These discrepancies are further exacerbated in non-Markovian settings~\cite{kobayashiTensornetworkDecodersProcess2024}, where coherent environmental memory effects defy characterization even by sequences of general quantum channels.

Despite these limitations, there are compelling reasons to treat Pauli noise as a central effective noise model for QEC. First and foremost, Pauli twirling can be engineered in practice. Randomized compiling (RC) actively tailors general noise into stochastic Pauli noise on the physical level by interleaving and compiling random Pauli gates into the circuit~\cite{wallmanNoiseTailoringScalable2016,bealeRandomizedCompilingFaulttolerant2023,jainImprovedQuantumError2023}. Second, many of the most dramatic examples where Pauli models misdiagnose logical failure rates rely on highly structured and fully coherent error patterns that are unlikely to persist in large-scale devices~\cite{greenbaumModelingCoherentErrors2017}. QEC performance under less pathological non-Pauli models, such as amplitude or phase damping, is well approximated by their Pauli twirled counterparts~\cite{darmawanTensornetworkSimulationsSurface2017,myersSimulatingGeneralNoise2025}. Furthermore, previous studies suggest that residual coherent effects on logical qubits are naturally suppressed with increasing code distance for standard stabilizer codes~\cite{bealeQuantumErrorCorrection2018,iversonCoherenceLogicalQuantum2020}.

Finally, the efficiency of Pauli noise representations enables a range of practical noise characterization protocols. In particular, the reconstruction of effective phenomenological Pauli noise from syndrome measurement data alone~\cite{wagnerOptimalNoiseEstimation2021,wagnerPauliChannelsCan2022,gicevQuantumComputerError2024,remmExperimentallyInformedDecoding2025} is highly valuable for in situ monitoring and calibration of real-world QEC implementations~\cite{kunjummenSituCalibrationUnitary2025}.

Crucially, even with a Pauli description of noise, one can still accommodate the rich spatiotemporal correlations induced by non-Markovian dynamics. Most existing analyses, however, assume uncorrelated, Markovian error models, whereas correlations in stochastic models can have a pronounced impact on the efficacy of QEC protocols \cite{fowlerQuantifyingEffectsLocal2014,fkamDetrimentalNonMarkovianErrors2025}. 

This motivates the central arc of the paper: We begin from a fully general, multi-time description of noise, and then construct an operational bridge to the Pauli fault language used by stabilizer-based QEC. In the next subsection, we introduce the process tensor formalism, which provides a complete description of multi-time quantum processes---including non-Markovian memory effects---and serves as the starting point of our construction. We then develop tensor network representations of process tensors in \cref{sec:pt-tn}, which are essential for deriving the efficient spatiotemporal Pauli descriptions in \cref{sec:spp,sec:correlation-structure}.

\subsection{Process tensors and non-Markovianity}\label{sec:process_tensors}
While quantum channels describe two-time dynamics, they are insufficient for capturing general multi-time quantum processes, especially in the presence of non-Markovian memory effects. The quantum \emph{process tensor} provides a general spatiotemporal framework that encodes all such correlations. It can be viewed as a \emph{higher-order quantum map} (or quantum comb), that takes as input a sequence of \(k\) \emph{instruments} \(\mathbf{A}_{0:k-1}=(\mathcal{A}_0,\dots,\mathcal{A}_{k-1})\), and outputs the resulting system state \(\rho_\mathrm{out}\). Here each \(\mathcal{A}_j\) denotes a CP map acting on the system at time \(t_j\), representing an arbitrary control operation, gate, or projective measurement performed by an experimenter. Additionally, what exactly constitutes an instrument can be defined contextually. For instance, relevant for QEC, a syndrome extraction cycle can be treated as a single instrument~\cite{tanggaraStrategicCodeUnified2024,kobayashiTensornetworkDecodersProcess2024}. The set of sequences \(\{\mathbf{A}_{0:k-1}\}\) thus broadly encapsulates all possible controllable dynamics imposed on the system between the initial preparation and the final time.

For the purposes of modeling the multi-time dynamics starting from an initial system state \(\rho_\mathrm{in}\)\footnote{For cases where the system and environment are initially correlated, the process tensor acts purely on a sequence of instruments \(\mathbf{A}_{0:k-1}\), where the initial map \(\mathcal{A}_0\) serves as a preparation map. Correspondingly, a \emph{supermap}~\cite{chiribellaQuantumCircuitArchitecture2008} is exactly a 1-slot process tensor in this definition.}, we define a \emph{\(k\)-slot} process tensor \(\mathcal{T}_{0:k}\) as a multilinear map acting on \(\rho_\mathrm{in}\) and \(\mathbf{A}_{0:k-1}\) by
\begin{equation}
\mathcal{T}_{0:k}[\rho_\mathrm{in},\mathbf{A}_{0:k-1}] = \rho_\mathrm{out}.
\end{equation}
Like quantum channels, a process tensor admits several equivalent representations, the most prevalent being the Choi representation. Just as the CJI establishes a channel-state duality, there exists a process-state duality that maps any quantum process \(\mathcal{T}_{0:k}\) to a positive semidefinite operator \(\Upsilon_{0:k} \in \mathcal{B}(\bigotimes_{j=0}^k \mathcal{H}_{\mathrm{in}_j} \otimes \mathcal{H}_{\mathrm{out}_j})\). Its action on an initial state \(\rho_{\mathrm{in}}\) and sequence \(\mathbf{A}_{0:k-1}\) is given by
\begin{multline}\label{eq:pt-choi-rep}
    \rho_\mathrm{out}(\rho_\mathrm{in}, \mathbf{A}_{0:k-1}) \\
    = 
    \Tr_{\overline{\mathrm{out}}} \Big[ \big( \rho_{\mathrm{in}}^T
    \otimes \hat{\mathcal{A}}_0^T \otimes \cdots
    \otimes \hat{\mathcal{A}}_{k-1}^T \otimes \mathbb{I}_{\mathrm{out}} \big)
    \Upsilon_{0:k} \Big],
\end{multline}
where \(\Tr_{\overline{\mathrm{out}}}\) denotes the partial trace over all subspaces except the final output space \(\mathcal{H}_{\mathrm{out}_k}\), and \(\hat{\mathcal{A}}_j \in \mathcal{B}(\mathcal{H}_{\mathrm{out}_j} \otimes \mathcal{H}_{\mathrm{in}_{j+1}})\) is the Choi operator of the CP map \(\mathcal{A}_j\). For \(k=0\), the above expression reduces to the channel case in \cref{eq:choi-rep}.

Equivalent Kraus, superoperator, and \(\chi\)-matrix representations can be derived from the Choi form in direct analogy with quantum channels. The Choi representation is convenient because it translates the physical properties of a quantum process into concrete properties of the operator \(\Upsilon_{0:k}\). Crucially, a process tensor \(\mathcal{T}_{0:k}\) represents a valid, physical process if and only if its Choi representation \(\Upsilon_{0:k}\) satisfies the following conditions:

\begin{enumerate}
    \item \textbf{Multi-linearity}

    The Choi operator \(\Upsilon_{0:k}\) represents a multilinear map \(\mathcal{T}_{0:k}\) with respect to the input state \(\rho_{\mathrm{in}}\) and each instrument in the instrument sequence \(\mathbf{A}_{0:k-1}\),
    \begin{equation}
        \begin{aligned}
        &\mathcal{T}_{0:k}[\alpha\rho_1 + \beta\rho_2, \mathbf{A}_{0:k-1}] \\
        &\quad =
        \alpha\mathcal{T}_{0:k}[\rho_1, \mathbf{A}_{0:k-1}]
        + \beta\mathcal{T}_{0:k}[\rho_2, \mathbf{A}_{0:k-1}], \\[0.5em]
        &\mathcal{T}_{0:k}[\rho_{\mathrm{in}}, \alpha\mathbf{A}_1 + \beta\mathbf{A}_2] \\
        &\quad =
        \alpha\mathcal{T}_{0:k}[\rho_{\mathrm{in}}, \mathbf{A}_1]
        + \beta\mathcal{T}_{0:k}[\rho_{\mathrm{in}}, \mathbf{A}_2].
        \end{aligned}
    \end{equation}

    \item \textbf{Positive semidefiniteness}
    
    This is the direct generalization of the CP condition for quantum channels. The Choi operator must be positive semidefinite,
    \begin{equation}
        \Upsilon_{0:k} \geq 0,
    \end{equation}
    ensuring that, for any valid initial state and sequence of instruments, the output state is positive semidefinite.
    \item \textbf{Causal (or affine) constraints}
    
    Also known as the containment property, these constraints are a more intricate generalization of the TP condition of channels to multi-time processes. The Choi operator must satisfy the recursive constraints,
    \begin{equation}\label{eq:affine-constraint}
        \begin{split}
            &\Tr_{\mathrm{out}_j}[\Upsilon_{0:j}] = \Upsilon_{0:j-1} \otimes \mathbb{I}_{\mathrm{in}_j}, \quad \forall j=1,\dots,k, \\
            &\Tr_{\mathrm{out}_0}[\Upsilon_{0:0}] = \mathbb{I}_{\mathrm{in}_0}.
        \end{split}
    \end{equation}
    Operationally, this condition ensures that instruments performed at future times cannot influence outcome statistics at earlier times, thereby enforcing causal ordering. Equivalently, these constraints can be expressed as a linear system of Pauli expectation values. Denoting
    \begin{multline}
        \braket{P_{\mathrm{in}_0}P_{\mathrm{out}_0}\cdots P_{\mathrm{in}_k}P_{\mathrm{out}_k}} \\
        = \Tr\!\left[(P_{\mathrm{in}_0}\otimes P_{\mathrm{out}_0} \otimes\cdots\otimes P_{\mathrm{in}_k}\otimes P_{\mathrm{out}_k})\Upsilon_{0:k}\right],
    \end{multline}
with Pauli strings \(P_j \in \mathbb{P}^{(n)}\) of cardinality \(|\mathbb{P}^{(n)}| = 4^n\), and non-identity Pauli strings \(Q_j \in \mathbb{P}^{(n)} \setminus \{\mathbb{I}\}\), \cref{eq:affine-constraint} can be rewritten as the following set of independent linear equations,
    \begin{equation}
        \begin{split}
            &\braket{P_{\mathrm{in}_0}P_{\mathrm{out}_0}\cdots P_{\mathrm{out}_{k-1}}Q_{\mathrm{in}_k}\mathbb{I}_{\mathrm{out}_k}} = 0, \\
            &\braket{P_{\mathrm{in}_0}P_{\mathrm{out}_0}\cdots P_{\mathrm{out}_{k-2}}Q_{\mathrm{in}_{k-1}}\mathbb{I}_{\mathrm{out}_{k-1}}\mathbb{I}_{\mathrm{in}_k}\mathbb{I}_{\mathrm{out}_k}} = 0, \\
            &\quad\vdots \\
            &\braket{Q_{\mathrm{in}_0}\mathbb{I}_{\mathrm{out}_0}\cdots \mathbb{I}_{\mathrm{in}_k}\mathbb{I}_{\mathrm{out}_k}} = 0.
        \end{split}
    \end{equation}
    The normalization condition 
    \begin{equation}
        \braket{\mathbb{I}_{\mathrm{in}_0}\mathbb{I}_{\mathrm{out}_0}\cdots \mathbb{I}_{\mathrm{in}_k}\mathbb{I}_{\mathrm{out}_k}} = \Tr{[\Upsilon_{0:k}]} = 2^{n(k+1)},
    \end{equation}
    provides an additional linear constraint, however it is sometimes convenient to work with the unnormalized Choi state. 
\end{enumerate}


Beyond these general validity conditions, structural features of a process tensor reflect the physical nature of the underlying dynamics. Notably, entanglement across temporal partitions of a Choi state \(\Upsilon_{0:k}\)---referred to as \emph{temporal entanglement}---signals that a process is \emph{genuinely quantum non-Markovian}. Concretely, a process exhibits genuinely quantum temporal correlations if and only if its Choi state cannot be written as a convex mixture of products, 
\begin{equation}\label{eq:process-separable}
\Upsilon_{0:k} = \sum_{\vec{\alpha} = (\alpha_0, \alpha_1, \dots, \alpha_k)} p_{\vec{\alpha}} \Upsilon_0^{\alpha_0} \otimes \Upsilon_1^{\alpha_1} \otimes \cdots \otimes \Upsilon_k^{\alpha_k},
\end{equation}
with \(p_\alpha \geq 0\) and each \(\Upsilon_j^\alpha\) is the Choi representation of a CPTP channel. Processes that can be written in this form are \emph{process-separable}, possessing temporal correlations that are classically simulable. Conversely, a process is fully \emph{Markovian}---exhibiting neither quantum nor classical temporal correlations---if and only if its Choi operator factorizes across all time steps,
\begin{equation}\label{eq:markovian-process}
    \Upsilon_{0:k}^{\textrm{Markov}} = \Upsilon_0 \otimes \Upsilon_1 \otimes \cdots \otimes \Upsilon_k,
\end{equation}
where each \(\Upsilon_j\) is CPTP. The dynamics of a Markovian process is therefore equivalently generated by a sequence of independent quantum channels.

The physical origin of these structures can be traced back to the system-environment picture. By a natural extension of the Stinespring dilation theorem, any \(k\)-slot process tensor can be realized as the joint multi-time dynamics of a system \(S\) and an environment \(E\), followed by a partial trace over the environment. This is represented by the equation,
\begin{multline}
    \mathcal{T}_{0:k}[\rho_S, \mathbf{A}_{0:k-1}] \\ 
    = \Tr_E\left[\mathcal{U}_k \circ \mathcal{A}_{k-1} \circ \mathcal{U}_{k-1} \circ \dots \circ \mathcal{A}_0 \circ \mathcal{U}_0(\sigma_E\otimes\rho_S) \right],
\end{multline}
where \(\mathcal{U}(\cdot) = U(\cdot)U^\dagger\) implements a joint system-environment unitary, \(\mathbf{A}_{0:k-1}\) is a sequence of CP maps acting only on \(S\), and we have assumed an initially separable system and environment state \(\sigma_E \otimes \rho_S\).

This dilation is more than a mathematical device; it depicts the fundamental physics governing open quantum dynamics. Non-Markovianity arises when the environment retains information about the system’s past and feeds it back into its future evolution. What is precisely considered an environment can be broadly interpreted as any inaccessible (or ignored) degrees of freedom that influences the system's evolution. In the QEC context, this could include spurious two-level systems, higher energy levels of transmon qubits, cosmic ray-induced phonons, or even other neighboring qubits when considering reduced subsystem dynamics.

Chiefly, the process tensor formalism abstracts away the explicit modeling of the environment. Instead, it centralizes the environment's \emph{influence} on the system with respect to all possible interventions by an experimenter.

In the next section, we exploit this structure to construct tensor network representations of process tensors, which serve as the foundation of our spatiotemporal Pauli process framework.

\section{Process tensors as tensor networks}\label{sec:pt-tn}
Quantum maps are tensors; when correlations between their indices are even modestly constrained, they often admit compact tensor network representations. A paradigmatic example is the matrix product state (MPS) description of quantum many-body states with area-law entanglement. The same principle extends to quantum processes: process tensors are high-rank objects whose temporal correlations are often structured, making tensor network representations both natural and efficient~\cite{keelingProcessTensorApproaches2025}. Accordingly, process tensors have been represented as matrix product operators (MPOs)~\cite{pollockNonMarkovianQuantumProcesses2018,cygorekUnderstandingUtilizingInner2025}, locally purified density operators (LPDOs)~\cite{whiteUnifyingNonMarkovianCharacterization2025}, and tree tensor networks (TTNs)~\cite{dowlingCapturingLongRangeMemory2024}.

In what follows, we construct tensor network representations of process tensors directly from their microscopic dynamics. We begin with a one-dimensional temporal MPO obtained from a Stinespring dilation, before generalizing to higher-dimensional spatiotemporal tensor networks for composite systems. Throughout, we employ abstract index notation supplemented by a graphical tensor network calculus. A table translating primitives of quantum information in Dirac notation, abstract indices, and graphical diagrams is provided in \cref{app:notations}.

\subsection{Stinespring-to-MPO construction}
We begin by constructing an MPO representation of a process tensor from its Stinespring dilation. A \(k\)-slot process tensor is fully specified by a sequence of joint system--environment unitaries \(\{U_j\}_{j=0}^k\) together with an initial environment state \(\sigma_E\). In general, one may start from a sequence of joint CPTP maps \(\{\mathcal{E}_j\}_{j=0}^k\), which, by Stinespring's theorem can always be dilated to unitaries.

For a unitary \(U\) acting on \(\mathcal{H}_E \otimes \mathcal{H}_S\) of dimensions \(d_E\) and \(d_S\), its Liouville (superoperator) form is \(\mathcal{U} = \overline{U} \otimes U\), where \(\overline{U} = U^*\) denotes elementwise complex conjugation. In index notation,
\begin{equation}
    \mathcal{U}^{e'_\out s'_\out e_\out s_\out}{}_{e'_\inp s'_\inp e_\inp s_\inp} = \overline{U}^{e'_\out s'_\out}{}_{e'_\inp s'_\inp} U^{e_\out s_\out}{}_{e_\inp s_\inp},
\end{equation}
where \(e_\inp, s_\inp\) (\(e_\out, s_\out\)) label environment and system input (output) indices, and primed indices mark dual `bra' spaces, with unprimed indices denoting vector ket spaces. To make the connection between the Stinespring dilation and tensor networks transparent, we introduce a specialized tensor representation derived from the Liouville form via a \emph{reorder--fuse--transpose} (RFT) operation:
\begin{enumerate}
    \item \textbf{Reorder} indices to group each ket index with its corresponding bra index.
    \item \textbf{Fuse} each bra--ket (primed--unprimed) pair into a single compound index.
    \item \textbf{Transpose} globally so that time flows from left to right, consistent with standard circuit diagrams.
\end{enumerate}
Explicitly, for a superoperator \(\mathcal{S}\),
\begin{multline}\label{eq:rft}
    \mathcal{S}^{b'_0 b'_1 \dots b_0 b_1}{}_{a'_0 a'_1 \dots a_0 a_1} 
    \xmapsto{\;\text{reorder}\;}
    \mathcal{S}^{b'_0 b_0 b'_1 b_1 \dots}{}_{a'_0 a_0 a'_1 a_1 \dots} \\
    \xmapsto{\;\text{fuse}\;}
    \mathcal{S}^{\beta_0 \beta_1 \dots}{}_{\alpha_0 \alpha_1 \dots} 
    \xmapsto{\;\text{transpose}\;}
    \mathbf{S}^{\alpha_0 \alpha_1 \dots}{}_{\beta_0 \beta_1 \dots},
\end{multline}
where \(\alpha_j = (a'_j a_j)\) and \(\beta_j = (b'_j b_j)\) are fused indices of dimension \(\dim(\alpha_j) = \dim(a_j)^2\). Latin letters denote basic indices and Greek letters fused indices. The RFT mapping simply re-expresses a superoperator as a tensor network-friendly object whose left--right orientation matches the temporal ordering of the process. This form is convenient for both graphical calculus and numerical implementations, where each fused index corresponds to the vectorized operator space of a particular physical subsystem.

Applying the RFT operation to a unitary superoperator \(\mathcal{U}_j\) yields a 4-legged tensor \(\mathbf{U}_{(j)}\),
\begin{equation}
    \mathcal{U}_j^{e'_{j+1} e_{j+1} b'_j b_j}{}_{e'_j e_j a'_j a_j} \xmapsto{\;\mathrm{RFT}\;} \mathbf{U}_{(j)}^{\mu_j \alpha_j}{}_{\mu_{j+1} \beta_j},
\end{equation}
with fused indices \(\mu_j = (e'_j e_j)\) for the environment and \(\alpha_j=(a'_j a_j)\), \(\beta_j=(b'_j b_j)\) for the system input and output, respectively. Their dimensions are \(\dim(\mu_j) = d_E^2\) and \(\dim(\alpha_j) = \dim(\beta_j) = d_S^2\). The mapping is graphically depicted as
\begin{equation}
    \valignbox{\includegraphics[scale=0.9]{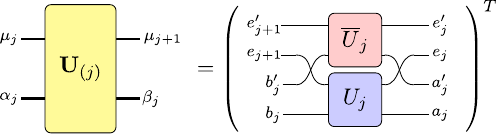}},
\end{equation}
where thick wires represent fused indices.

The initial environment state \(\dket{\sigma_E}\) is mapped similarly,
\begin{equation}
    \sigma^{e'_0e_0} 
    \xmapsto {\;RFT\;} 
    \pmb{\sigma}_{\mu_0} 
    \;\;\longleftrightarrow\;\;
    \valignbox{\includegraphics[scale=0.9]{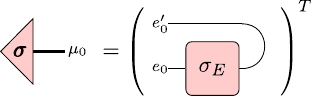}},
\end{equation}
where we have adopted the column-stacking convention for vectorization~\cite{woodTensorNetworksGraphical2015}. To represent the final partial trace over the environment, we include a trace tensor \(\mathbf{t}\), corresponding to the unnormalized Bell state \(\ket{\Phi^+} = \dket{\mathbb{I}}\),
\begin{equation}
    \mathbf{t}^{\mu_{k+1}} = \delta^{e'_{k+1}e_{k+1}}
    \,\;\longleftrightarrow\;\;
    \valignbox{\includegraphics[scale=0.9]{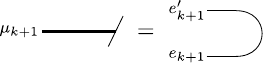}}\ .
\end{equation}

The full \(k\)-slot process tensor is then constructed as an MPO by contracting the environment indices \(\mu_j\) across all time. This yields a tensor \(\mathbf{T}\) with open system indices corresponding to the inputs (\(\alpha_j\)) and outputs (\(\beta_j\)) at each time \(j\). We formalize this in the following definition.

\begin{definition}[MPO representation of a process tensor]\label[definition]{def:pt-mpo}
    A \(k\)-slot process tensor \(\Upsilon_{0:k}\) admits a matrix product operator (MPO) representation,
    \begin{equation}\label{eq:pt-mpo}
        \mathbf{T}^{\vec{\alpha}}{}_{\vec{\beta}}
        =
        \sum_{\mu_0,\dots,\mu_{k+1}}
        \pmb{\sigma}_{\mu_0}
        \bigg(
        \prod_{j=0}^{k}
        \mathbf{U}_{(j)}^{\mu_j\alpha_j}{}_{\mu_{j+1}\beta_j}
        \bigg)
        \mathbf{t}^{\mu_{k+1}},
    \end{equation}
    where \(\vec{\alpha}=(\alpha_0,\dots,\alpha_k)\), and 
    \(\vec{\beta}=(\beta_0,\dots,\beta_k)\).
\end{definition}

In graphical form,
\begin{equation}
    \valignbox{\includegraphics[scale=0.8]{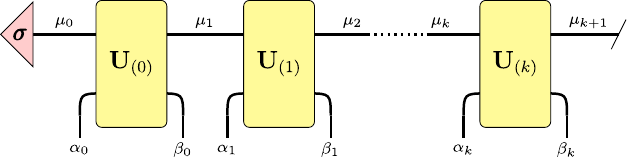}}.
\end{equation}

Here \(\pmb{\sigma}\), \(\mathbf{U}_{(j)}\), and \(\mathbf{t}\) are tensor representations of the initial environment state, the joint \(SE\) unitary at time \(j\), and the final environment trace, respectively. The environment indices \(\mu_j\) are the MPO's virtual bonds of dimension \(D = \dim{(\mu_j)} \le d_E^2\), corresponding to the dimension of the environment's Liouville space.

To recover the standard Choi form, we simply reorder and reshape the fused indices as
\begin{equation}\label{eq:pt-tensor-to-choi}
    \mathbf{T}^{\alpha_0,\dots,\alpha_k}{}_{\beta_0,\dots,\beta_k} \mapsto \mathbf{T}_{\mathrm{Choi}}^{a_0b_0\dots a_kb_k}{}_{a'_0b'_0\dots a'_kb'_k} \longleftrightarrow \Upsilon_{0:k}.
\end{equation}

In the tensor network picture, the action of \(\mathbf{T}\) on an initial state and sequence of instruments is realized as a network contraction between the `top comb' process tensor and the `bottom comb' sequence of instruments. The initial system state \(\dket{\rho_\mathrm{in}}\) and each instrument superoperator \(\mathcal{A}_j\) are mapped into their corresponding tensors \(\pmb{\rho}^\mathrm{in}\) and \(\mathbf{A}_{(j)}\) via the RFT operation. The final output state \(\pmb{\rho}^\mathrm{out}\) is then given by
\begin{equation}
    \pmb{\rho}^\mathrm{out}_{\alpha_{k+1}} = 
    \pmb{\sigma}_{\mu_0}\pmb{\rho}^\mathrm{in}_{\alpha_0}
    \bigg(
        \prod_{j=0}^k
        \mathbf{U}_{(j)}^{\mu_j\alpha_j}{}_{\mu_{j+1}\beta_j}
        \mathbf{A}_{(j)}^{\beta_j}{}_{\alpha_{j+1}}
    \bigg)
    \mathbf{t}^{\mu_{k+1}},
\end{equation}
where repeated indices imply summation (contraction). Graphically, the output state \(\rho_\mathrm{out}\) is described by
\begin{widetext}
    \begin{equation}
        \valignbox{\includegraphics[scale=0.9]{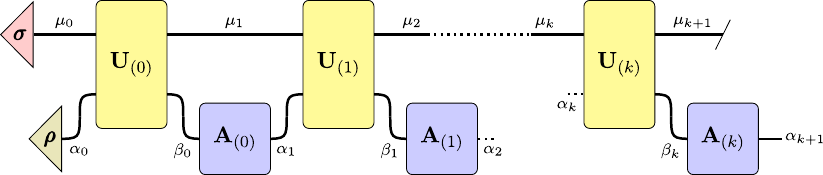}}.
    \end{equation}
\end{widetext}

This contraction naturally generalizes to cases where the instruments are themselves classically or `quantumly' correlated in time, in which case the product of local instrument tensors is replaced by a more general tensor network. Having established a one-dimensional temporal MPO representation of multi-time processes, we now lift this construction to composite systems with additional spatial structure.

\subsection{Higher-dimensional spatiotemporal tensor networks}
So far we have considered only the one-dimensional temporal structure of process tensors, where each tensor carries dense system indices of size \(d^2_S\). For composite systems---such as spin chains---these indices can be further factorized along spatial degrees of freedom, promoting the MPO to a \((1+1)D\) tensor network with time along one axis and space along the other. Concretely, we decompose the system legs of each temporal tensor into local legs 
associated with individual subsystems. For a unitary tensor \(\mathbf{U}_{(j)}\) acting on an \(n\)-partite system, we can perform a series of singular value decompositions (SVDs) along subsystem indices, yielding
\begin{equation}
\begin{aligned}
    \mathbf{U}_{(j)}^{\mu_j\alpha_j}{}_{\mu_{j+1}\beta_j} 
    &=
    \sum_{\nu_j^0,\dots,\nu_j^{n-2}}
    \mathbf{u}_{(0,j)}^{\mu_j^0\alpha_j^0}
    {}_{\beta_j^0\mu_{j+1}^0\nu_j^0}
    \\
    &\quad \times
    \mathbf{u}_{(1,j)}^{\nu_j^0\mu_j^1\alpha_j^1}
    {}_{\beta_j^1\mu_{j+1}^1\nu_j^1}
    \\
    &\qquad\vdots 
    \\
    &\quad\times
    \mathbf{u}_{(n-1,j)}^{\nu_j^{n-2}\mu_j^{n-1}\alpha_j^{n-1}}
    {}_{\beta_j^{n-1}\mu_{j+1}^{n-1}}.
\end{aligned}
\end{equation}
Here the superscript \(i=0,\dots,n-1\) labels subsystems (spatial coordinates), and \(\nu^i_j\) are newly introduced bond indices that link neighboring subsystems at each time \(t_j\). The environment-induced temporal bonds \(\mu^i_j\) remain on each subsystem line.

Performing this decomposition for all \(j\) and similarly for the initial environment tensor \(\pmb{\sigma}\), we obtain a \((1+1)D\) tensor network representation of the process tensor,
\begin{equation}
\begin{aligned}
    \mathbf{T}^{\{\alpha_j^i\}}{}_{\{\beta_j^i\}}
    &=
    \sum_{\{\mu_j^i\}}
    \sum_{\{\lambda^i\}}
    \sum_{\{\nu_j^i\}}
    \Biggl[
        \prod_{i=0}^{n-1}
        \pmb{\sigma}_{(i)}^{\lambda^{i-1}}{}_{\mu_0^i\lambda^i}
    \Biggr]
    \\
    &\quad{}\times
    \Biggl[
        \prod_{j=0}^{k}
        \prod_{i=0}^{n-1}
        \mathbf{u}_{(i,j)}^{\nu^{i-1}_j\mu_j^i\alpha_j^i}
        {}_{\beta_j^i\mu_{j+1}^i\nu_j^i}
    \Biggr]
    \\
    &\quad{}\times
    \Biggl[
        \prod_{i=0}^{n-1}
        \mathbf{t}_{(i)}^{\mu_{k+1}^i}
    \Biggr],
\end{aligned}
\end{equation}
with spatial boundary conditions such that left indices \(\lambda^{-1}, \nu^{-1}_j\) and right indices \(\lambda^{n-1}, \nu^{n-1}_j\) are absent. In this representation, horizontal bonds \(\mu_j^i\) carry temporal correlations (environmental memory), while vertical bonds \(\nu_j^i\) encode spatial correlations between subsystems at each time step. The spatial bond dimensions \(D_s = \dim(\nu_j^i)\) are set by the SVD truncation, while the temporal bond dimensions remain bounded by the environment size, \(D_t = \dim(\mu_j^i) \le d_E^2\). \Cref{fig:tensor-network-2d} displays a graphical representation of this \((1+1)D\) tensor network.

Unlike the one-dimensional temporal MPO representation, \((1+1)D\) tensor networks are generally not both exactly and efficiently contractible. In practice one must resort to approximate contraction schemes, whose cost is controlled by the entanglement across the chosen contraction direction. For 
spatiotemporal processes, however, the causal structure along the time dimension imposes additional constraints on the allowed correlations, which can often be exploited by tailored contraction algorithms~\cite{jorgensenExploitingCausalTensor2019,keelingProcessTensorApproaches2025}.

\begin{figure}[t]
    \centering
    \includegraphics[width=\columnwidth]{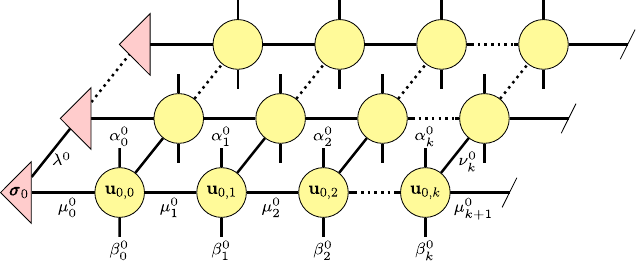}
    \caption{\textbf{Spatiotemporal tensor network form of a process tensor.} Horizontal bonds \(\mu_j^i\) encode temporal correlations, bonds \(\nu_j^i\) carry spatial correlations, and open indices \(\alpha_j^i\), \(\beta_j^i\) correspond to system inputs and outputs at each time \(j \in [0,k]\) and subsystem \(i \in [0,n)\).}
    \label{fig:tensor-network-2d}
\end{figure}

The above construction also extends straightforwardly to higher spatial dimensions by allowing each time slice to be represented by a higher-dimensional tensor network---for example, a two-dimensional projected entangled pair operator (PEPO) for a planar system, yielding a \((2+1)D\) spatiotemporal network. Moreover, while we derived the higher-dimensional network by factorizing the temporal MPO into spatial components, one may instead construct a tensor network reflecting the microscopic system--environment dynamics, especially when the interactions are local \cite{whiteUnifyingNonMarkovianCharacterization2025,kobayashiTensornetworkDecodersProcess2024}. This flexibility allows for efficient constructions tailored to the specific spatiotemporal structure of the underlying process.

\section{Spatiotemporal Pauli processes}\label{sec:spp}
We now introduce the \emph{spatiotemporal Pauli process} (SPP): a stochastic process over Pauli operator trajectories across space and time. While an SPP can be defined phenomenologically as a standalone correlated Pauli noise model, it arises operationally by applying the \emph{multi-time Pauli twirl} to a general quantum process tensor. This maps an arbitrary, possibly non-Markovian, process to a \emph{process-separable} Pauli comb. In tensor network language the twirl is implemented by fixed local contractions on the system legs at each time step, leaving the virtual environment bonds unchanged. The resulting classical network encodes trajectory weights, with temporal bond dimensions bounded by the corresponding process tensor bonds and, for finite environments, by the environment Liouville dimension \(d_E^2\). These results justify SPPs as a natural tensor-network-compatible description of correlated circuit-level Pauli noise, and provide the technical foundation for the explicit noise models and QEC studies developed in later sections.

\subsection{Multi-time Pauli twirling and SPPs}
We begin by extending the standard Pauli twirling operation for quantum channels to multi-time processes.

Let \(\mathbb{P}^{(n)}\) denote the \(n\)-qubit Pauli group modulo global phase, and let \(\mathbb{P}^{(n,k)} \coloneq (\mathbb{P}^{(n)})^{k+1}\) be the set of \((k+1)\)-length Pauli trajectories. Consider a \(k\)-slot process tensor with Choi operator \(\Upsilon_{0:k} \in \mathcal{B}(\bigotimes_{j=0}^k \mathcal{H}_{\mathrm{in}_j} \otimes \mathcal{H}_{\mathrm{out}_j})\). For each trajectory \(\mathcal{P}=(P_0,\ldots,P_k)\in\mathbb{P}^{(n,k)}\), define the associated Pauli product operator \(\Omega_{\mathcal{P}} \coloneq \bigotimes_{j=0}^k (P_j \otimes P_j)\).

\begin{definition}[Multi-time Pauli twirl]\label[definition]{def:multi-time-twirl} The multi-time Pauli twirl \(\mathcal{T}_P^{(k)}\) is the CPTP projector \begin{equation}\label{eq:twirl-upsilon} \mathcal{T}_P^{(k)}(\Upsilon_{0:k}) = \frac{1}{|\mathbb{P}^{(n,k)}|} \sum_{\mathcal{P}\in\mathbb{P}^{(n,k)}} \Omega_{\mathcal{P}}\, \Upsilon_{0:k}\, \Omega_{\mathcal{P}} . \end{equation} \end{definition}

Intuitively, \(\mathcal{T}_P^{(k)}\) independently twirls each time step's input-output pair on \(\mathcal{H}_{\inp_j} \otimes \mathcal{H}_{\out_j}\), preserving the causal structure of the process tensor. In the language of higher-order quantum maps~\cite{tarantoHigherOrderQuantumOperations2025}, the multi-time Pauli twirl constitutes a time-local \emph{superprocess}, since it maps physical processes to physical processes. Operationally, it corresponds to Pauli-frame randomization protocols~\cite{figueroa-romeroOperationalMarkovianizationRandomized2024}; standard randomized compiling provides one such implementation.

We now summarize the key properties of this mapping.
\begin{proposition}[Properties]
    The map \(\mathcal{T}_P^{(k)}\) satisfies the following properties:
    \begin{enumerate}
        \item \textbf{Validity}. For any CPTP process tensor \(\Upsilon_{0:k}\), its twirl \(\mathcal{T}_P^{(k)}(\Upsilon_{0:k})\) is also a CPTP process tensor.
        \item \textbf{Projector}. \(\mathcal{T}_P^{(k)}\) is idempotent, \(\mathcal{T}_P^{(k)} \circ \mathcal{T}_P^{(k)} = \mathcal{T}_P^{(k)}\).
        \item \textbf{Causality}. The affine (causal) constraints are preserved under twirling.
    \end{enumerate}
\end{proposition}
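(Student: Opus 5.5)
The argument rests on two elementary facts about Pauli strings. First, each \(P\in\mathbb{P}^{(n)}\) is Hermitian, unitary and squares to the identity. Second, conjugation by Pauli strings preserves traces and maps the identity to itself. From these, \(\Omega_{\mathcal{P}}=\bigotimes_j(P_j\otimes P_j)\) is a Hermitian unitary, \(\Omega_{\mathcal{P}}^2=\mathbb{I}\), and the set \(\{\Omega_{\mathcal{P}}\}\) is closed under multiplication up to global phases. Global phases cancel in the sandwich \(\Omega\,\cdot\,\Omega\). I would prove the three items in the order causality, validity, projector, since validity then follows quickly.

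\textbf{Causality.} The key observation is that the twirl factorizes into commuting time-local pieces, \(\mathcal{T}_P^{(k)}=\bigotimes_{j=0}^k\mathcal{T}_P^{[j]}\). Here \(\mathcal{T}_P^{[j]}(X)=4^{-n}\sum_{P}(P\otimes P)_{\inp_j\out_j}X(P\otimes P)_{\inp_j\out_j}\) acts only on \(\mathcal{H}_{\inp_j}\otimes\mathcal{H}_{\out_j}\). This holds because the sum over trajectories is a product of independent sums over each \(P_j\), and \(|\mathbb{P}^{(n,k)}|=4^{n(k+1)}\).

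I would show inductively that \(\Upsilon'_{0:j}:=\mathcal{T}_P^{(j)}(\Upsilon_{0:j})\) satisfies \cref{eq:affine-constraint}. Taking \(\Tr_{\out_j}\) of \(\mathcal{T}^{[j]}(X)\), the factor \(P_{\out_j}\) cycles out of the partial trace, leaving \(4^{-n}\sum_P P_{\inp_j}\Tr_{\out_j}[X]P_{\inp_j}\). Because the time-\(j\) twirl commutes with operations on other slots, we get \(\Tr_{\out_j}\mathcal{T}_P^{(j)}(\Upsilon_{0:j})=4^{-n}\sum_P P_{\inp_j}\,\mathcal{T}_P^{(j-1)}(\Tr_{\out_j}\Upsilon_{0:j})\,P_{\inp_j}\). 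Substituting \(\Upsilon_{0:j-1}\otimes\mathbb{I}_{\inp_j}\), the identity on \(\inp_j\) is invariant under Pauli conjugation, so the result is \(\Upsilon'_{0:j-1}\otimes\mathbb{I}_{\inp_j}\). The base case \(\Tr_{\out_0}\Upsilon'_{0:0}=4^{-n}\sum_P P\,\mathbb{I}\,P=\mathbb{I}_{\inp_0}\) is identical. Equivalently, in the Pauli-expectation form, twirling multiplies each coefficient \(\braket{\cdots Q_{\inp_j}R_{\out_j}\cdots}\) by \(1\) or \(0\). A coefficient survives exactly when \(Q\otimes R\) commutes with every \(P\otimes P\), i.e.\ when \(Q=R\) up to phase. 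Coefficients required to vanish therefore stay zero, and the normalization coefficient is untouched. I would record this as a cross-check.

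\textbf{Validity.} Positivity follows because each term \(\Omega_{\mathcal{P}}\Upsilon\Omega_{\mathcal{P}}\) is a unitary conjugation of a positive operator, and a convex combination of positive operators is positive. Together with causality (and multilinearity, which is automatic for any operator viewed through \cref{eq:pt-choi-rep}), this gives a valid process tensor. The same argument applied with ancillas, \(\Omega_{\mathcal{P}}\otimes\mathbb{I}\), shows the twirl is completely positive. Trace preservation follows from unitarity of \(\Omega_{\mathcal{P}}\), which justifies calling it a CPTP map.

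\textbf{Projector.} Composing the twirl with itself gives \(|\mathbb{P}^{(n,k)}|^{-2}\sum_{\mathcal{P},\mathcal{Q}}\Omega_{\mathcal{Q}}\Omega_{\mathcal{P}}\Upsilon\Omega_{\mathcal{P}}\Omega_{\mathcal{Q}}\). Write \(\Omega_{\mathcal{Q}}\Omega_{\mathcal{P}}=\Omega_{\mathcal{QP}}\) up to phase, and note that this phase cancels against the one in \(\Omega_{\mathcal{P}}\Omega_{\mathcal{Q}}\) because both operators are Hermitian. For fixed \(\mathcal{Q}\), the map \(\mathcal{P}\mapsto\mathcal{QP}\) is a bijection of \(\mathbb{P}^{(n,k)}\) modulo phase. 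The double sum therefore collapses to \(|\mathbb{P}^{(n,k)}|\) copies of the single twirl, giving idempotence. The only delicate point is this phase bookkeeping. I would handle it by noting that \(\Omega_{\mathcal{Q}}\Omega_{\mathcal{P}}=c\,\Omega_{\mathcal{QP}}\) with \(|c|=1\), and that conjugation by \(c\,\Omega\) equals conjugation by \(\Omega\).
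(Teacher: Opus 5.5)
Your proposal is correct and follows the same route as the paper, which gives only a one-line justification: validity and idempotence are inherited from the ordinary single-channel Pauli twirl (conjugation by unitaries, and closure of the Pauli group up to signs), and causality follows from unitality of Pauli conjugation. Your version fills in the step that sketch leaves implicit for causality, namely that the Pauli acting on $\mathrm{out}_j$ cycles out of $\Tr_{\mathrm{out}_j}$, so the constraint reduces to the unital local twirl acting on $\mathbb{I}_{\mathrm{in}_j}$; your Pauli-coefficient cross-check confirms the same conclusion.
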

The first two properties generalize those of standard Pauli twirling for quantum channels, while the third follows from the unitality of Pauli operators (\(\mathcal{T}_P^{(k)}[\mathbb{I}] = \mathbb{I}\) on each local pair). We now demonstrate that this projection removes all quantum temporal correlations, yielding a process-separable object corresponding to a probability distribution over Pauli trajectories.

Let \(\mathcal{P} = (P_0, P_1, \ldots, P_k)\) denote a \emph{spatiotemporal Pauli trajectory}, where each \(P_j \in \mathbb{P}^{(n)}\) is an \(n\)-qubit Pauli operator at time step \(t_j\), and \(\Pi_{P_j} = \dketbra{P_j}{P_j}\) its corresponding Pauli Choi operator. Let \(\mathbb{P}^{(n,k)}\) denote the set of all such spatiotemporal Pauli trajectories over \(k+1\) time steps, and let \(\Pr{(\mathcal{P})}\) denote the trajectory weights induced by the twirl, defined explicitly in \cref{eq:spp-probabilities}.

\begin{theorem}[Twirling yields a process-separable Pauli process]\label{thm:pt-to-spp}
    For any process tensor \(\Upsilon_{0:k}\), its twirl
    \begin{equation}
        \Upsilon_{0:k}^{\mathcal{T}_P} \coloneq \mathcal{T}_P^{(k)}(\Upsilon_{0:k})
    \end{equation}
    admits a process-separable decomposition
    \begin{equation}
        \Upsilon_{0:k}^{\mathcal{T}_P} = \sum_{\mathcal{P}\in\mathbb{P}^{(n,k)}}\Pr{(\mathcal{P})}\Upsilon_\mathcal{P}, \quad \Upsilon_\mathcal{P} = \bigotimes_{j=0}^k\Pi_{P_j}.
    \end{equation}
    Probabilities \(\Pr{(\mathcal{P})}\) form a valid probability distribution given by
    \begin{equation}\label{eq:spp-probabilities}
        \begin{gathered}
            \Pr{(\mathcal{P})} = \frac{w(\mathcal{P})}{\mathcal{N}}, 
            \quad 
            \mathcal{N} \coloneq \sum_{\mathcal{P}\in\mathbb{P}^{(n,k)}} w(\mathcal{P}),
            \\
            w(\mathcal{P}) \coloneq \Tr\bigg[\bigg(\bigotimes_{j=0}^k\Pi_{P_j}\bigg)\Upsilon_{0:k}\bigg] \geq 0.
        \end{gathered}
    \end{equation}
\end{theorem}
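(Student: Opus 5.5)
The proof reduces to the single-time statement. The twirl in \cref{def:multi-time-twirl} factorizes over time steps: \(\Omega_{\mathcal{P}}=\bigotimes_j (P_j\otimes P_j)\), and the uniform average over \(\mathbb{P}^{(n,k)}=(\mathbb{P}^{(n)})^{k+1}\) is a product of independent uniform averages. Hence
\[
\mathcal{T}_P^{(k)}=\bigotimes_{j=0}^k \mathcal{T}_P^{(j)},
\]
where \(\mathcal{T}_P^{(j)}(X)=4^{-n}\sum_{P}(P\otimes P)X(P\otimes P)\) acts on \(\mathcal{H}_{\mathrm{in}_j}\otimes\mathcal{H}_{\mathrm{out}_j}\). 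The key step is to identify this local map as the pinching (dephasing) onto the Bell-type basis \(\{\dket{Q}\}_{Q\in\mathbb{P}^{(n)}}\), whose elements are the vectorized Paulis.

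To show this, I use \((A\otimes B)\dket{X}=\dket{BXA^T}\), with the paper's column-stacking convention. Since \(P^T=\pm P\), this gives
\[
(P\otimes P)\dket{Q}=\pm\dket{PQP}=\epsilon_{P,Q}\dket{Q},\qquad \epsilon_{P,Q}\in\{\pm1\},
\]
where the sign records commutation versus anticommutation (and the transpose sign). The vectors \(\dket{Q}\) are orthogonal with norm squared \(2^n\), so they form an eigenbasis of every \(P\otimes P\). For \(Q\neq Q'\) we have \(\epsilon_{P,Q}\epsilon_{P,Q'}=\epsilon_{P,QQ'}\) up to a fixed sign. The characters \(P\mapsto\epsilon_{P,R}\) for \(R\neq\mathbb{I}\) are nontrivial characters of the Pauli group modulo phase, so they average to zero. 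The transpose sign needs care here; I would check that it is the same fixed factor for all \(P\), or absorb it by relabeling. Therefore
\[
\mathcal{T}_P^{(j)}(X)=\sum_Q \frac{\dbra{Q}X\dket{Q}}{4^n}\,\dketbra{Q}{Q},
\]
that is, off-diagonal blocks in this basis are killed and diagonal ones are kept.

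Tensoring over \(j\), the twirled Choi operator becomes
\[
\sum_{\mathcal{P}} c_{\mathcal{P}}\bigotimes_j \Pi_{P_j},\qquad c_{\mathcal{P}}\propto \Tr\Big[\Big(\textstyle\bigotimes_j\Pi_{P_j}\Big)\Upsilon_{0:k}\Big]=w(\mathcal{P}).
\]
Each \(\Pi_{P_j}\) is the Choi operator of the unitary channel \(P_j(\cdot)P_j\), so the decomposition has the process-separable form of \cref{eq:process-separable} once the coefficients are nonnegative and normalized. Nonnegativity follows because \(w(\mathcal{P})=\mathrm{Tr}[\Pi\,\Upsilon]\) with both \(\Pi\ge0\) and \(\Upsilon_{0:k}\ge0\), the latter by the positivity condition. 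Normalization: the trace preservation of the twirl gives \(\sum_{\mathcal{P}} c_{\mathcal{P}}\Tr[\Upsilon_{\mathcal{P}}]=\Tr\Upsilon_{0:k}\), which fixes the overall constant as \(\Pr(\mathcal{P})=w(\mathcal{P})/\mathcal{N}\), up to absorbing the factors \(2^{n(k+1)}\) from \(\Tr\Pi_{P_j}=2^n\) and the Choi normalization into \(\mathcal{N}\). I would state explicitly that the normalized \(\Upsilon_\mathcal{P}\) convention is what makes this match \cref{eq:spp-probabilities}.

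I also need \(\mathcal{N}>0\), since otherwise the distribution is undefined. Because the \(\Pi_{P_j}\) sum to \(2^{n}\) times the identity on each pair, \(\mathcal{N}\propto\Tr\Upsilon_{0:k}>0\) for any valid process tensor.

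The main obstacle is bookkeeping rather than conceptual. I have to get the signs \(\epsilon_{P,Q}\) right under the transpose in the Choi convention and verify the character-orthogonality step that kills off-diagonal terms. I also have to reconcile the normalization constants (\(4^{-n(k+1)}\), \(\Tr\Pi=2^n\), \(\Tr\Upsilon=2^{n(k+1)}\)) so that the stated \(\Pr(\mathcal{P})\) and \(\Upsilon_{\mathcal{P}}\) reproduce the twirl exactly, not just up to scale. I would handle the sign issue first by checking the case \(n=1\), \(k=0\), where the result reduces to the known fact that the twirled \(\chi\)-matrix is diagonal.
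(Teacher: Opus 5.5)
Your proof is correct and uses the same two ingredients as the paper's. The first is that the Bell product basis $\{\bigotimes_j\dket{P_j}\}$ simultaneously diagonalizes every $\Omega_{\mathcal{P}}$. The second is that a nontrivial Pauli commutation character averages to zero. You assemble them in the opposite order, however. The paper first expands $\Upsilon_{0:k}$ in the operator basis $\bigotimes_j(P_{\mu_j}\otimes P_{\nu_j})$ and uses the conjugation-averaging identity to discard all terms with $\vec{\mu}\neq\vec{\nu}$. Only then does it pass to the Bell basis. It obtains the eigenvalues as a Walsh--Hadamard transform of the surviving $\tilde{\chi}_{\vec{\xi}}$, and their nonnegativity from positivity of the twirled operator. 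You instead apply the character average directly to the matrix units $\dketbra{Q}{Q'}$, which exhibits each local twirl as the pinching onto the vectorized Pauli basis.

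This gives you at once the coefficients $\Tr[(\bigotimes_j\Pi_{P_j})\Upsilon_{0:k}]/4^{n(k+1)}$ of the \emph{untwirled} process. These are exactly the $w(\mathcal{P})$ of the statement, and they are nonnegative simply because $\Upsilon_{0:k}\ge 0$. The paper leaves the identification of its $\lambda_{\vec{\alpha}}$ with $w(\mathcal{P})$ implicit. In exchange, the paper's route records the explicit relation between $\tilde{\chi}$ and $\Pr$.

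Both obstacles you flag resolve cleanly.

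\emph{Transpose sign.} Write $P^T=s_PP$ and $PQP=c(P,Q)\,Q$. The factor multiplying $\dketbra{Q}{Q'}$ is then $s_P^2\,c(P,Q)\,c(P,Q')=c(P,QQ')$, so the transpose sign cancels identically. Your phrasing ``$\epsilon_{P,Q}\epsilon_{P,Q'}=\epsilon_{P,QQ'}$ up to a fixed sign'' is not accurate, since the discrepancy is $s_P$, which depends on $P$. You never actually need $\epsilon_{P,QQ'}$, though.

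\emph{Normalization.} Since $\sum_Q\Pi_Q=2^n\mathbb{I}$, you get $\mathcal{N}=2^{n(k+1)}\Tr\Upsilon_{0:k}$. The causal constraints force $\Tr\Upsilon_{0:k}=2^{n(k+1)}$, so $\mathcal{N}=4^{n(k+1)}$, which equals the pinching prefactor. The identity therefore holds exactly with the unnormalized $\Pi_{P_j}$ as written. No switch to a trace-one $\Upsilon_{\mathcal{P}}$ is needed; that choice would be consistent only if $\Upsilon_{0:k}$ were trace-normalized as well.
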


The resulting twirled process tensor \(\Upsilon_{0:k}^{\mathcal{T}_P}\) is \emph{process-separable} in the sense of \cref{eq:process-separable}, i.e., it is a convex mixture of temporally factorized Pauli Choi operators. Equivalently, the twirl removes all genuinely quantum temporal correlations while allowing arbitrary classical temporal correlations through the joint distribution \(\Pr{(\mathcal{P})}\). A detailed proof is provided in \cref{app:proof-pt-to-spp}.

We now formalize this class of process-separable objects under the name \emph{spatiotemporal Pauli process}.
\begin{definition}[Spatiotemporal Pauli process]\label[definition]{def:def-spp}
A \emph{spatiotemporal Pauli process} (SPP) is a process tensor with Choi representation
\begin{equation}
    \Upsilon_\mathrm{SPP} = 
    \sum_{\mathcal{P}\in\mathbb{P}^{(n,k)}}
        \Pr_\mathrm{SPP}(\mathcal{P})
        \Upsilon_\mathcal{P},
        \quad
        \Upsilon_\mathcal{P} = \bigotimes_{j=0}^k\Pi_{P_j},
\end{equation}
where \(\Pr_\mathrm{SPP}(\mathcal{P})\) is a probability distribution with \(\Pr_\mathrm{SPP}(\mathcal{P}) \geq 0\) and \(\sum_{\mathcal{P}} \Pr_\mathrm{SPP}(\mathcal{P}) = 1\).
\end{definition}
In the special case of a fully temporally independent SPP---containing neither classical nor quantum temporal correlations---the trajectory probabilities factorize as
\begin{equation}
  \Pr_\mathrm{SPP}(P_0,\dots,P_k) = \prod_{j=0}^k \Pr_j(P_j).
\end{equation}
As a result, the Choi operator \(\Upsilon_\mathrm{SPP}^{\mathrm{Markov}}\) factorizes as in \cref{eq:markovian-process}, corresponding to a sequence of temporally independent Pauli channels.

The multi-time Pauli twirl should be interpreted operationally. It does not assert that the microscopic system-environment dynamics are physically transformed into a process-separable Pauli process. Rather, under the assumed randomization protocol (e.g., randomized compiling or Pauli-frame randomization), it yields an effective description that reproduces the Pauli-basis statistics of the original dynamics. This is particularly potent for applications such as in QEC. For example, in the context of a QEC circuit, \(\Upsilon_{\mathrm{SPP}}\) can be interpreted as the effective, generally correlated, circuit-level Pauli noise model obtained by applying the multi-time twirl to the original noisy circuit.

The reduction of non-Markovian dynamics to classical temporal correlations under Pauli twirling has also been recently identified in different formalisms. Liu et al.~\cite{liuNonMarkovianNoiseSuppression2024} derived an analogous result for single-slot superchannels by applying standard channel twirling to the associated ``Choi channel'' representation. A closely related statement was also obtained and demonstrated experimentally in the context of a noise mitigation protocol~\cite{liuRealizingUniversalNonMarkovian2025}. In contrast, our derivation establishes this reduction directly within the process tensor framework and rigorously generalizes it to arbitrary multi-time processes.

While SPPs can be defined directly via~\cref{def:def-spp} without reference to an underlying quantum process, our formulation using the multi-time twirl provides a direct bridge between the physically grounded system--environment dynamics and the emergent stochastic Pauli process. This connection endows SPPs with the same rich internal structure as process tensors, enabling a straightforward and efficient tensor network construction. In particular, we show that \(\mathcal{T}_P^{(k)}\) is implemented by fixed local contractions on the physical legs of the process tensor.

\subsection{Local tensor network implementations}
We now show that the multi-time Pauli twirl admits an exact local implementation on a process tensor network. Concretely, \(\mathcal{T}_P^{(k)}\) is realized by inserting a fixed Pauli tensor on each system input-output leg at every time step. This produces (i) a twirled process tensor MPO in the operator picture, and (ii) a classical tensor encoding the joint weights (and hence probabilities) of Pauli trajectories. Centrally, this transformation acts only on the open physical indices and leaves inner (environment) bonds unchanged.

Starting from the Pauli superoperator \(P^* \otimes P\), we define the Pauli tensor \(\mathbf{P}\) via the reorder--fuse--transpose (RFT) operation introduced in \cref{sec:pt-tn} as
\begin{multline}
  \overline{P}_{(x)}^{b'}{}_{a'} P_{(x)}^b{}_a
  \xmapsto{\;\mathrm{RFT}\;}
  \mathbf{P}^\alpha{}_{\beta x}
  \\
  \;\;\longleftrightarrow\;\;
  \valignbox{\includegraphics[scale=0.9]{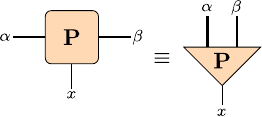}},
\end{multline}
where the index \(x\) of \(\dim{x} = 4^n\) enumerates the \(n\)-qubit Pauli operators.

\begin{proposition}[Local tensor network implementation of the multi-time Pauli twirl]\label[proposition]{prop:local-pt-twirl}
  Let \(\mathbf{T}\) be a tensor network (MPO) representation of a \(k\)-slot process tensor as in \cref{eq:pt-mpo}. Then the multi-time Pauli twirl \(\widetilde{\mathbf{T}} \coloneq \mathcal{T}_P^{(k)}(\mathbf{T})\) is obtained by contracting a pair of the Pauli tensor \(\mathbf{P}\) with each open system index pair (\(\alpha_j, \beta_j\)) at every time step as
  \begin{equation}\label{eq:pt-twirl-tensor}
    \widetilde{\mathbf{T}}^{\alpha_0\dots \alpha_k}{}_{\beta_0 \dots \beta_k} = 
    \mathbf{T}^{\alpha'_0 \dots \alpha'_k}{}_{\beta'_0 \dots \beta'_k}
    \Bigg(
    \prod_{j=0}^k
      \mathbf{P}^{\alpha'_j\alpha_j}{}_{x_j}
      \mathbf{P}^{x_j}{}_{\beta'_j\beta_j}
    \Bigg),
  \end{equation}
  where primed indices are summed over.
\end{proposition}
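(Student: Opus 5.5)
The proof reduces to two observations. First, the twirl factorizes over time slots. Second, on a single slot, conjugating the Choi operator by \(P_j\otimes P_j\) is the same as conjugating the channel by \(P_j\) before and after. In the RFT representation this second operation is a contraction of \(\mathbf{P}\) on both the \(\alpha_j\) and \(\beta_j\) legs.

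\textbf{Step 1: factorize the twirl.} Since \(\Omega_{\mathcal{P}}=\bigotimes_j(P_j\otimes P_j)\) and \(\mathbb{P}^{(n,k)}\) is a Cartesian product, the average in \cref{eq:twirl-upsilon} splits into a product of commuting single-slot averages:
\[
\mathcal{T}_P^{(k)}=\bigcirc_{j=0}^{k}\mathcal{T}_{P,j},\qquad
\mathcal{T}_{P,j}(X)=\frac{1}{4^n}\sum_{P_j}(P_j\otimes P_j)_{\inp_j\out_j}\,X\,(P_j\otimes P_j)_{\inp_j\out_j}.
\]
Each \(\mathcal{T}_{P,j}\) acts only on the tensor factors \(\mathcal{H}_{\inp_j}\otimes\mathcal{H}_{\out_j}\). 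Under \cref{eq:pt-tensor-to-choi}, these factors are exactly the fused indices \((\alpha_j,\beta_j)\) of \(\mathbf{T}\).

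\textbf{Step 2: verify one slot.} Fix a slot \(j\). Conjugating the Choi operator of a map \(\mathcal{E}\) by \(P\otimes P\) gives, up to a transpose, the Choi operator of \(\mathcal{P}\circ\mathcal{E}\circ\mathcal{P}\), where \(\mathcal{P}(\cdot)=P(\cdot)P\). The transpose is harmless because \(P^T=\pm P\) and the sign cancels between the two sides of the conjugation. In Liouville form this composite is \((\overline{P}\otimes P)\,\mathcal{E}\,(\overline{P}\otimes P)\).

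Applying the RFT to this composite precomposes the input leg \(\alpha_j\) with the tensor \(\mathbf{P}^{\alpha'_j\alpha_j}{}_{x_j}\) at fixed Pauli label \(x_j\), and postcomposes the output leg \(\beta_j\) with \(\mathbf{P}^{x_j}{}_{\beta'_j\beta_j}\). Summing over \(x_j\) and including the \(1/4^n\) normalization reproduces \(\mathcal{T}_{P,j}\). The \(1/4^n\) factor can be absorbed into one of the copies of \(\mathbf{P}\), which has the effect of turning \(x_j\) into a uniform classical copy index.

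Concretely, I would write out the indices:
\[
\mathbf{P}^{\alpha'\alpha}{}_{x}=\overline{P}_{(x)}^{a''}{}_{a'''}P_{(x)}^{a}{}_{a'}.
\]
I would then check that the single-slot contraction equals \(\overline{P}\,\overline{P}\otimes PP\) sandwiching \(\mathbf{T}\) on that leg pair. Because the same \(x_j\) is shared between the two copies of \(\mathbf{P}\), the Pauli is correlated across input and output, matching \(P_j\otimes P_j\). Independent summation over different \(x_j\) reproduces the Cartesian product from Step 1.

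\textbf{Step 3: assemble.} Each slot's contraction touches only its own open legs, and the bonds \(\mu_j\) are untouched. Performing all \(k+1\) contractions on the MPO of \cref{eq:pt-mpo} is therefore the composition of the commuting \(\mathcal{T}_{P,j}\), which by Step 1 is \(\mathcal{T}_P^{(k)}\). This yields \cref{eq:pt-twirl-tensor}.

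\textbf{Main obstacle.} The main difficulty is the convention bookkeeping in Step 2. I need to confirm that the RFT transposition and the column-stacking vectorization place \(P\) and \(\overline{P}\) on the correct legs, and that the Choi reshaping in \cref{eq:pt-tensor-to-choi} turns leg-wise superoperator composition into conjugation by \(P\otimes P\) rather than by \(P^T\otimes P\) or \(\overline{P}\otimes P\). My first move would be to use the Hermiticity of Paulis and the identity \(\overline{P}=\pm P^T\): the phases \(\pm1\) appear twice within each conjugation and cancel. I would check this explicitly for a single qubit with \(P=Y\), since \(Y\) is the only Pauli where these signs arise.
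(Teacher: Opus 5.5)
Your proposal is correct and follows the paper's route. The paper justifies \cref{eq:pt-twirl-tensor} only as a direct rewriting of \cref{def:multi-time-twirl} through the RFT re-indexing, and your argument is that rewriting spelled out: you factorize the twirl slot by slot, then identify conjugation of the Choi operator by \(P_j\otimes P_j\) with \(\mathbf{P}\) contracted on both \(\alpha_j\) and \(\beta_j\) through a shared \(x_j\). Your bookkeeping remarks are also right. The \(Y\)-induced signs cancel; indeed \(\overline{P}\otimes P\) is real symmetric for every Pauli string, so leg orientation is immaterial. And with the paper's unnormalized \(\mathbf{P}\), the displayed contraction equals \(4^{n(k+1)}\mathcal{T}_P^{(k)}(\mathbf{T})\), so the factor \(4^{-n}\) per slot must be absorbed into \(\mathbf{P}\) as you suggest.
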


Graphically,
\begin{equation}
    \valignbox{\includegraphics[scale=0.8]{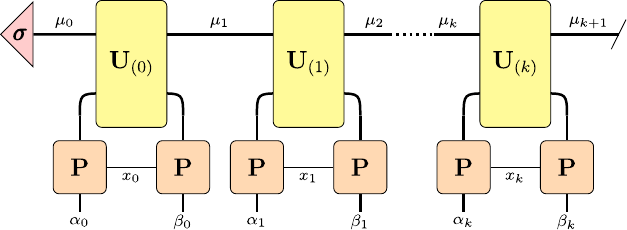}}.
\end{equation}
In particular, this operation implements the twirl locally on the physical legs and leaves the environment bonds \(\mu_j\) unchanged. This is a direct rewriting of \cref{def:multi-time-twirl} in the Pauli Liouville basis using \(\mathbf{P}\) and the RFT re-indexing in \cref{eq:rft}. The resulting tensor \(\widetilde{\mathbf{T}}\) is related to the standard Choi representation of the twirled process tensor via \cref{eq:pt-tensor-to-choi}.

The local contraction in \cref{prop:local-pt-twirl} also gives direct access to the (unnormalized) trajectory weights by contracting only one Pauli tensor per time step into the open legs of the untwirled tensor network. Let \(\mathbf{x}_{0:k} = (x_0, x_1, \ldots, x_k)\) index a spatiotemporal Pauli trajectory, where each \(x_j\) labels an \(n\)-qubit Pauli operator. The trajectory weights are defined as
\begin{equation}\label{eq:spp-weights}
  \mathbf{w}_{x_0\dots x_k} 
  \coloneq
  \mathbf{T}^{\alpha_0 \dots \alpha_k}{}_{\beta_0 \dots \beta_k} \prod_{j=0}^k \mathbf{P}^{\alpha_j}{}_{\beta_jx_j},
\end{equation}
and the corresponding normalized probabilities
\begin{equation}\label{eq:spp-prob-tensor}
  \Pr(\mathbf{x}_{0:k}) = \mathbf{p}_{x_0\dots x_k} \coloneq \frac{\mathbf{w}_{x_0\dots x_k}}{\mathcal{N}}, 
  \quad
  \mathcal{N} \coloneq \sum_{x_0\dots x_k} \mathbf{w}_{x_0\dots x_k}.
\end{equation}
This reproduces \cref{eq:spp-probabilities} under the identification \(x_j \leftrightarrow P_j\).

Although \(\mathbf{w}_{x_0\dots x_k}\) is an exponentially large tensor, it inherits an efficient MPS form from the process tensor MPO for which we can directly construct its local tensors.
\begin{definition}[Process tensor MPO to SPP MPS]\label[definition]{def:pt-mpo-to-spp-mps}
  Given a process tensor \(\mathbf T\) with Stinespring dilation \(\{\mathbf U_{(j)}, \pmb\sigma\}\), the spatiotemporal Pauli process obtained by multi-time twirling admits an MPS representation with tensors
  \begin{equation}\label{eq:spp-tensors}
  \begin{aligned}
    \mathbf{A}_{x_0\mu_{1}} 
    &= \pmb{\sigma}_{\mu_0}\mathbf{U}^{\mu_0\alpha_0}{}_{\mu_{1}\beta_0} \mathbf{P}^{\alpha_0}{}_{\beta_0 x_0} \\
    \mathbf{A}^{\mu_j}{}_{x_j\mu_{j+1}} 
    &= \mathbf{U}^{\mu_j\alpha_j}{}_{\mu_{j+1}\beta_j} \mathbf{P}^{\alpha_j}{}_{\beta_j x_j} \\
    \mathbf{A}^{\mu_k}{}_{x_k} 
    &= \mathbf{U}^{\mu_k\alpha_k}{}_{\mu_{k+1}\beta_k}\mathbf{t}^{\mu_{k+1}}
    \mathbf{P}^{\alpha_k}{}_{\beta_k x_k},
  \end{aligned}
  \end{equation}
  where the initial environment tensor \(\pmb{\sigma}\) and the final trace \(\mathbf{t}\) have been absorbed into the boundary tensors. The resulting MPS,
  \begin{equation}\label{eq:spp-mps}
    \mathbf{w}_{x_0\dots x_k} =
    \mathbf{A}_{x_0\mu_1}
    \Bigg(\prod_{j=1}^{k-1} 
    \mathbf{A}^{\mu_j}{}_{x_j\mu_{j+1} }\Bigg)
    \mathbf{A}^{\mu_k}{}_{x_k},
  \end{equation}
  encodes the (unnormalized) joint probability distribution over Pauli trajectories, denoting the exact probability tensor \(\mathbf{p}_{x_0\dots x_k} \coloneq \Pr(\mathbf{x}_{0:k}) = \mathbf{w}_{x_0\dots x_k}/\mathcal{N}\).
\end{definition}

\begin{figure}
    \centering
    \includegraphics[scale=1]{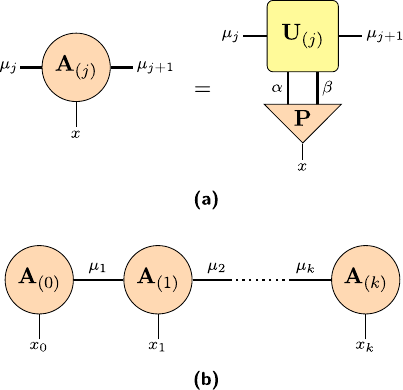}
    \caption{\textbf{One-dimensional SPP as an MPS over Pauli trajectories.} \textbf{(a)}~Local tensor obtained by contracting the system--environment unitary tensor \(\mathbf{U}_{(j)}\) with the Pauli tensor \(\mathbf{P}\), corresponding to a local Pauli twirl on the system. \textbf{(b)}~Resulting matrix product state (MPS) representation of the SPP trajectory weights, where environment bonds \(\mu_j\) mediate temporal correlations and physical indices \(x_j\) label Pauli outcomes at each time step. Sampling this MPS yields Pauli trajectories \(\mathbf{x}_{0:k}\) with probability \(\Pr(\mathbf{x}_{0:k})\).}
    \label{fig:spp-mps}
\end{figure}

Importantly, \cref{prop:local-pt-twirl} and \cref{def:pt-mpo-to-spp-mps} show that the map from a process tensor MPO to the SPP MPS acts only on the physical indices and leaves the environment bonds \(\mu_j\) unchanged. This immediately implies that the temporal bond dimension required to represent the induced SPP cannot increase.
\begin{lemma}[Bond dimension bound for SPP MPS]\label[lemma]{thm:spp-bond}
  Let \(\mathbf{T}\) be a process tensor admitting an MPO form whose inner bond dimensions are \(\dim \mu_j\) for each time step \(t_j\). The resulting SPP MPS encoding the joint probability distribution \(\Pr(\mathbf{x}_{0:k})\), defined in \cref{def:def-spp}, admits an MPS representation with inner bond dimensions \(D_j\) satisfying 
  \begin{equation}
    D_j \leq \dim \mu_j \quad \forall j.
  \end{equation}
\end{lemma}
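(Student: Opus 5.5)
The bound follows directly from the explicit construction in \cref{def:pt-mpo-to-spp-mps}. Since the statement only asks for the existence of an MPS with $D_j \le \dim\mu_j$, it is enough to exhibit one.

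The first step is to start from the MPO of \cref{def:pt-mpo}, whose virtual bonds $\mu_0,\dots,\mu_{k+1}$ carry only environment indices. We then insert one Pauli tensor $\mathbf{P}^{\alpha_j}{}_{\beta_j x_j}$ on each physical pair $(\alpha_j,\beta_j)$, as in \cref{eq:spp-weights}. By \cref{thm:pt-to-spp}, the resulting contraction reproduces the weights $w(\mathcal{P})$. We will also check, via the identification \cref{eq:pt-tensor-to-choi} and the RFT convention, that contracting $\mathbf{P}$ into the fused legs gives exactly $\Tr[\Pi_{P_j}(\cdot)]$ on slot $j$. This is a rewriting of $\Pi_{P_j}=\dketbra{P_j}{P_j}$ in the vectorized basis. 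I expect this index bookkeeping, with its transposes and conjugations, to be the only real obstacle. I would handle it by checking a single slot, where the statement reduces to the known fact that the $\chi$-matrix diagonal equals $\Tr[\Pi_P\Upsilon]$, and then using multilinearity across slots.

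The second step is to group the tensors site by site. Because each $\mathbf{P}$ attaches only to physical legs at a single time $j$, we can contract it into $\mathbf{U}_{(j)}$ locally. This yields the tensors $\mathbf{A}^{\mu_j}{}_{x_j\mu_{j+1}}$ of \cref{eq:spp-tensors}, with $\pmb\sigma$ and $\mathbf{t}$ absorbed at the boundaries. Each $\mathbf{A}$ carries exactly the same virtual indices $\mu_j,\mu_{j+1}$ as $\mathbf{U}_{(j)}$, and no new bond is created. The network \cref{eq:spp-mps} is therefore an MPS for $\mathbf{w}$ with bond dimensions $\dim\mu_j$.

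The third step handles normalization. Dividing by the scalar $\mathcal{N}$ does not change any bond dimension, since it can be absorbed into one site tensor. Hence $\mathbf{p}$ admits an MPS with $D_j=\dim\mu_j$. We will remark that the minimal bond dimension, namely the rank of the unfolding of $\mathbf{p}$ across the cut between $j-1$ and $j$, is at most this value, which gives $D_j\le\dim\mu_j$. The same argument applies to any MPO form, not only the Stinespring one, because only locality of the physical-leg contractions is used. In the Stinespring case, $\dim\mu_j\le d_E^2$.
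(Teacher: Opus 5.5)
Your proposal is correct and follows essentially the same route as the paper. The paper's proof also argues that the Pauli tensors act only on the physical legs of each MPO tensor and leave the virtual bonds $\mu_j$ untouched, so the rank of the unfolding across each temporal cut (hence the minimal bond dimension) cannot exceed $\dim\mu_j$. Your explicit exhibition of the MPS of \cref{def:pt-mpo-to-spp-mps}, with the scalar $\mathcal{N}$ absorbed into one site, is a slightly more concrete version of that same argument. The index bookkeeping you flag in your first step only confirms that the MPS encodes the SPP weights; the bound itself does not depend on it.
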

\begin{proof}
    The tensors of the SPP MPS are obtained by applying a local, linear map to the physical indices of the process tensor MPO tensors. This operation acts strictly on the system Hilbert space and leaves the environment bonds \(\mu_j\) invariant. Since the minimum required bond dimension \(D_j\) is determined by the Schmidt rank across the temporal partition, and no operations are performed on the virtual bonds, the rank cannot increase. Therefore, the bond dimension is strictly bounded by the dimension of the original virtual space, \(D_j \le \dim \mu_j\).
\end{proof}
\begin{corollary}[Environment dimension bound]\label{thm:env-bond-dim-bound}
    For a process derived from a system interacting with a finite environment of dimension \(d_E\), the inner bond dimensions of the resulting SPP MPS are bounded by the Liouville space dimension of the environment,
    \begin{equation}
        D_j \le d_E^2.
    \end{equation}
\end{corollary}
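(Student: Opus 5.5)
The corollary follows by combining \cref{thm:spp-bond} with the Stinespring-to-MPO construction of \cref{def:pt-mpo}, so the plan is short.

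First, I would invoke the dilation. By Stinespring's theorem, as extended to multi-time processes in \cref{sec:process_tensors}, a process generated by a system coupled to an environment of dimension $d_E$ is specified by joint unitaries $\{U_j\}_{j=0}^k$ on $\mathcal{H}_E\otimes\mathcal{H}_S$ and an initial environment state $\sigma_E$. This holds for every time step, because the same environment Hilbert space carries the memory throughout. Applying the RFT map to each $\mathcal{U}_j=\overline{U}_j\otimes U_j$ gives the tensors $\mathbf{U}_{(j)}^{\mu_j\alpha_j}{}_{\mu_{j+1}\beta_j}$. Their virtual index $\mu_j=(e'_j e_j)$ is the fused bra--ket environment index, so by construction $\dim\mu_j=d_E\cdot d_E=d_E^2$. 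The boundary tensors $\pmb{\sigma}_{\mu_0}$ and $\mathbf{t}^{\mu_{k+1}}$ carry indices of this same dimension. This yields an MPO representation in the sense of \cref{def:pt-mpo} with every inner bond of dimension exactly $d_E^2$.

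Second, I would apply \cref{thm:spp-bond} to this MPO. Using the explicit local tensors $\mathbf{A}$ of \cref{def:pt-mpo-to-spp-mps}, the SPP weight tensor $\mathbf{w}_{x_0\dots x_k}$ has an MPS representation whose bond indices are the same $\mu_j$. The lemma then gives $D_j\le\dim\mu_j=d_E^2$ for all $j$.

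Third, I would address normalization. The probability tensor $\mathbf{p}=\mathbf{w}/\mathcal{N}$ differs from $\mathbf{w}$ only by a scalar. Absorbing $1/\mathcal{N}$ into any single site tensor leaves all bond dimensions unchanged, so the bound holds for the SPP MPS encoding $\Pr(\mathbf{x}_{0:k})$.

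The only point needing care is that $D_j$ should be understood as the minimal bond dimension, i.e., the rank of $\mathbf{w}$ reshaped across the cut between time steps $\le j$ and $>j$. That rank is bounded by the dimension of any bond through which the network factorizes across the cut. In the MPS of \cref{eq:spp-mps}, the cut at step $j$ is crossed by the single index $\mu_{j+1}$, which has dimension $d_E^2$. Hence $\mathrm{rank}\le d_E^2$ with no further assumption. If the environment is time-dependent with dimension $d_{E_j}$, the same argument gives the stepwise bound $D_j\le d_{E_j}^2$.
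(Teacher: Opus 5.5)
Your proof is correct and is essentially the paper's argument. The corollary follows by noting that the fused environment bonds $\mu_j=(e'_je_j)$ of the Stinespring-to-MPO construction in \cref{def:pt-mpo} have dimension $d_E^2$ and then invoking \cref{thm:spp-bond}; your remarks on normalization and on the rank across a temporal cut simply make the paper's informal ``rank cannot increase'' step precise.

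One small point: you do not need Stinespring to supply unitaries on the same $d_E$-dimensional environment. It does not guarantee this when the joint steps are merely CPTP, as in the storm model. In that case the RFT of any joint superoperator on $\mathcal{H}_E\otimes\mathcal{H}_S$ already carries environment legs of dimension $d_E^2$, so the bound holds without any dilation.
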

For generic Haar-random system-environment unitaries, the bound in \cref{thm:spp-bond} is typically saturated (up to finite-size effects), since the induced temporal memory explores the full environment Liouville space.

This construction, illustrated in \cref{fig:spp-mps}, provides a direct method for sampling Pauli trajectories from the joint distribution using standard MPS sampling algorithms. Conversely, if the operator description is required, the twirled process tensor MPO \(\widetilde{\mathbf{T}}\) can be perfectly recovered from the SPP MPS by re-expanding the Pauli basis indices \(x_j\) as
\begin{equation}\label{eq:spp-mps-to-pt}
  \widetilde{\mathbf{T}}^{\alpha_0\dots \alpha_k}{}_{\beta_0 \dots \beta_k} 
  =
  \mathbf{p}_{x_0\dots x_k}\Bigg(
    \prod_{j=0}^k \mathbf{P}^{x_j\alpha_j}{}_{\beta_j}
  \Bigg).
\end{equation}
\Cref{eq:spp-weights,eq:spp-prob-tensor,eq:spp-mps-to-pt} hence provide two equivalent descriptions of the induced SPP: as a Pauli-diagonal MPO \(\widetilde{\mathbf{T}}\) in the operator picture, and as a classical MPS \(\mathbf{p}\) encoding \(\Pr(x_{0:k})\).

Finally, nothing in the above construction relies on a one-dimensional (MPS/MPO) geometry. The multi-time Pauli twirl is implemented by inserting the same fixed Pauli tensor \(\mathcal{P}\) on the open indices. Consequently, for any process tensor represented by a tensor network---for instance, a tree tensor network or a higher-dimensional PEPO---the twirl produces a tensor network encoding the classical joint trajectory weights. To illustrate the \((1+1)D\) case, the multi-time Pauli twirl maps the process tensor PEPO to a projected entangled pair state (PEPS) representation of the SPP. The constituent tensors of this PEPS network are given by
\begin{equation}\label{eq:spp-peps}
  \mathbf{a}_{(i, j)}^{\nu^{i-1}_j\mu_j^i}{}_{x_j^i\mu_{j+1}^i\nu_j^i} = 
  \mathbf{u}_{(i, j)}^{\nu^{i-1}_j\mu_j^i\alpha_j^i}{}_{\beta_j^i\mu_{j+1}^i\nu_j^i}
  \mathbf{P}^{\beta_j^i}{}_{\alpha_j^i x_j^i},
\end{equation}
As shown in \cref{fig:spp-peps}, spatial correlations are encoded along the bonds \(\nu^i_j\), while temporal correlations remain encoded along the bonds \(\mu^i_j\). The bond dimension bound of \cref{thm:spp-bond} applies individually to each temporal bond in the PEPS network.

\begin{figure}
    \centering
    \includegraphics[scale=1]{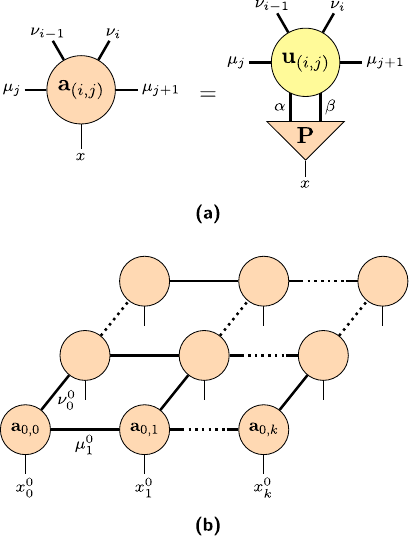}
    \caption{\textbf{\((1+1)D\) SPP as a PEPS over spatiotemporal Pauli faults.} \textbf{(a)}~Local tensor transformation for \((1+1)D\) tensor network representation. \textbf{(b)}~Projected entangled-pair state (PEPS) representation of the SPP trajectory weights. Spatial correlations are encoded along one axis by bonds \(\nu^i_j\), while temporal correlations are encoded along the other axis by bonds \(\mu^i_j\).}
    \label{fig:spp-peps}
\end{figure}

We next illustrate the SPP framework by introducing three simple system-environment Hamiltonians and explicitly computing the corresponding single-slot (two time slices) SPPs.

\subsection{Worked examples: single-qubit non-Markovian dynamics}
Consider a single system qubit \(S\) coupled to a single environment qubit \(E\), with joint Hilbert space \(\mathcal{H}_S \otimes \mathcal{H}_E\). We study three families of system-environment Hamiltonians \(H(\theta)\in\mathcal{B}(\mathcal{H}_S\otimes\mathcal{H}_E)\), generating unitary dynamics of the form
\begin{equation}
  U(\theta) = e^{-iH(\theta)},
\end{equation}
where \(\theta\) is a tunable parameter. Concretely, we consider:
\begin{enumerate}
    \item \textbf{Heisenberg interaction}
    \begin{equation}
        H_J(\theta_J) = -\frac{1}{2} \theta_J (X^S \otimes X^E + Y^S \otimes Y^E + Z^S \otimes Z^E)
    \end{equation}

    Generates exchange interactions; in particular, \(U_J(\pi/2)\) is equivalent to a SWAP operator up to a global phase. This is a paradigmatic model for non-Markovian dynamics and is known to generate strong temporal correlations.

    \item \textbf{Controlled-\(X\) rotation}
    \begin{equation}
        H_{RX}(\theta_{RX}) = -\frac{1}{2} \theta_{RX}  (X^S \otimes I^E - X^S \otimes Z^E)
    \end{equation}

    Generates a controlled rotation on the system along the \(X\)-axis, with the environment acting as the control. At \(\theta_{RX} = \pi/2\), the resulting unitary is equivalent to a CNOT gate up to a global phase.

    \item \textbf{Heisenberg with local field}
    \begin{equation}
        H_F(\theta_F) = H_J(\frac{\pi}{2}) - \theta_F(X^S + Y^S + Z^S) \otimes I^E
    \end{equation}

    A maximal exchange Heisenberg interaction modulated by a local field on the system, breaking the symmetry of the pure Heisenberg model.
\end{enumerate}

Given \(U(\theta)\) and choosing an initial environment state \(\rho_E = \ketbra{+}{+}\), we construct the corresponding single-slot (two time slices) process tensor \(\Upsilon_{0:1}(\theta)\) in the standard Choi representation using \cref{eq:pt-mpo,eq:pt-tensor-to-choi}. We then apply the multi-time Pauli twirl in \cref{eq:twirl-upsilon} to obtain the corresponding SPP \(\Upsilon_{0:1}^{\mathcal{T}_P}(\theta)\).

\begin{figure*}[t]
    \centering
    \includegraphics[width=0.95\textwidth]{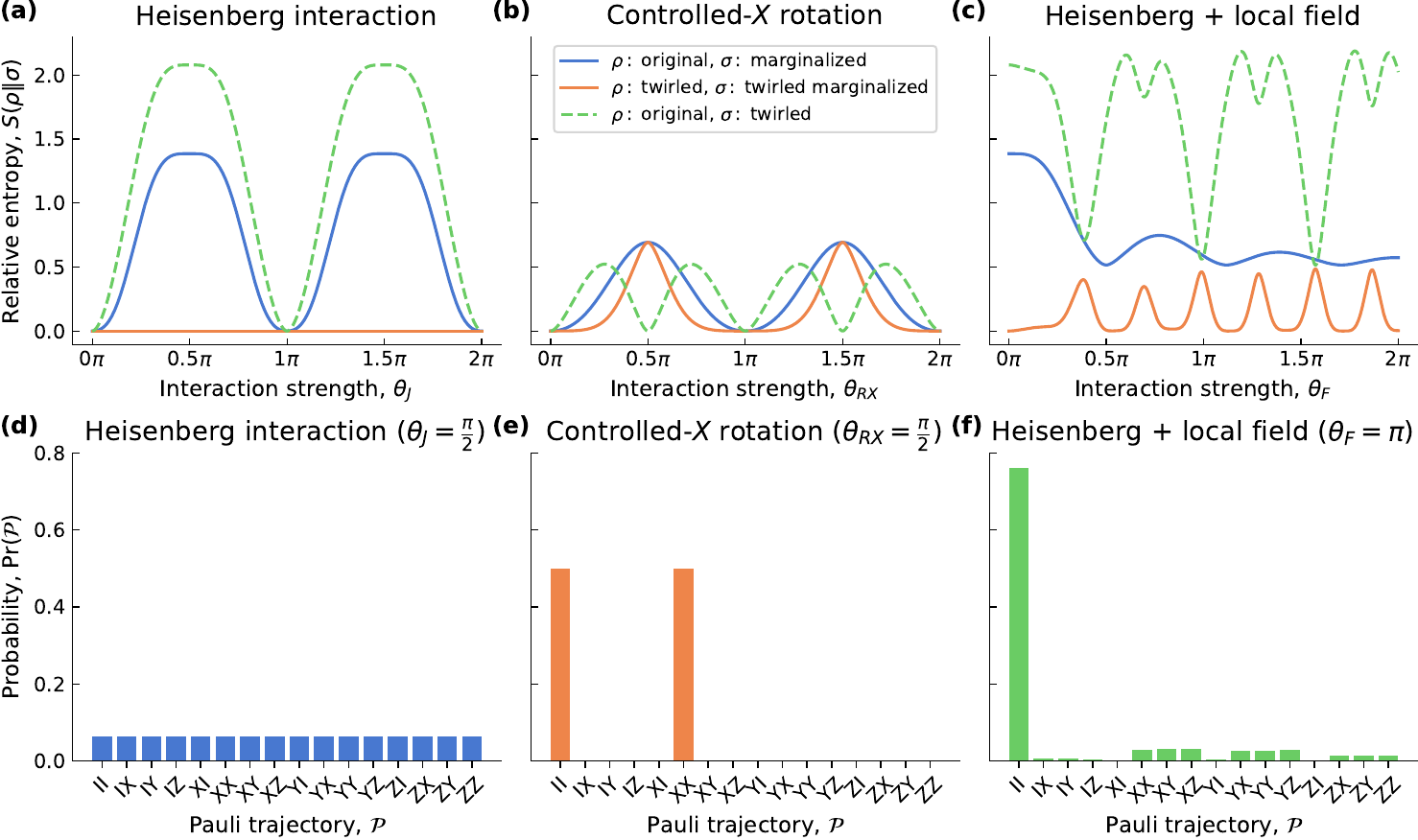}
    \caption{\textbf{Quantum relative entropy diagnostics of SPPs and Pauli trajectory probabilities}. For three single-qubit system-environment Hamiltonian families, we compare temporal correlation diagnostics before and after the multi-time Pauli twirl, together with representative SPP trajectory distributions. Top row: generalized quantum mutual information (GQMI) \(\mathcal{I}[\Upsilon_{0:1}] = S(\Upsilon_{0:1} \| \Upsilon_{0:1}^{\mathrm{Markov}})\), its twirled counterpart \(\mathcal{I}[\Upsilon_{0:1}^{\mathcal{T}_P}]\), and twirl relative entropy \(\mathcal{J}[\Upsilon_{0:1}]\), plotted as functions of interaction strength \(\theta \in [0,2\pi]\). Bottom row: Representative Pauli trajectory probabilities \(\Pr(P_{0:1})\) for selected parameter values. Columns correspond to \textbf{(a, d)} Heisenberg exchange, \textbf{(b, e)} controlled-\(X\) rotation, and \textbf{(c, f)} Heisenberg exchange with local field. Choi operators are normalized such that \(\Tr[\Upsilon_{0:1}] = 1\), and the environment is initialized in \(\ket{+}\).}
    \label{fig:worked-examples}
\end{figure*}

To study how these processes transform under twirling, we compute measures of temporal correlations alongside Pauli trajectory probabilities. A key diagnostic is the \emph{generalized quantum mutual information} (GQMI), defined as the quantum relative entropy between the original process tensor \(\Upsilon_{0:k}\) and the product of its single-time marginals
\begin{equation}
  \Upsilon_{0:k}^{\mathrm{Markov}} = \Tr_{\overline{0}}[\Upsilon_{0:k}] \otimes \Tr_{\overline{1}}[\Upsilon_{0:k}] \otimes \cdots \otimes \Tr_{\overline{k}}[\Upsilon_{0:k}],
\end{equation}
where \(\Tr_{\overline{j}}[\cdot]\) denotes the partial trace over all input-output spaces except those at time step \(t_j\). The GQMI \(\mathcal{I}\) is then
\begin{equation}
    \begin{aligned}
        \mathcal{I}[\Upsilon_{0:k}] 
        &= 
        S(\Upsilon_{0:k} \| \Upsilon_{0:k}^{\mathrm{Markov}})
        \\
        &=
        \Tr[\Upsilon_{0:k} (\log \Upsilon_{0:k} - \log \Upsilon_{0:k}^{\mathrm{Markov}})].
    \end{aligned}
\end{equation}

In addition, we introduce a second quantity \(\mathcal{J}[\Upsilon_{0:k}]\) based on the relative entropy between a process and its Pauli-twirled counterpart, quantifying the non-Pauli structure discarded by the multi-time twirl. Defining \(\Upsilon_{0:k}^{\mathcal{T}_P} = \mathcal{T}_P^{(k)}(\Upsilon_{0:k})\), we set
\begin{equation}
  \mathcal{J}[\Upsilon_{0:k}] = S(\Upsilon_{0:k} \| \Upsilon_{0:k}^{\mathcal{T}_P}).
\end{equation}
By a property of the quantum relative entropy, \(\mathcal{J}[\Upsilon_{0:k}] \geq 0\) for all \(\Upsilon_{0:k} \ge 0\), with equality if and only if \(\Upsilon_{0:k} = \Upsilon_{0:k}^{\mathcal{T}_P}\); equivalently, the process is invariant under the multi-time twirl.

We evaluate \(\mathcal{I}[\Upsilon_{0:1}]\), \(\mathcal{I}[\Upsilon_{0:1}^{\mathcal{T}_P}]\), and \(\mathcal{J}[\Upsilon_{0:1}]\) for each Hamiltonian family as functions of the interaction strength, where Choi operators \(\Upsilon_{0:1}\) are normalized such that \(\Tr[\Upsilon_{0:1}] = 1\). \Cref{fig:worked-examples} (top row) plots each quantity over \(\theta \in [0,2\pi]\). For the Heisenberg model in panel~(a), \(\mathcal{I}[\Upsilon_{0:1}]\) is maximal at \(\theta_J = (2n+1)\pi/2\), \(n \in \mathbb{Z}\), corresponding to a SWAP-like operation, and minimal for \(\theta_J = n\pi\), corresponding to the identity operation. Notably, the twirled process has \(\mathcal{I}[\Upsilon_{0:1}^{\mathcal{T}_P}] = 0\) for all \(\theta_J\), that is, due to the symmetry of the interaction, the induced SPP is always temporally uncorrelated.

To visualize the Pauli sequence probabilities directly, \cref{fig:worked-examples} (bottom row) plots \(\Pr(\mathcal{P}_{0:1})\) at representative parameter values for each model. For the Heisenberg interaction at \(\theta_J = (2n + 1)\pi/2\) in panel~(d), the resulting SPP is a uniform distribution over all 16 possible Pauli trajectories, corresponding to maximally depolarizing noise at each step; correspondingly, \(\mathcal{J}[\Upsilon_{0:1}]\) is maximal, indicating the original process is farthest from its twirled counterpart.

For the \(H_{RX}\) model in \cref{fig:worked-examples}(b), the GQMI is non-zero for both the original and twirled processes, indicating the presence of temporal correlations in the induced SPP. In particular, at \(\theta_{RX} = \pi/2\), we find \(\mathcal{I}[\Upsilon_{0:1}] = \mathcal{I}[\Upsilon_{0:1}^{\mathcal{T}_P}] = \ln 2\) and \(\mathcal{J}[\Upsilon_{0:1}] = 0\), showing that the original process is invariant under Pauli twirling and hence already an SPP. This is reflected in the trajectory distribution in \cref{fig:worked-examples}(e), where only the \(I_{t_0}I_{t_1}\) and \(X_{t_0}X_{t_1}\) sequences occur with non-zero probability, corresponding to a perfectly two-time correlated bit-flip channel. Away from these special points at \(\theta_{RX} \neq n\pi/2\), we observe that the twirling generally reduces the GQMI while \(\mathcal{J}[\Upsilon_{0:1}]\) increases, consistent with the removal of non-Pauli structure and non-Pauli-based temporal correlations.

Finally, for the local field model in \cref{fig:worked-examples}(c), we observe a non-trivial interplay between the exchange interaction and the symmetry-breaking local field. The GQMI \(\mathcal{I}[\Upsilon_{0:1}]\) is maximal at \(\theta_F = 0\) (pure Heisenberg dynamics) and then decays non-monotonically as the field strength increases. In contrast to the Heisenberg-only case, the twirled GQMI \(\mathcal{I}[\Upsilon_{0:1}^{\mathcal{T}_P}]\) is generically non-zero and displays oscillatory peaks and troughs as a function of \(\theta_F\). The twirl relative entropy \(\mathcal{J}[\Upsilon_{0:1}]\) exhibits two distinct types of troughs which alternate with peaks in \(\mathcal{I}[\Upsilon_{0:1}^{\mathcal{T}_P}]\). The trajectory probabilities in \cref{fig:worked-examples}(f) are dominated by the \(I_{t_0}I_{t_1}\) sequence, with small but correlated contributions from the Pauli sequences excluding an identity. Overall, this model highlights that SPPs can exhibit rich and structured temporal correlations, even when genuinely quantum temporal correlations are discarded.

Having established the SPP framework and illustrated it with worked examples, we now analyze temporal correlations in one-dimensional SPP MPS representations, developing a transfer operator framework to characterize memory and correlation decay.

\section{Characterizing temporal correlations in one-dimensional SPPs}\label{sec:correlation-structure}
Modeling SPPs within the tensor network formalism enables the application of established analytical and numerical tools to characterize the structure of spatiotemporal correlations. In this section, we primarily analyze the temporal correlation structure of SPPs to investigate how environmental memory is propagated and dissipated. We begin by defining multi-point correlation functions, evaluated via tensor contractions, which serve as the foundation for a transfer operator-based framework.

By examining the spectral properties of these operators---specifically the spectral gap and correlation time---we provide a compact characterization of the asymptotic decay of temporal correlations. Finally, we establish an isomorphism between SPP MPS representations and hidden Markov models (HMMs) under specific constraints. This connection provides a latent-state interpretation of the Pauli dynamics and facilitates the use of existing HMM-based methods for sampling and inference in large-scale QEC studies under correlated noise.

\subsection{Transfer operator framework}\label{sec:correlation-functions}
We begin by defining correlation functions for SPPs to quantify the statistical dependencies between Pauli operators across space and time. Since SPPs directly encode the joint probability distribution, these correlations can be straightforwardly evaluated via tensor contractions, which later motivates transfer operator formalism.

Let \(f : \mathbb{P}^{(n)} \to \mathbb{R}\) be a scalar function on the set of \(n\)-qubit Pauli operators \(\mathbb{P}^{(n)}\); for example, \(f_X(x) = 1\) if \(x=X\) and 0 otherwise. These functions play a role analogous to observables in quantum theory. Given a random Pauli operator \(x_t \in \mathbb{P}^{(n)}\) at time \(t\), the quantity \(f(x_t)\) defines a real-valued random variable. For two such variables \(f(x_t)\) and \(g(x_{t+\tau})\) evaluated at times \(t\) and \(t+\tau\), where \(\tau \geq 1\), the two-point correlation function (covariance) is defined as 
\begin{equation}
    \begin{split}
        C_{f,g}(\tau) 
        &= \mathbb{E}[(f(x_t) - \mathbb{E}[f(x_t)])(g(x_{t+\tau}) - \mathbb{E}[g(x_{t+\tau})])] \\
        &= \mathbb{E}[f(x_t) g(x_{t+\tau})] - \mathbb{E}[f(x_t)] \mathbb{E}[g(x_{t+\tau})],
    \end{split}
\end{equation}
where the expectation \(\mathbb{E}[\cdot]\) is taken with respect to a full joint probability distribution \(\Pr{(x_0,\dots,x_k)}\). At this stage, no assumptions are made regarding properties of the process such as time-homogeneity or stationarity.

In tensor network notation, each function \(f(x_t)\) and \(g(x_{t+\tau})\) corresponds to a real vector \(\mathbf{f}^{x_t}\) and \(\mathbf{g}^{x_{t+\tau}}\) indexed by Pauli labels. The expectation values can then be computed via the full tensor contraction of these vectors with the joint probability tensor \(\mathbf{p}_{x_0\dots x_k}\) (or its MPS representation) and summing over all remaining open indices. Concretely, for single and two-point expectations, we have
\begin{equation}
    \begin{aligned}
        \mathbb{E}[f(x_t)]
        &= \sum_{x_0,\dots,x_k} \mathbf{p}_{x_0\dots x_k} 
        \mathbf{f}^{x_t}, \\
        \mathbb{E}[f(x_t) g(x_{t+\tau})] 
        &= \sum_{x_0,\dots,x_k} \mathbf{p}_{x_0\dots x_k} 
        \mathbf{f}^{x_t} \mathbf{g}^{x_{t+\tau}},
    \end{aligned}
\end{equation}
respectively.

This construction extends directly to multi-point expectations by inserting the appropriate function vectors. To analyze these correlations more systematically, we now introduce a transfer operator-based framework.

Consider the MPS representation of an SPP with local tensors \(\mathbf{A}^{\mu_t}{}_{x_t\mu_{t+1}}\) where \(x_t \in \mathbb{P}^{(n)}\). In standard notation, this is equivalently the set of \(D \times D\) matrices \(A_{x_t}\) (where \(D = \dim{\mu}\) is the bond dimension) indexed by the \(x_t\), with elements given by
\begin{equation}
    (A_{x_t})_{\mu_t \mu_{t+1}} = \mathbf{A}^{\mu_t}{}_{x_t \mu_{t+1}}.
\end{equation}
Given a scalar function \(f : \mathbb{P}^{(n)} \to \mathbb{R}\), we define the following operators.
\begin{definition}[Transfer and emission operators]
    The transfer operator \(T\) and the emission operator \(E_f\) are defined as 
\begin{equation}
    T = \sum_{x_t\in\mathbb{P}^{(n)}}A_{x_t},
    \qquad
    E_f = \sum_{x_t\in\mathbb{P}^{(n)}} f(x_t) A_{x_t}.
\end{equation}
\end{definition}

The transfer operator \(T\) governs how correlations propagate along the virtual (environment) bonds. The observable-inserted transfer operator \(E_f\), which we refer to as the emission operator, encodes how an observable \(f\) couples to these bonds. A tensor graphical representation of these operators is shown in \cref{fig:transfer-emission-covariance}.

We note that the definition of the transfer operator here differs from the conventional quantum MPS transfer operator, which involves both the tensor and its complex conjugate~\cite{schollwockDensitymatrixRenormalizationGroup2005}. The distinction arises because the SPP MPS directly encodes the joint probability distribution over Pauli labels, rather than quantum amplitudes.

We now assume the SPP is \emph{time-homogeneous} (all tensors \(A_{x_t}\) are identical for all \(t\)) and \emph{ergodic}, such that the process reaches a unique stationary state on the environment space. As a result, the transfer operator \(T\) has a unique leading eigenvalue \(\lambda_1 = 1\) (after normalization), with corresponding left and right eigenvectors \(\bra{l_1}\) and \(\ket{r_1}\) satisfying
\begin{equation}
    \bra{l_1} T = \bra{l_1},
    \qquad
    T \ket{r_1} = \ket{r_1},
\end{equation}
with normalization \(\braket{l_1|r_1} = 1\). Eigenvectors \(\bra{l_1}\) and \(\ket{r_1}\) thus correspond to the left and right fixed points of the transfer operator. For such stationary processes, the expectation values can be written compactly as
\begin{equation}\label{eq:expectation-values-operator}
    \begin{gathered}
        \mathbb{E}[f(x_t)] = \braket{l_1|E_f|r_1} \eqcolon \braket{f}, \\
        \mathbb{E}[f(x_t)g(x_{t+\tau})] = \braket{l_1|E_f T^{\tau - 1} E_g|r_1} \eqcolon \braket{fg}.
    \end{gathered}
\end{equation}

To recast the covariance in terms of operators, we introduce the centered emission operator.
\begin{definition}[Centered emission operator]
    The centered emission operator \(\widetilde{E}_f\) corresponding to the function \(f\) is defined as
    \begin{equation}
        \widetilde{E}_f = E_f - \braket{f}\!T,
    \end{equation}
    satisfying \(\braket{l_1|\widetilde{E}_f|r_1} = 0\).
\end{definition}
This operator represents the deviation of the observable \(f\) around its stationary mean. Using this, the two-point covariance takes a canonical form,
\begin{equation}\label{eq:covariance_operator}
    C_{f,g}(\tau) = \braket{l_1|\widetilde{E}_f T^{\tau - 1} \widetilde{E}_g|r_1}.
\end{equation}

The covariance is thus described entirely in terms of the transfer and emission operators. It follows that the decay of temporal correlations is governed by the exponent of the transfer operator \(T\), and hence by its eigenvalue spectrum. Finally, this formulation extends naturally to multi-point correlation functions. For an ordered time sequence \(t_1 < t_2 < \dots < t_m\), the \(m\)-point correlator is
\begin{equation}
    C_{f_1,\dots,f_m} = \braket{l_1|\widetilde{E}_{f_1} T^{\Delta t_1} \widetilde{E}_{f_2} T^{\Delta t_2} \dots T^{\Delta t_{m-1}}\widetilde{E}_{f_m}|r_1},
\end{equation}
where \(\Delta t_j = t_{j+1} - t_j - 1\) is the number of intermediate time steps between \(t_j\) and \(t_{j+1}\). We now turn to the spectral properties of the transfer operator to characterize the decay of these correlations over time.

\begin{figure}
    \centering
    \includegraphics[width=\columnwidth]{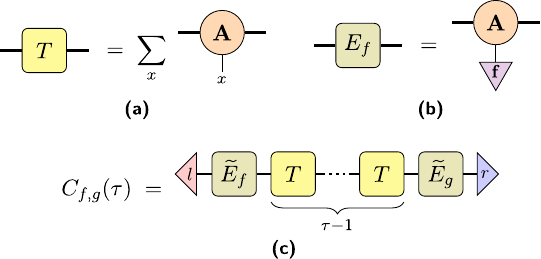}
    \caption{\textbf{Transfer operator construction for one-dimensional SPPs.} Assumes a time-homogeneous SPP MPS. \textbf{(a)} The transfer operator \(T = \sum_{x\in\mathcal{P}^{(n)}} A_x\), obtained by summing the local MPS tensors over the Pauli label \(x\). \textbf{(b)} The emission (observable-inserted) operator \(E_f = \sum_x f(x) A_x\), corresponding to insertion of a scalar function \(f:\mathcal{P}^{(n)} \to \mathbb{R}\). \textbf{(c)} Two-point covariance as a function of separation \(\tau\), expressed as \(C_{f,g}(\tau) = \bra{l_1}\widetilde{E}_f T^{\tau - 1} \widetilde{E}_g\ket{r_1}\), where \(\ket{r_1}\) and \(\bra{l_1}\) are the right and left fixed points of \(T\) and \(\widetilde{E}_f = E_f - \braket{l_1|E_f|r_1}T\) is the centered emission operator.}
    \label{fig:transfer-emission-covariance}
\end{figure}

\subsection{Spectral gap and correlation decay}\label{sec:spectral-analysis}
For clarity, we begin by assuming that the transfer operator \(T\) is diagonalizable. Its spectral decomposition is given by
\begin{equation}
    T = \sum_{i=1}^{D} \lambda_i \ketbra{r_i}{l_i},
\end{equation}
where the eigenvalues are ordered by their magnitudes \(|\lambda_1| \geq |\lambda_2| \geq \dots \geq |\lambda_{D}|\), the left and right eigenvectors satisfy \(\braket{l_i|r_j} = \delta_{ij}\), and the leading eigenvalue is normalized as \(\lambda_1 = 1\). Inserting this into the covariance function from \cref{eq:covariance_operator} yields
\begin{equation}\label{eq:covariance-spectral-expansion}
    C_{f,g}(\tau) = \sum_{i=2}^{D} \lambda_i^{\tau - 1} \! \braket{l|\widetilde{E}_f|r_i} \! \braket{l_i|\widetilde{E}_g|r}.
\end{equation}
Only the terms \(i \geq 2\) contribute since the leading eigenvectors correspond to the stationary state, for which \(\braket{l_1|\widetilde{E}_f|r_1} = \braket{l_1|\widetilde{E}_g|r_1} = 0\). This expression explicitly shows that the decay of correlations is a sum of exponentially decaying modes.

The asymptotic, long-time behavior is dominated by the subleading eigenvalue with the second-largest magnitude, which we denote as \(\lambda_* = |\lambda_2|\). From this, we define two key quantities:
\begin{enumerate}
    \item \textbf{The spectral gap} (\(\Delta\)): Defined as 
    \begin{equation}
        \Delta = 1 - \lambda_*,
    \end{equation}
    the spectral gap quantifies the rate of convergence to the stationary state. A large gap (\(\lambda_* \ll 1\)) implies correlations decay rapidly, indicating a short memory time. Conversely, a small gap (\(\lambda_* \approx 1\)) signifies long-lived temporal correlations.
    \item \textbf{The correlation time} (\(\xi\)): The characteristic timescale of the decay is given by the correlation time,
    \begin{equation}
        \xi = -\frac{1}{\ln \lambda_*}.
    \end{equation}
    In particular, for sufficiently large \(\tau\), the asymptotic decay is governed by this timescale, with \(|C_{f,g}(\tau)| \sim e^{-\tau/\xi}\).
\end{enumerate}
More broadly, the eigenspectrum of \(T\) provides a compact characterization of how environmental memory is propagated and dissipated. These spectral properties remain a useful descriptor even when relaxing assumptions such as diagonalizability and ergodicity. For instance, complex eigenvalues indicate oscillatory correlation behavior, while degenerate leading eigenvalues imply non-ergodic dynamics with multiple stationary states.

Finally, while \(T=\sum_x A_x\) is generally non-Hermitian (and may be non-normal), transfer operators induced by Haar-random system-environment unitaries are diagonalizable with probability one. More generally, when \(T\) is non-normal, eigenvalues alone need not fully characterize finite-time dynamics, and transient effects may arise; in such cases pseudospectral methods and related non-normal spectral theory may be employed~\cite{trefethenSpectraPseudospectraBehavior2005}. Next, we connect this transfer operator description to hidden Markov model realizations.

\subsection{Equivalence to hidden Markov models}
An SPP MPS, together with its associated transfer operator, naturally admits a hidden state interpretation. In this view, the virtual bond indices serve as latent variables whose dynamics drive the observed Pauli sequence. We make this connection explicit by linking SPP MPS representations to finite-state classical \emph{hidden Markov models} (HMMs). Concretely, under appropriate positivity and normalization constraints, an SPP MPS of bond dimension \(D\) is isomorphic to an \emph{edge-emitting} HMM with \(D\) hidden states. This equivalence provides a bridge to the extensive HMM literature, enabling efficient hidden state-based methods for sampling and modeling SPPs---tools we exploit in \cref{sec:temporalstormmodel,sec:qcamodel}.

Let \((S_t)_{t\geq 0}\) be a Markov chain on a finite state space \(\mathcal{S}=\{1,\dots,D\}\), and let \((X_t)_{t\geq 0}\) be a sequence of observations from the Pauli alphabet \(\mathbb{P}^{(n)}\). An edge-emitting HMM is fully specified by an initial distribution \(\pi_i\) over the hidden states and a set of transition-emission kernels \((K_x)_{ij}\). The initial distribution satisfies
\begin{equation}
    \pi_i \coloneq \Pr(S_0=i), 
    \quad 
    \pi_i \ge 0,\quad \sum_{i\in\mathcal S}\pi_i = 1.
\end{equation}
The transition-emission kernels \((K_x)_{ij}\) describe the joint probability of transitioning from hidden state \(i\) to \(j \in \mathcal{S}\) while emitting observation \(x \in \mathbb{P}^{(n)}\), defined as
\begin{equation}
    (K_x)_{ij} \coloneq \Pr{(S_{t+1}=j, X_t = x | S_t = i)},
\end{equation}
whose elements are non-negative and appropriately normalized as
\begin{equation}\label{eq:hmm-kernel-conditions}
    (K_x)_{ij} \geq 0 \quad \forall i,j,x,
    \quad
    \sum_{x,j} (K_x)_{ij} = 1 \quad \forall i.
\end{equation}
Summing over emitted symbols yields the hidden state \emph{transition matrix},
\begin{equation}
    (T_{\mathrm{HMM}})_{ij} = \sum_{x} (K_x)_{ij}.
\end{equation}
By \cref{eq:hmm-kernel-conditions}, \(T_{\mathrm{HMM}}\) is \emph{row-stochastic}, satisfying \(\sum_j (T_{\mathrm{HMM}})_{ij} = 1\) for all \(i\), and hence \(T_{\mathrm{HMM}}\ket{\mathbf{1}} = \ket{\mathbf{1}}\). If the chain is ergodic, it admits a unique stationary distribution \(\pi\) satisfying \(\pi^T T_{\mathrm{HMM}} = \pi^T\) and \(\pi^T\mathbf{1} = 1\). Accordingly, for a row-stochastic \(T_{\mathrm{HMM}}\), we may take \(\bra{l_1} = \pi^T\) and \(\ket{r_1} = \ket{\mathbf{1}}\).

The probability of observing a Pauli sequence \(x_{0:k}\) is given by a marginalization over all possible hidden state sequences,
\begin{equation}
    \Pr(x_{0:k}) = \sum_{s_0,\dots,s_{k+1}} \pi_{s_0} \prod_{t=0}^k (K_{x_t})_{s_t s_{t+1}}.
\end{equation}
This expression is in direct analogy with the SPP MPS probabilities in \cref{eq:spp-mps}. We now propose sufficient conditions under which an SPP MPS is equivalent to such an HMM.
\begin{proposition}[Sufficient conditions for SPP MPS-HMM equivalence]
    Let a time-homogeneous SPP be specified by an MPS \(\{A_x\}_{x\in\mathbb P^{(n)}}\) of bond dimension \(D\). If there exists a representation (gauge) in which \(A_x\) satisfies
    \begin{equation}
        (A_x)_{ij}\ge 0 \ \ \forall\, i,j,x, \quad
        \sum_{x\in\mathbb{P}^{(n)}}\sum_{j=1}^{D} (A_x)_{ij}=1 \ \ \forall\, i,
    \end{equation}
    then the process admits an edge-emitting Hidden Markov model (HMM) realization with \(D\) hidden states under the identification
    \begin{equation}
        (K_x)_{ij}\equiv(A_x)_{ij}.
    \end{equation}
    In particular, the SPP transfer matrix \(T = \sum_x A_x\) coincides with the HMM transition matrix \(T_{\mathrm{HMM}} = \sum_x K_x\).
\end{proposition}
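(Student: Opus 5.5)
The plan is to verify directly that the gauge-fixed tensors $A_x$ satisfy the two axioms of an edge-emitting HMM, and then that the sequence probabilities coincide. Once these hold, the identification $(K_x)_{ij}\equiv(A_x)_{ij}$ is immediate.

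First, I will use the assumed gauge. Since any MPS gauge transformation $A_x\mapsto G^{-1}A_xG$ (with $G$ invertible, applied uniformly for a time-homogeneous MPS) leaves every contracted weight $\mathbf{w}_{x_0\dots x_k}$ of \cref{eq:spp-mps} invariant, we may work in the representation where the hypotheses hold. The only changes are to the boundary vectors, which become $\bra{\ell}G$ and $G^{-1}\ket{r}$. In that gauge, nonnegativity of $(A_x)_{ij}$ and the row-sum condition are exactly \cref{eq:hmm-kernel-conditions}, so $K_x:=A_x$ is a valid family of transition-emission kernels. Summing over $x$ shows that $T=\sum_x A_x$ is row-stochastic, i.e.\ $T\ket{\mathbf 1}=\ket{\mathbf 1}$, and it equals $T_{\mathrm{HMM}}$ by definition. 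This settles the final claim of the proposition and identifies $\ket{r_1}=\ket{\mathbf 1}$.

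Second, I will match the boundaries, which I expect to be the main obstacle. The SPP weights take the form $\bra{\ell}A_{x_0}\cdots A_{x_k}\ket{r}$, with $\bra{\ell}$ absorbing $\pmb\sigma$ and $\ket{r}$ absorbing the trace $\mathbf t$. The HMM formula instead requires the left boundary to be a probability vector $\pi$ and the right boundary to be $\ket{\mathbf 1}$. For the right boundary, I will use the fact that $\Pr$ is a consistent family of distributions: marginalising the last site must reproduce the $(k-1)$-step distribution. Combined with time homogeneity, this means that $\ket{r}$ and $T\ket{r}$ give the same weights on all strings. If $\ket{r}$ is not already proportional to $\ket{\mathbf 1}$, I would restrict to the minimal (observable) realisation, where this forces $T\ket{r}=\ket{r}$. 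Ergodicity (a unique fixed point) then identifies $\ket{r}\propto\ket{\mathbf 1}$. Alternatively, I would include $\ket{\mathbf 1}$ as the right boundary in the sufficient-condition gauge, reading the hypothesis as including it, and note this choice explicitly. For the left boundary, I set $\pi^T:=\bra{\ell}/\braket{\ell|\mathbf 1}$ after absorbing $\mathcal N$. Its entries sum to one, and nonnegativity follows by evaluating single-site marginals $\Pr(S_0=i)$-type contractions, or else is assumed as part of the positive gauge.

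Finally, expanding the matrix product $\pi^T K_{x_0}\cdots K_{x_k}\ket{\mathbf 1}$ over intermediate indices gives exactly $\sum_{s_0,\dots,s_{k+1}}\pi_{s_0}\prod_t (K_{x_t})_{s_ts_{t+1}}$, which is the HMM sequence probability. Together with the normalisation $\sum_{x_{0:k}}\Pr=\pi^T T^{k+1}\ket{\mathbf 1}=1$, this completes the equivalence with $D$ hidden states.
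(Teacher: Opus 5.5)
Your first step matches what the paper relies on. The paper gives no separate proof: its hypotheses are literally \cref{eq:hmm-kernel-conditions}, so $K_x:=A_x$ and $T=T_{\mathrm{HMM}}$. It handles the boundary vectors only implicitly, through the stationary, ergodic setting of \cref{sec:correlation-functions} (``we may take $\bra{l_1}=\pi^T$ and $\ket{r_1}=\ket{\mathbf{1}}$'').

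You are right that the boundaries need an argument, but yours has a genuine gap on the left. Nonnegativity of $\pi:=\bra{\ell}/\braket{\ell|\mathbf{1}}$ does not follow from ``$\Pr(S_0=i)$-type contractions.'' $S_0$ is latent, and the only constraints you have are $\pi^TA_w\ket{\mathbf{1}}\ge 0$ for all words $w$. These place $\pi$ in the dual of the cone generated by the vectors $A_w\ket{\mathbf{1}}$, which is generally strictly larger than the nonnegative orthant.

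For a concrete counterexample, take $D=2$, $A_Y=A_Z=0$, $A_X=\mathrm{diag}(\tfrac12,0)$, $A_I=\bigl(\begin{smallmatrix}1/2&0\\ 1&0\end{smallmatrix}\bigr)$, $\bra{\ell}=(2,-1)$, and right boundary $\ket{\mathbf{1}}$. The kernels are nonnegative and row-stochastic, and $T$ has the unique stationary distribution $(1,0)$. The weights form a valid SPP: the first fault is $X$ surely, then faults are i.i.d.\ uniform on $\{I,X\}$. Yet $\pi=(2,-1)$, and no probability vector works with $K_x=A_x$, since $\pi^TA_X\ket{\mathbf{1}}=\pi_1/2\le\tfrac12$. (A different positive gauge happens to rescue this example, but nothing in your argument selects it.)

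The missing ingredient is stationarity. If $\bra{\ell}=\bra{l_1}$, it is the unique left fixed point of a nonnegative stochastic matrix, hence entrywise nonnegative by Perron--Frobenius. For a non-stationary initial environment $\pmb{\sigma}$, positivity of the left boundary in the chosen gauge has to be added as a hypothesis.

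Your right-boundary argument is also not sound as written. Passing to a minimal realisation takes you out of the gauge in which the hypothesis holds. There $\ket{\mathbf{1}}$ need not be a fixed point, so ergodicity cannot identify $\ket{r}\propto\ket{\mathbf{1}}$. The detour is unnecessary in both cases:
\begin{itemize}
\item For the physically derived MPS, trace preservation of the reduced environment map (the MPS form of \cref{eq:affine-constraint}) gives $T\mathbf{t}\propto\mathbf{t}$ exactly. So, after normalisation, $G^{-1}\mathbf{t}$ is a right fixed point of the row-stochastic $G^{-1}TG$, and uniqueness gives $G^{-1}\mathbf{t}\propto\ket{\mathbf{1}}$.
\item For an abstract consistent family, iterate your own relation $\bra{\ell}A_wT\ket{r}=\bra{\ell}A_w\ket{r}$ and use $\frac{1}{M}\sum_{m<M}T^m\to\ket{\mathbf{1}}\bra{l_1}$. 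This replaces $\ket{r}$ by $\braket{l_1|r}\ket{\mathbf{1}}$ directly in the given $D$-dimensional gauge.
\end{itemize}
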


These conditions are sufficient but not necessary; a generic SPP MPS need not admit a nonnegative, row-stochastic representation of the same bond dimension. However, any non-Markovian process can be rendered Markovian by enlarging the (in general, potentially unbounded) state space \cite{milzQuantumStochasticProcesses2021}. Constructing such an embedding optimally is model dependent and generally nontrivial~\cite{vidyasagarCompleteRealizationProblem2011}. Since we are primarily interested in utilizing HMMs as practical generative models for correlated noise, we do not pursue minimal HMM realizations here.

A practical advantage of HMM representations is access to standard, efficient procedures for inference, learning, and sampling~\cite{leonarde.baumMaximizationTechniqueOccurring1970}. Importantly, HMMs enable simple sequential sampling of Pauli trajectories via transition-emission update rules, making them well suited to large-scale Monte Carlo studies of QEC under correlated noise. This capability underpins the numerical analyses in subsequent sections, which we turn to next.

\section{Surface code performance under temporal correlations}\label{sec:temporalstormmodel}
In this section we present the first concrete application of the SPP framework, focusing on a simple, tunable model of temporally correlated Pauli noise and its consequences for surface code performance. We first position SPPs as practical QEC noise models and clarify how they may interface with standard circuit-level analyses. We then introduce the temporal storm model---a two-state SPP/HMM whose correlation time \(\xi\) (equivalently spectral gap \(\Delta\)) can be swept while keeping single-round marginal error rates fixed. Finally, we describe the simulation methodology and report memory and stability benchmarks, illustrating that temporal correlation structure can substantially modify logical scaling even when the average noise strength is controlled.

\subsection{SPPs as effective circuit-level noise models}
Up to this point, we have developed a general framework for spatiotemporal Pauli processes (SPPs). In principle, an SPP description of circuit noise can be obtained from a tomographically reconstructed process tensor~\cite{whiteDemonstrationNonMarkovianProcess2020,whiteUnifyingNonMarkovianCharacterization2025}. In practice, it is more useful to characterize an SPP directly at a chosen level of granularity---either at the circuit level, or phenomenologically at the level of QEC cycles. For the remainder of the paper, we focus on SPPs as generative noise models for QEC; broader uses, including decoding and noise mitigation, are discussed in \cref{sec:discussion}.

Correlations in realistic device noise can substantially alter QEC performance relative to memoryless baselines. 
Our goal here is not to catalog and deconstruct existing models, but to emphasize that a practical QEC-facing description of noise should (i) capture correlations at a level relevant to syndrome processing and decoding, while (ii) remain compatible with general underlying (Markovian or non-Markovian) dynamics. SPPs provide a convenient language for this purpose. Under randomized compiling, the multi-time Pauli twirl maps noisy stabilizer circuit dynamics to an effective SPP---a joint distribution over Pauli faults across space and time. This twirl is an operational construction: it does not assert that the microscopic dynamics are literally transformed into an SPP, but instead yields an effective model that captures the noise statistics relevant to stabilizer QEC (e.g., syndrome distributions and logical error rates) under the assumed randomization protocol.

Although the SPP framework resolves correlation at the level of individual operations, large-scale simulations and decoder interfaces often motivate a coarser representation. Accordingly, in our numerical studies we adopt a hybrid description: we treat within-round circuit noise using a standard independent gate-level model, while modeling inter-round memory effects by an SPP defined at the QEC cycle level. This choice reflects three pragmatic considerations. First, Pauli noise on a Clifford syndrome extraction circuit can be coarse-grained to an effective correlated Pauli error model on data and ancilla qubits at the end or start of each cycle~\cite{gicevFullyConvolutional3D2025}. Second, the correlations that most strongly impact long-time performance are frequently those that manifest across multiple rounds. Third, decoders and many characterization protocols naturally consume coarse-grained syndrome data, making a cycle-level SPP a convenient interface between microscopic noise and practical QEC tools.

In this section and the next, we exploit this bridge from microscopic dynamics, through SPPs, to efficiently simulable HMMs in two concrete settings. In \cref{sec:temporalstormmodel}, we introduce a simple one-dimensional temporal SPP with tunable spectral gap (equivalently, tunable correlation time) while keeping marginal error rates fixed, enabling a controlled study of how temporal correlations can impact surface code memory and stability experiments. In \cref{sec:qcamodel}, we turn to a genuinely spatiotemporal \((2+1)D\) SPP derived from a quantum cellular automaton model of environmental dynamics, which generates rich correlation structure and statistical-mechanical behavior while remaining parametrically simple.

\subsection{Temporal storm noise model}\label{sec:storm-model-details}
We now instantiate the general SPP formalism in a one-dimensional, analytically tunable setting, which we refer to as the \emph{temporal storm model}. Each system qubit is coupled to its own two-level environment with classical basis states \(\{\ket{0},\ket{1}\}\), representing \emph{calm} and \emph{storm} conditions, respectively.

\begin{figure*}[t]
    \centering
    \includegraphics[scale=1.0]{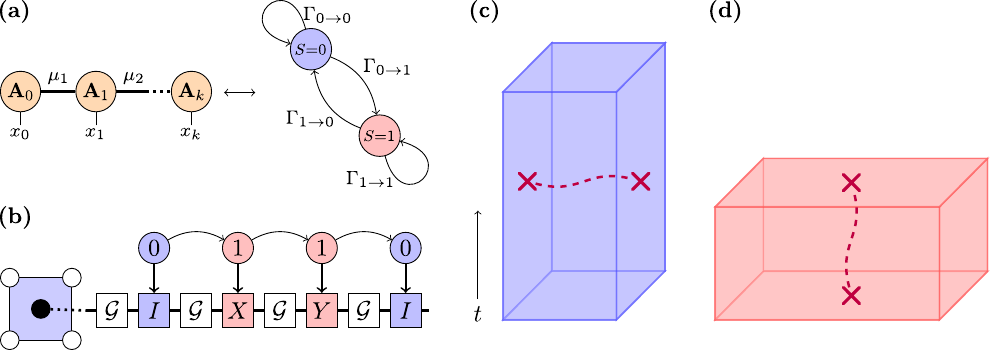}
    \caption{\textbf{Temporal storm model for surface code benchmarks.} A two-state latent environment \(s_t \in \{0,1\}\) (calm, storm) evolves as a Markov chain and modulates the Pauli label \(x_t \in \{I, X, Y, Z\}\) emitted at each time step, yielding a temporally correlated Pauli noise model. \textbf{(a)} Equivalent SPP MPS and HMM representations, with emission probabilities \(q_{s_t}^{(x)}\). \textbf{(b)} Cycle-level implementation in which each data or ancilla qubit has an independent storm environment; \(\mathcal{G}\) sets the circuit granularity. \textbf{(c, d)} Surface code memory and stability schematics with corresponding undetectable logical operators.}
    \label{fig:temporal-storm-overview}
\end{figure*}

The environment undergoes a time-homogeneous classical Markov chain with transition probabilities \(\Gamma_{s\to s'}\) between calm and storm (\cref{fig:temporal-storm-overview}). The system-environment interaction is modeled as a quantum channel conditioned on the environment state. When the environment \(E\) is in state \(\ket{s'}\), the system \(S\) evolves according to
\begin{equation}
    \mathcal{E}_{s'}[\rho_S]
    = \sum_x q^{(x)}_{s'} \,\sigma^{(x)} \rho_S \sigma^{(x)},
\end{equation}
where \(\sigma^{(x)}\) denotes the Pauli operators indexed by \(x\), and \(q^{(x)}_{s'}\) are probabilities satisfying \(\sum_x q^{(x)}_{s'} = 1\). The joint system-environment evolution over a single time step is thus described by the CPTP map
\begin{equation}
    \begin{aligned}
        \Phi[\rho_{SE}]
        &= \sum_{s,s',x} K^{(x)}_{s's} \,\rho_{SE}\, K^{(x)\dagger}_{s's},
        \\
        K^{(x)}_{s's}
        &= \sqrt{\Gamma_{s\to s'}\,q^{(x)}_{s'}}\;
        \ketbra{s'}{s}_E \otimes \sigma^{(x)}_S.
    \end{aligned}
\end{equation}

Because the environment dynamics are classical and the conditional system channels \(\mathcal{E}_{s'}\) are Pauli, the corresponding process tensor is invariant under the multi-time Pauli twirl in \cref{eq:twirl-upsilon}. In other words, this construction directly specifies an SPP, and may be viewed as the twirled image of a family of compatible underlying process tensors.

Labelling the environment transition probabilities as \(a = \Gamma_{0\to 1}\) and \(b = \Gamma_{1\to 0}\), the corresponding transfer operator \(T\) and SPP matrices \(A^{(x)}\) are
\begin{equation}
    T
    =
    \begin{pmatrix}
        1 - a & a \\
        b & 1 - b
    \end{pmatrix},
    \qquad
    A^{(x)} = T\,\mathrm{diag}\big(q_0^{(x)}, q_1^{(x)}\big).
\end{equation}
This is precisely a two-state HMM where \(T\) describes the evolution of the latent state, and the Pauli emission label \(x_t\) is drawn from \(q_{s_t}^{(x)}\). The spectrum of \(T\) is given by \(\lambda_1 = 1, \lambda_2 = 1 - a - b\). The subleading eigenvalue \(\lambda_2\) sets the spectral gap \(\Delta\) and correlation time \(\xi\),
\begin{equation}\label{eq:storm-spectral-gap}
    \Delta = a + b, \quad \xi = -\frac{1}{\ln(1-a-b)},
\end{equation}
where we assume \(0 \le a\), \(b \le 1\) with \(a + b < 1\), so that \(\lambda_2 \in (0, 1)\) and \(\xi > 0\).

The left and right eigenvectors associated with \(\lambda_1\), corresponding to the left and right stationary states, are
\begin{equation}\label{eq:storm-stationary-states}
    \bra{l} =
    \begin{pmatrix}
        \pi_0 & \pi_1
    \end{pmatrix},
    \qquad
    \ket{r} =
    \begin{pmatrix}
        1 \\
        1
    \end{pmatrix},
\end{equation}
where \(\pi_0 = b/(a+b)\) and \(\pi_1 = a/(a+b)\) are the stationary calm and storm fractions, respectively, and we have dropped the subscript 1 in \(l_1, r_1\). These immediately lead to the marginal Pauli error rates,
\begin{equation}\label{eq:storm-marginals}
    \bar{p}^{(x)}
    = \pi_0 q_0^{(x)} + \pi_1 q_1^{(x)} = \frac{bq_0^{(x)} + aq_1^{(x)}}{a+b}.
\end{equation}

By solving \cref{eq:storm-spectral-gap,eq:storm-marginals} for the transition rates \(a\) and \(b\), we construct a family of temporally correlated noise models with tunable spectral gap \(\Delta\) (or correlation time \(\xi\)) while maintaining fixed marginal error rates \(\bar{p}^{(x)}\). This decoupling of environmental memory from average noise strength underpins the parameter sweeps in the surface code simulations that follow.

\subsection{Simulation methodology}
\begin{figure*}[t]
    \centering
    \includegraphics[]{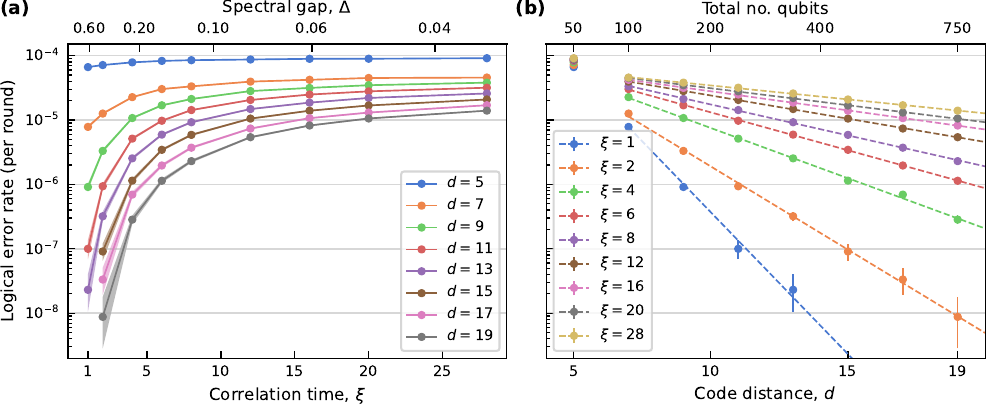}
    \caption{\textbf{Surface code memory versus correlation time.} Fixed marginal error rate \(p = 0.1\%\), variable code distance \(d\), and \(N_r = 3d\) number of rounds. \textbf{(a)} Effective per-round logical error rate versus correlation time \(\xi\) (bottom axis), with corresponding spectral gap shown on the top axis. The per-round error rate \(p_\mathrm{round}\) is inferred from the per shot failure rate \(p_\mathrm{shot}\) via \(p_\mathrm{round} = (1 - (1 - 2p_\mathrm{shot})^{1/N_r})/2\). \textbf{(b)} The same data plotted versus code distance \(d\) at fixed \(\xi\) (legend). At fixed marginal error rates, increasing temporal memory raises the logical error rate and progressively weakens its exponential suppression with distance. Error bars, where visible, indicate binomial standard errors.}
    \label{fig:temporal-storm-memory}
\end{figure*}

We benchmark surface code performance under the temporal storm noise model using two standard protocols: a memory experiment, which tests preservation of a logical observable through time, and a stability experiment, which tests the reliability of moving a logical observable through space~\cite{gidneyStabilityExperimentsOverlooked2022} (an essential primitive, e.g., for lattice surgery-based logical operations). Memory and stability experiments may be viewed as complementary space-time duals, and together they benchmark distinct but equally vital primitives of surface code logical computation. This symmetry is particularly relevant for correlated noise. In memory experiments, correlated data qubit faults can form spacelike chains implementing a logical operator, whereas in stability experiments, temporally correlated syndrome measurement (or ancilla-induced) faults can form timelike chains spanning many rounds, likewise inducing a logical operator.

For the memory experiment, we simulate rotated surface code circuits of distance \(d\) for \(N_r = 3d\) rounds of syndrome extraction, and measure logical failure probability \(p_\mathrm{shot}\) after decoding using minimum-weight perfect matching. We estimate an effective per-round error rate from the per shot failure probability as \(p_\mathrm{round} = (1 - (1 - 2p_\mathrm{shot})^{1/N_r})/2\). For the stability experiment, we simulate a diameter-6 surface code patch with a variable number of rounds \(N_r\) (setting the timelike distance), and report the logical failure probability per shot. In the parameter sweeps we use \(d = 5, 7, \dots, 19\) for memory, and \(N_r = 5, 10, \dots, 35\) for stability.

Noise is implemented in a hybrid manner following the modeling choice of \cref{sec:storm-model-details}. We apply a global independent, gate-level baseline noise model consisting of: a single-qubit depolarizing channel after every single-qubit Clifford gate and on each data qubit at the start of each round, a two-qubit depolarizing channel after each two-qubit entangling gate, a Pauli \(X\) flip (\(Z\) flip) channel before \(Z\)-basis (\(X\)-basis) measurements, and after \(Z\)-basis (\(X\)-basis) state preparations. All baseline noise processes are parameterized by a single characteristic error rate \(p = 0.1\%\).

Inter-round temporal correlations are incorporated by composing, at the start of each round, an additional Pauli fault on every data and ancilla qubit sampled from the temporal storm SPP described above. The circuit granularity \(\mathcal{G}\) in \cref{fig:temporal-storm-overview} is therefore taken to be an entire syndrome extraction subcircuit per round, though the construction is valid for finer granularities. Throughout the sweep we fix the marginal error rate of the chosen nontrivial Pauli fault to \(\bar{p}^{(x)} = p = 0.1\%\), and vary correlation time over \(\xi \in [1, 28]\). The initial state of the environment is sampled from the stationary distribution in \cref{eq:storm-stationary-states}.

Simulations are performed using Stim~\cite{gidneyStimFastStabilizer2021} for stabilizer circuit sampling, with temporally correlated errors generated externally from the temporal storm HMM. Decoding is performed using minimum-weight perfect matching via PyMatching~\cite{higgottSparseBlossomCorrecting2025} on a detector error model constructed from the independent circuit-level noise with marginal error rates matched to the correlated process. For each parameter point we estimate failure probabilities from Monte Carlo sampling over \(10^7\) shots; error bars, where visible, denote binomial standard errors.

\subsection{Surface code memory and stability benchmarks}

\begin{figure*}[t]
    \centering
    \includegraphics[width=0.95\textwidth]{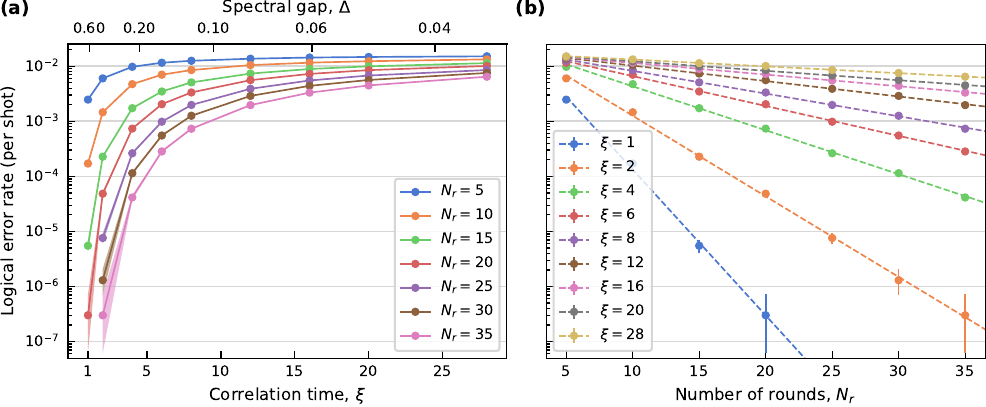}
    \caption{\textbf{Surface code stability versus correlation time.} Stability experiments benchmark the reliability of moving a logical observable through space, an essential primitive for logical operations implemented via lattice surgery. We fix the marginal error rate \(p=0.1\%\) for a diameter-6 surface code patch with variable number of rounds \(N_r\). \textbf{(a)} Logical failure rate per shot versus correlation time \(\xi\) (bottom axis), with the corresponding spectral gap shown on the top axis. \textbf{(b)} The same data plotted versus number of rounds (timelike distance) \(N_r\) at fixed \(\xi\). For short-memory noise, increasing \(N_r\) produces strong exponential suppression of logical failures; as \(\xi\) grows, the curves flatten, showing that temporal correlations erode the timelike protection relevant to lattice surgery-style logical measurements. Dashed lines represent exponential fits, and error bars indicate binomial standard error.}
    \label{fig:temporal-storm-stability}
\end{figure*}

\Cref{fig:temporal-storm-memory,fig:temporal-storm-stability} summarize surface code performance under the temporal storm model as correlation time \(\xi\) is increased at fixed marginal error rates \(\bar{p}^{(x)}\). Across both benchmarks, increasing \(\xi\) systematically degrades logical performance, highlighting that correlation structure in physical noise alone can substantially alter logical behavior.

We first consider the memory experiment. \Cref{fig:temporal-storm-memory}(a) displays the per-round logical error rate \(p_\mathrm{round}\) as a function of \(\xi\) for several code distances. For each \(d\), \(p_\mathrm{round}\) increases monotonically with \(\xi\), with the most pronounced shift occurring as the model crosses from nearly memoryless dynamics (\(\xi = 1\)) into moderate-to-long memory (\(1 < \xi \le 5\)). For larger \(\xi\), the curves tend to level off, consistent with a ``bursty'' noise picture in which memory concentrates faults into rarer but more damaging multi-round clusters. The \(d = 5\) data trends systematically above the higher distance curves and are therefore omitted from the fit in panel~(b). At the baseline error rate of \(p=0.1\%\), the distance-five code provides comparatively weak error suppression even in the low memory regime, so increasing \(\xi\) degrades performance without exhibiting the same clean scaling observed at larger distances.

\Cref{fig:temporal-storm-memory}(b) presents the same data as a performance-versus-distance scaling plot at fixed \(\xi\). The logical error rate remains approximately exponentially suppressed with increasing \(d\), but the decay exponent decreases in magnitude markedly as \(\xi\) grows. In this sense, temporal correlations blunt, though do not eliminate, the benefit of distance scaling, consistent with an effective spacetime distance reduction. From a resource perspective, the qubit overhead required to achieve a given logical error rate is highly sensitive to growth in correlation time, for example, from \(\xi=1\) to \(\xi=2\).

We next turn to the stability experiment, which provides a complementary probe of how temporal structure affects the reliability of surface code computational primitives. \Cref{fig:temporal-storm-stability} plots the logical failure probability per shot as a function of \(N_r\) for several \(\xi\). The same qualitative trend emerges---performance degrades markedly as \(\xi\) increases---but the stability protocol is a particularly direct diagnostic of the impact of temporal structure since logical failure is directly associated with timelike error chains. This is reflected in \cref{fig:temporal-storm-stability}(b), where the suppression of logical failure with increasing \(N_r\) progressively weakens as \(\xi\) grows, indicating that temporal correlations erode the benefit of increasing timelike distances. The close qualitative similarity to the memory results in \cref{fig:temporal-storm-memory} aligns with the space-time dual interpretation of stability experiments. These results also underscore the broader importance of studying correlated noise, since correlations can non-trivially interact with primitives of logical computation even under Pauli-frame randomization.

Throughout, decoding is performed using minimum-weight perfect matching with a detector error model constructed from the independent circuit-level noise, with marginal rates matched to the correlated process. The resulting performance degradation with increasing \(\xi\) therefore reflects, in part, decoder mismatch under standard assumptions, and motivates correlation-aware decoding or mitigation strategies leveraging the SPP framework. We discuss the matter further alongside other applications in \cref{sec:discussion}.

\section{Spatiotemporal noise via a quantum cellular automaton}\label{sec:qcamodel}
While \cref{sec:temporalstormmodel} demonstrated the impact of purely temporal correlation, noise in contemporary quantum hardware frequently exhibits genuinely spatiotemporal structure. Here we introduce a spatiotemporal SPP model derived from a two-dimensional quantum cellular automaton (QCA) environment (bath) coupled locally to the system, as schematically summarized in \cref{fig:qca-overview}. Operational Pauli twirling on the system maps the underlying coherent bath dynamics to an effective SPP that admits an exact description as a probabilistic cellular automaton (PCA) with a nonlinear transition kernel. We show that tuning a single parameter controlling the strength of coherent bath interactions drives the bath into a pseudo-critical regime with macroscopic error avalanches and critical slowing down, producing correlated fault patterns reminiscent of correlated error events reported in solid-state platforms. The model thus provides a microscopically motivated, yet tractably simulable, link between non-equilibrium statistical mechanics and QEC performance under spatiotemporal correlations.

We structure the analysis as follows. In \cref{sec:qca-details,sec:pca-details}, we present the microscopic QCA model, and its mapping, under operational system twirling, to an effective SPP with an exact PCA description. In \cref{sec:qca-stat-mech}, we characterize the resulting PCA as a statistical-mechanical system, isolating how the spatial parameter controlling coherent bath interactions drives a critical-like transition in the bath dynamics. Finally, in \cref{sec:qca-qec-results}, we benchmark rotated surface code memory under this noise model and show a sharp breakdown of distance scaling in the pseudo-critical regime.

\begin{figure*}
    \centering
    \includegraphics[width=0.95\textwidth]{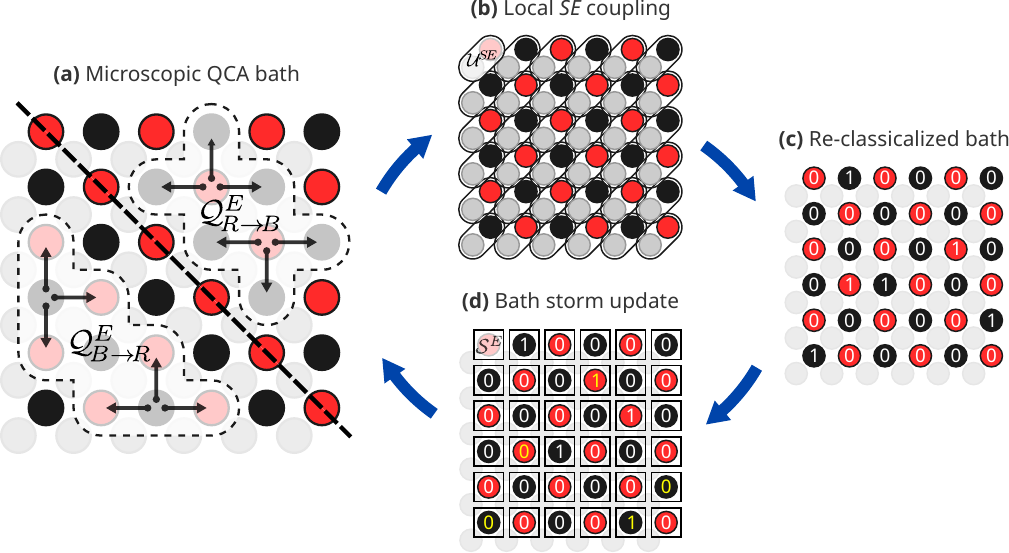}
    \caption{\textbf{QCA-derived spatiotemporal Pauli process.} Schematic of the microscopic quantum cellular automaton (QCA) bath and its effective probabilistic cellular automaton (PCA) description under system Pauli twirling. \textbf{(a)} The environment consists of bath qubits on a checkerboard lattice, with coherent QCA half-steps \(\mathcal{Q}^E_{R\to B}(\theta)\) and \(\mathcal{Q}^E_{B\to R}(\theta)\) propagating excitations between the red and black sublattices. \textbf{(b)} Each bath qubit couples locally to a corresponding system qubit through the controlled interaction \(\mathcal{U}^{SE}\). \textbf{(c)} After the operational system twirl, the interaction induces an effective \(Z\)-basis dephasing on the bath, re-classicalizing the environment into configurations \(s_t(i)\in\{0,1\}\). \textbf{(d)} The bath then undergoes an independent storm update \(\mathcal{S}^E(a, b)\), with local transitions \(0 \to 1\) and \(1 \to 0\) occurring with probabilities \(a\) and \(b\), respectively. Iterating these steps yields a classical PCA hidden Markov model whose active bath sites emit local Pauli faults, generating the spatiotemporally correlated noise used in the surface code simulations.}
    \label{fig:qca-overview}
\end{figure*}

\subsection{Microscopic QCA dynamics}\label{sec:qca-details}
We define the microscopic dynamics of the QCA noise model on a two-dimensional square lattice \(V\). At each site \(i\in V\), a system qubit \(S_i\) is locally coupled to an environment (bath) qubit \(E_i\). We impose a checkerboard bipartition of the lattice \(V = V_\mathrm{red} \cup V_\mathrm{black}\), such that nearest neighbors \(N(i)\) of a red site are strictly black, and vice versa. The environment is initialized as a classical mixture of \(Z\)-basis states, completely uncorrelated with the system.

The joint single-cycle CPTP map, illustrated in \cref{fig:qca-overview}, is defined as the composition of a local environment storm channel (as in \cref{sec:temporalstormmodel}), a two-layer coherent QCA update, and a system-environment interaction. Formally, we define the joint map as
\begin{equation}
    \mathcal{M}_{a, b, \theta} \coloneq \mathcal{U}^{SE} \circ \mathcal{Q}_{B\to R}^E(\theta) \circ \mathcal{Q}_{R\to B}^E(\theta) \circ \mathcal{S}^E (a, b).
\end{equation}
We detail the constituent maps below.
\begin{enumerate}
    \item \textbf{Environment storm channel}, \(\mathcal{S}^E(a, b)\)
    
    Each environment qubit undergoes an independent storm process flipping each qubit from \(\ket{0}\to\ket{1}\) with probability \(a\), and from \(\ket{1}\to\ket{0}\) with probability \(b\). The map for each site \(i\) is
    \begin{equation}
        \mathcal{S}^{(i)}(a, b)[\rho] = \sum_k L_k \rho L_k^\dagger,
    \end{equation}
    with Kraus operators,
    \begin{equation}
        \begin{gathered}
            L_0 = \sqrt{a}\ketbra{1}{0}, 
            \quad L_1 = \sqrt{1-a}\ketbra{0}{0}, 
            \\
            L_2 = \sqrt{b}\ketbra{0}{1}, 
            \quad L_3 = \sqrt{1-b}\ketbra{1}{1}.
        \end{gathered}
    \end{equation}
    The global channel is the product over sites, \(\mathcal{S}^E(a, b) = \bigotimes_{i\in V} \mathcal{S}^{(i)}(a, b)\).

    \item \textbf{QCA half-step (red \(\to\) black)}, \(\mathcal{Q}_{R\to B}^E(\theta)\)

    The QCA update is implemented as two layers of mutually commuting controlled-\(X\) rotations with the red and black sublattices as alternating controls and targets. For a directed environment edge \(i\to j\), define the local controlled-\(X\) rotation
    \begin{equation}
        Q^E_{i\to j}(\theta) = \ketbra{0}{0}^{E_i} \otimes I^{E_j} + \ketbra{1}{1}^{E_i} \otimes e^{-i\theta X^{E_j}}.
    \end{equation}
    The global unitary for the first half-step with red controls and black targets is then
    \begin{equation}
        Q^E_{R\to B}(\theta)
        = 
        \prod_{i\in V_\mathrm{red}} \prod_{j\in N(i)}
        Q^E_{i\to j}(\theta),
    \end{equation}
    and the corresponding channel denoted as
    \begin{equation}
        \mathcal{Q}_{R\to B}^E(\theta)[\rho] = Q^E_{R\to B}(\theta) \rho (Q^E_{R\to B}(\theta))^\dagger.
    \end{equation}

    \item \textbf{QCA half-step (black \(\to\) red)}, \(\mathcal{Q}_{B\to R}^E(\theta)\)
    
    For the second half-step with black controls and red targets, we define the global unitary
    \begin{equation}
        Q^E_{B\to R}(\theta) = \prod_{i\in V_\mathrm{black}} \prod_{j\in N(i)}
        Q^E_{i\to j}(\theta),
    \end{equation}
    with corresponding channel
    \begin{equation}
        \mathcal{Q}_{B\to R}^E(\theta)[\rho] = Q^E_{B\to R}(\theta) \rho (Q^E_{B\to R}(\theta))^\dagger.
    \end{equation}
    Together, these two maps implement a full QCA update on the environment.

    \item \textbf{Environment-system unitary}, \(\mathcal{U}^{SE}\)
    
    Each environment qubit \(E_i\) couples locally to the system qubit \(S_i\) via a controlled unitary interaction. We define
    \begin{equation}\label{eq:qca-se-unitary}
        \begin{gathered}
            U_{SE}^{(i)} = \sum_{k=0}^1 \ketbra{k}{k}^{E_i} \otimes V_k^{S_i}, 
            \quad U_{SE} = \bigotimes_{i \in V} U_{SE}^{(i)},
            \\
            \mathcal{U}^{SE}[\rho] = U_{SE} \rho U_{SE}^\dagger.
        \end{gathered}
    \end{equation}
    The key structural assumption is that the conditional unitaries \(\{V_k\}\) are orthogonal under the Hilbert-Schmidt inner product, 
    \begin{equation}\label{eq:se-unitary-condition}
        V_k V_k^\dagger = I \forall k, \quad \Tr(V_k V_l^\dagger) = 0 \ \forall k \ne l.
    \end{equation}
    Furthermore, in this work, we specialize to the minimal family of unitaries
    \begin{equation}\label{eq:qca-system-unitary}
        V_0 = I, \quad V_1 = \vec{n} \cdot \vec{\sigma} = n_X X + n_Y Y + n_Z Z,
    \end{equation}
    with \(\|\vec{n}\|^2 = 1\) and \(\vec{n} \in \mathbb{R}^3\).
\end{enumerate}
The entire dynamics are thus governed by three parameters: two temporal parameters \((a, b)\) specifying the environment storm channel \(\mathcal{S}^E (a, b)\), and a third spatial parameter \(\theta\) controlling the coherent QCA on the environment.

To obtain the effective SPP, we apply the multi-time Pauli twirl \(\mathcal{T}_P^S\) on \(\mathcal{M}_{a, b, \theta}\) exclusively on the system subspace. Because the maps \(\mathcal{S}^E (a, b)\) and \(\mathcal{Q}_{B\to R}^E(\theta),\) \(\mathcal{Q}_{R\to B}^E(\theta)\) act trivially on the system, the twirl can be `pushed through' to act solely on the interaction \(\mathcal{U}^{SE}\). Defining the twirled interaction as \(\widetilde{\mathcal{U}}^{SE} \coloneq \mathcal{T}_P^S[\mathcal{U}^{SE}]\), we write the effective twirled joint map
\begin{equation}
    \begin{aligned}
        \widetilde{\mathcal{M}}_{a, b, \theta} 
        \coloneq &
        \mathcal{T}_P^S[\mathcal{M}_{a, b, \theta}]
        \\
        = &
        \widetilde{\mathcal{U}}^{SE} \circ \mathcal{Q}_{B\to R}^E(\theta) \circ \mathcal{Q}_{R\to B}^E(\theta) \circ \mathcal{S}^E (a, b).
    \end{aligned}
\end{equation}

The central mechanism of this model is that, for a system-environment interaction of the controlled-unitary form in \cref{eq:qca-se-unitary}, satisfying the orthogonality condition in \cref{eq:se-unitary-condition}, the system twirl induces an effective \(Z\)-basis dephasing on the environment in each cycle. Explicitly, for each site \(i\) the reduced environment update obeys
\begin{equation}\label{eq:reduced-environment-update}
    \Tr_{S_i}(\widetilde{\mathcal{U}}^{SE}[\rho_{S_iE_i}]) = \sum_{k\in\{0,1\}} \ketbra{k}{k}_{E_i} \Tr_{S_i}[\rho_{S_iE_i}] \ketbra{k}{k}_{E_i},
\end{equation}
which removes the off-diagonal components of the environment state in the computational basis. Crucially, we do not impose a dephasing channel on the bath by hand. Rather, this re-classicalization of the environment emerges as a direct consequence of interacting with the operationally twirled system.

This construction yields three key features of the effective dynamics:
\begin{enumerate}
    \item \textbf{Nonlinear transition kernel}: The QCA step reduces to a PCA with a nonlinear transition kernel, with the probability of a target site flipping given by \(p = \sin^2(k\theta)\) as a function of the number of \(k\) excited neighbors.
    \item \textbf{Environment re-classicalization}: The twirled interaction applies an effective \(Z\)-basis dephasing to the environment every cycle, re-classicalizing the bath and collapsing any coherent superpositions.
    \item \textbf{System Pauli noise}: Conditioned on the environment state, the induced system noise is a Pauli channel with relative weights set by the coefficients \(n_X^2, n_Y^2, n_Z^2\) of the conditional unitary \(V_1\).
\end{enumerate}
Altogether, coherent local bath dynamics combined with operational system twirling produce a microscopically derived spatiotemporal noise model that is local and causal by construction, and that admits an exact description as a PCA hidden Markov model (HMM).
Full derivations of the mapping and the properties above are given in \cref{app:qca-pca-mapping}. We now summarize the resulting PCA HMM dynamics.

\subsection{Effective probabilistic cellular automaton}\label{sec:pca-details}
The effective spatiotemporal SPP admits an exact description as a classical HMM with a nonlinear transition kernel, where the nonlinearity is inherited from coherent interactions in the underlying QCA. The latent state is the environment bath configuration, and the emissions are the system Pauli faults.

Let \(s_{t}(i) \in \{0,1\}\) denote the state of environment site \(i \in V\) at the beginning of cycle \(t\). The Pauli emission pattern during cycle \(t\) is denoted by the string \(x_t \in \{x_t(i)\}_{i \in V}\), where each \(x_t(i) \in \{I, X, Y, Z\}\) is sampled conditional on the updated bath configuration \(s_{t+1}(i)\) at the end of the cycle.

Crucially, because the lattice is bipartite, each PCA half-step updates only one sublattice, and the update probability at a site depends only on the (fixed) configuration of the opposite sublattice. Consequently, conditioned on the opposite sublattice, updates are independent across sites within the active sublattice. This eliminates update collisions and renders the model highly parallelizable.

\begin{enumerate}
    \item[0.] \textbf{Environment initialization}: Choose an initial distribution \(\pi_0\) over configurations \(s_0\in\{0,1\}^{|V|}\) and sample \(s_{0}\sim\pi_0\).

    \item \textbf{Storm update}: Given \(s_{t}\), generate the intermediate configuration \(s_{t'}\) by updating each site \(i\in V\) independently as
    \begin{itemize}
        \item if \(s_{t}(i) = 0\), set \(s_{t'}(i) \gets 1\) with probability \(a\);
        \item if \(s_{t}(i) = 1\), set \(s_{t'}(i) \gets 0\) with probability \(b\);
        \item else set \(s_{t'}(i) \gets s_{t}(i)\).
    \end{itemize}
    \item \textbf{PCA half-step (black)}: Generate the next intermediate state \(s_{t''}\) by updating each black site \(i\in V_\mathrm{black}\), conditioned on the red configuration as
    \begin{itemize}
        \item for each \(i \in V_\mathrm{black}\), compute the number of excited red neighbors \(k(i) = \sum_{j \in N(i)} s_{t'}(j)\);
        \item set \(s_{t''}(i) \gets s_{t'}(i) \oplus 1\) with probability \(\sin^2(k(i)\theta)\), else set \(s_{t''}(i) \gets s_{t'}(i)\);
        \item for each \(i \in V_\mathrm{red}\), set \(s_{t''}(i) \gets s_{t'}(i)\).
    \end{itemize}
    \item \textbf{PCA half-step (red)}: Complete the PCA step to generate \(s_{t+1}\) as
    \begin{itemize}
        \item for each \(i \in V_\mathrm{red}\), compute the number of excited black neighbors \(k(i) = \sum_{j \in N(i)} s_{t''}(j)\);
        \item set \(s_{t+1}(i) \gets s_{t''}(i) \oplus 1\) with probability \(\sin^2(k(i)\theta)\), else set \(s_{t+1}(i) \gets s_{t''}(i)\);
        \item for each \(i \in V_\mathrm{black}\), set \(s_{t+1}(i) \gets s_{t''}(i)\).
    \end{itemize}
    \item \textbf{Pauli emission}: For each site \(i\in V\), sample the system Pauli emissions \(x_t\) as
    \begin{itemize}
        \item if \(s_{t+1}(i) = 0\), sample \(x_t(i)=I\);
        \item if \(s_{t+1}(i) = 1\), sample \(x_t(i) \in \{X, Y, Z\}\) with probabilities \(n_X^2, n_Y^2, n_Z^2\), respectively.
    \end{itemize}
    The resulting Pauli string \(x_t = \{x_t(i)\}_{i\in V}\) is the noise induced on the system during cycle \(t\).
    \item \textbf{Repeat}: Increment \(t\), and repeat steps 1-4 for as many cycles as required.
\end{enumerate}

The PCA describes a driven classical spin system with a nonlinear update rule. The storm process continuously injects and removes excitations locally, while the QCA-derived PCA spreads excitations across the lattice in a neighbor-dependent manner parameterized by \(\theta\). Importantly, the nonlinear transition kernel \(p = \sin^2(k\theta)\) is a uniquely quantum footprint of the coherent bath effects. The competition between these stochastic perturbations and nonlinear spatial propagations gives rise to non-trivial steady states and spatiotemporal correlation structures, which we evaluate next.

\subsection{Criticality and spatiotemporal correlations}\label{sec:qca-stat-mech}
Before studying the noise properties of this model for QEC, we first analyze the PCA dynamics from \cref{sec:pca-details} as a statistical-mechanical system. In particular, we isolate the effect of the spatial parameter \(\theta\), which controls the strength of the nonlinear PCA update via kernel \(p = \sin^2(k\theta)\), on the emergent correlation structure and steady-state properties of the system.

Throughout, we fix the storm parameters to a low spontaneous injection rate \(a=10^{-4}\) and a fast local relaxation rate \(b=0.5\), and sweep the controlled-rotation angle \(\theta \in [0, \pi]\). Pauli emission weights are taken to be uniform (unbiased) with \(n_X = n_Y = n_Z = 1/\sqrt{3}\). We consider lattices corresponding to rotated surface code layouts for distances \(d=9, 11, 13, 15\), with the number of sites (qubits) \(N_d = |V| = 2d^2 - 1\). To quantify the excitation statistics in the environment, we define the global density at time step \(t\) as
\begin{equation}
    \eta_t \coloneq \frac{1}{N_d}\sum_{i\in V} s_t(i),
\end{equation}
and compute the statistics of the time series \(\{\eta_t\}\). We simulate the bath dynamics up to \(t=10^6\) cycles, recording statistics only after a burn-in period of \(2 \times 10^5\) cycles. This procedure is repeated for 10 independent trajectories.

\begin{figure*}[t]
    \centering
    \includegraphics[width=0.95\textwidth]{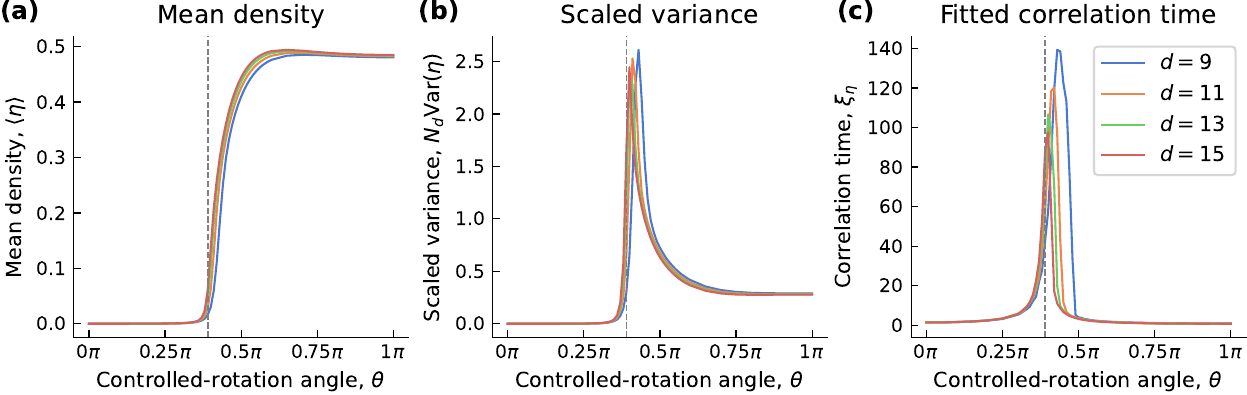}
    \caption{\textbf{Pseudo-criticality and critical slowing down in the QCA-induced PCA.}
    We fix the storm parameters to \(a=10^{-4}\) and \(b=0.5\), and sweep the spatial coupling parameter \(\theta\), which controls the strength of coherent bath interactions. Data are shown for lattices corresponding to rotated surface code patches of distances \(d=9, 11, 13, 15\) (number of sites \(N_d = 2 d^2 - 1\)). As a function of \(\theta\), we plot \textbf{(a)} the mean activity (spin) density \(\langle \eta_t\rangle_t\), where \(\eta_t \coloneq N_d^{-1}\sum_{i\in V} s_t(i)\); \textbf{(b)} the scaled variance \(N_d\,\mathrm{Var}(\eta_t)\); and \textbf{(c)} the correlation time \(\xi_\eta\) extracted by fitting the normalized autocorrelation \(C(\tau)\propto e^{-\tau/\xi_\eta}\). The vertical dashed line marks the estimated pseudo-critical threshold \(\theta_{\mathrm{th}}\approx 0.39\pi\). Taken together, the sharp density crossover, fluctuation peak, and correlation time spike identify a narrow pseudo-critical window in which bath fluctuations become both macroscopic and long-lived; their modest size-dependent sharpening and shift are consistent with finite-size critical behavior.}
    \label{fig:qca-statistical}
\end{figure*}

\Cref{fig:qca-statistical} summarizes three complementary diagnostics as a function of \(\theta\): the mean density \(\langle\eta\rangle\), the scaled variance \(N_d \mathrm{Var}(\eta)\), and a fitted correlation time \(\xi_\eta\) extracted from the autocorrelations of \(\eta_t\).

\Cref{fig:qca-statistical}(a) shows the mean density \(\langle\eta\rangle = \langle\eta_t\rangle_t\). For small \(\theta\), the PCA step is weak, and the dynamics are dominated by the storm process. Due to the strong relaxation \(b=0.5\), the bath remains in an effectively inactive or calm phase with a low background density \(\langle\eta\rangle \approx 2 \times 10^{-4}\). As \(\theta\) increases, the density undergoes a sharp, continuous crossover beginning around \(\theta \approx 0.35\pi\). The neighbor-dependent spreading outpaces the local decay, driving the bath into an active steady state that saturates near \(\langle\eta\rangle \approx 0.5\).

To probe the density fluctuations, we compute the scaled variance \(N_d \mathrm{Var}(\eta)\), with \(\mathrm{Var}(\eta) = \langle\eta_t^2\rangle - \langle\eta_t\rangle^2\), reported in \cref{fig:qca-statistical}(b). This quantity is analogous to susceptibility in equilibrium statistical mechanics. We observe a pronounced, narrow peak at \(\theta \approx 0.4\pi\), aligned with the sharp rise in mean density. This indicates a massive enhancement of macroscopic fluctuations, where in this near-critical regime, small local clusters of errors can spontaneously grow into system-wide avalanches before dying out. Over the evaluated system sizes, the peak exhibits a slight sharpening and shifting characteristic of finite-size scaling near a critical point.

Finally, to quantify memory in the bath dynamics, we compute the normalized autocorrelation function of the density time series,
\begin{equation}
    C(\tau) = \frac{\braket{(\eta_t - \braket{\eta})(\eta_{t + \tau} - \braket{\eta})}_t}{\mathrm{Var}(\eta)},
\end{equation}
and fit it to an exponential decay \(C(\tau) \propto e^{-\tau/\xi_\eta}\) to extract the effective correlation time. As shown in \cref{fig:qca-statistical}(c), \(\xi_\eta\) spikes dramatically near the transition threshold, reaching \(\xi_\eta \approx 140\) cycles for \(d=9\), consistent with a critical slowing down. Deep in either the calm or fully active phases, the bath relaxes rapidly (\(\xi_\eta \lesssim 3\)), corresponding to nearly Markovian dynamics. However, in the crossover regime, the competition between coherent spreading and stochastic decay leads to the emergence of macroscopic temporal memory.

In summary, tuning a single spatial parameter \(\theta\) drives the environment through a critical-like transition from a calm, weakly correlated background into a dense, active phase. The intermediate critical window maximizes both density fluctuations and temporal memory. We demonstrate below that this regime is highly detrimental for QEC, where enhanced susceptibility drives spatially correlated error avalanches, while critical slowing down causes these cascades to persist across multiple QEC cycles. We next evaluate surface code performance under these dynamics reusing the same model parameters.

\subsection{Surface code breakdown near pseudo-criticality}\label{sec:qca-qec-results}
We now evaluate surface code memory performance under the QCA noise model. As in 
\cref{sec:qca-stat-mech}, we fix the storm parameters to \(a=10^{-4}\) and \(b=0.5\), and sweep the controlled-rotation angle \(\theta\).

\begin{figure*}[t]
    \centering
    \includegraphics[width=0.95\textwidth]{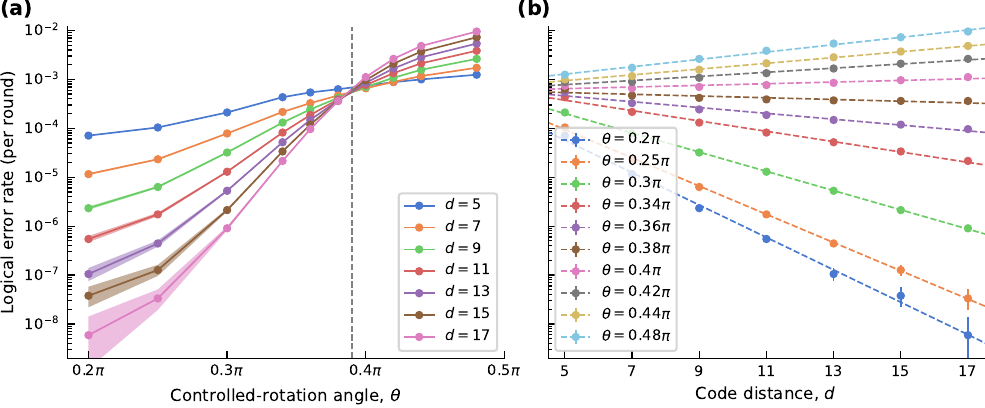}
    \caption{\textbf{Surface code breakdown near pseudo-critical spatiotemporal noise.} We fix \(a=10^{-4}\) and \(b=0.5\), and sweep the spatial coupling parameter \(\theta\) controlling the strength of coherent bath interactions at a baseline circuit-level noise rate of \(0.1\%\). Distances \(d\) are simulated for \(N_r=3d\) rounds. \textbf{(a)} Effective per-round logical error rate versus \(\theta\) for various \(d\); the vertical dashed line marks the pseudo-critical threshold \(\theta_{\mathrm{th}}\approx 0.39\pi\). \textbf{(b)} The same data versus distance \(d\) at fixed \(\theta\); dashed lines are exponential fits. Below the pseudo-critical window, logical errors remain exponentially suppressed with distance; near \(\theta_{\mathrm{th}}\) this suppression collapses, and above it the scaling reverses so that larger codes perform worse. The onset of this breakdown coincides with the enhanced fluctuations and critical slowing down observed in \cref{fig:qca-statistical}. Error bars, where visible, indicate binomial standard errors.}
    \label{fig:qca-qec-results}
\end{figure*}

The simulation protocol mirrors the surface code memory benchmark of \cref{sec:temporalstormmodel}. We simulate rotated surface code circuits of distance \(d\) for \(N_r = 3d\) rounds of syndrome extraction, applying the same global circuit-level baseline noise model at a characteristic rate \(0.1\%\). Logical failure rates are estimated via Monte Carlo sampling using Stim and decoding with minimum-weight perfect matching (MWPM) via PyMatching.

Inter-round correlated faults are incorporated by composing, at the start of each round, an additional Pauli fault sampled from the QCA-derived SPP, driven by the bath dynamics parameters \((a,b,\theta)\). A key distinction from the temporal storm sweep is that the spatially interacting PCA bath does not admit a simple closed-form expression for marginal Pauli error rates. Accordingly, we numerically estimate the required marginal error rate for each \(\theta\) to construct a marginalized detector error model for the MWPM decoder. Additionally, the bath is initialized in the all-zeros configuration for each shot.

\Cref{fig:qca-qec-results} details the logical error rate per round. In the calm regime (\(\theta \lesssim 0.34\pi\)), logical performance improves rapidly with increasing code distance. This behavior is consistent with dilute bath excitations and weak spatiotemporal structure, where the surface code remains effective at error suppression as previously explored~\cite{fowlerQuantifyingEffectsLocal2014}. As \(\theta\) enters the pseudo-critical window (\(\theta \approx 0.36\pi\)–\(0.40\pi\)), logical performance degrades sharply, and increasing code distance no longer yields reliable exponential suppression. Above a threshold \(\theta_{\mathrm{th}} \approx 0.39\pi\), the distance scaling trend entirely reverses, with larger codes performing worse than smaller ones. The scaling plot in \cref{fig:qca-qec-results}(b) makes this breakdown explicit.

These trends align directly with the statistical-mechanical diagnostics from \cref{sec:qca-stat-mech}. The same \(\theta\) window that strongly degrades QEC performance coincides with the regime that maximizes density fluctuations and exhibits critical slowing down. In this regime, enhanced susceptibility corresponds to rare but large spatiotemporal `avalanches'---spacetime-cone-like error clusters that can grow into system-wide cascades before dying out. Qualitatively similar error burst mechanisms have been reported in recent experiments on superconducting qubits, for example due to cosmic rays~\cite{mcewenResolvingCatastrophicError2022,liCosmicrayinducedCorrelatedErrors2025,harringtonSynchronousDetectionCosmic2025} or leakage events~\cite{mcewenRemovingLeakageinducedCorrelated2021,heExperimentalQuantumError2025}.

Taken together, these results establish a concrete mechanism by which coherence in local microscopic bath dynamics can manifest, after operational twirling, as tunable non-equilibrium criticality that degrades and ultimately defeats QEC. Earlier analytic work anticipated failures of this general kind, showing that correlated environments---particularly common bosonic baths---can degrade or destroy error correction~\cite{klesseQuantumErrorCorrection2005,novaisDecoherenceCorrelatedNoise2006,novaisSurfaceCodeThreshold2013,hutterBreakdownSurfacecodeError2014}, and deriving threshold theorems that bound the noise correlations fault tolerance can provably survive~\cite{terhalFaulttolerantQuantumComputation2005,aharonovFaultTolerantQuantumComputation2006,ngFaulttolerantQuantumComputation2009,preskillSufficientConditionNoise2013}. The scenario realized here is complementary and explicitly dynamical: criticality emerges in the bath's own non-equilibrium dynamics and is inherited by the twirled Pauli process, rather than being imposed through an assumed noise model or bath structure. In this sense, SPPs offer a route to extending these analyses beyond the reach of analytic bounds: microscopically derived correlated noise can be tractably sampled and confronted directly with circuit-level QEC performance at experimentally relevant scales.

We emphasize that MWPM with a marginalized detector model does not exploit correlations. Correlation-aware decoders, or approaches utilizing reweighting and model learning, may partially recover performance, but likely incur significant complexity overheads. In this sense, the QCA model offers an efficiently simulable stress test. It is derived from microscopic dynamics, exhibits nontrivial spatiotemporal correlations, and induces a breakdown of standard QEC protocols. We now conclude by discussing future directions and extensions unlocked by the SPP framework.

\section{Discussion}\label{sec:discussion}
In this work, we introduced \emph{spatiotemporal Pauli processes} (SPPs)---the natural multi-time generalization of Pauli channels---as a constructive bridge between microscopic non-Markovian device dynamics and the stochastic Pauli fault models that constitute the effective language of scalable stabilizer QEC. Under the multi-time Pauli twirl, as induced by Pauli-frame randomization, any process tensor is projected to a process-separable Pauli comb, equivalently a joint distribution over Pauli fault trajectories in spacetime. Crucially, this projection is implemented by fixed local contractions on process tensor networks, producing classical tensor networks whose temporal bond dimensions are bounded by the environment's Liouville dimension, and which support transfer operator diagnostics, hidden Markov realizations, and efficient generative sampling. Deployed in circuit-level surface code memory and stability benchmarks, the framework first isolates the effect of temporal memory alone through a controlled storm model at fixed marginal error rates. More notably, it identifies a microscopic route from near-critical bath dynamics to QEC failure: a coherent QCA bath whose twirled dynamics induce a nonlinear probabilistic cellular automaton, with a pseudo-critical regime marked by critical slowing down, macroscopic error avalanches, and a reversal of surface code distance scaling.

The remainder of this section outlines the applications and open directions this framework enables, before returning to the scope and caveats of our results. An immediate application is to model the correlated error mechanisms already documented in hardware. Leakage-induced correlations~\cite{mcewenRemovingLeakageinducedCorrelated2021}, cosmic ray and radiation-induced bursts~\cite{mcewenResolvingCatastrophicError2022,harringtonSynchronousDetectionCosmic2025,bratrudMeasurementCorrelatedCharge2025}, and slow parameter drift are naturally expressed as SPPs with appropriate latent dynamics, upgrading hand-inserted burst injections to microscopically parameterized generative models. In the same spirit, SPPs supply a unified language in which the broader landscape of correlated noise descriptions---from early open-systems and threshold analyses of QEC in correlated environments~\cite{alickiInternalConsistencyFaulttolerant2006,klesseQuantumErrorCorrection2005,novaisSurfaceCodeThreshold2013,hutterBreakdownSurfacecodeError2014,terhalFaulttolerantQuantumComputation2005,ngFaulttolerantQuantumComputation2009,preskillSufficientConditionNoise2013} to process tensor treatments of QEC~\cite{tanggaraStrategicCodeUnified2024,kobayashiTensornetworkDecodersProcess2024} and phenomenological Pauli extensions~\cite{claderImpactCorrelationsHeavy2021,fkamDetrimentalNonMarkovianErrors2025}---can be recast, compared, and extended on equal footing.

A first practical challenge is to characterize these effective processes on real devices. At the QEC-cycle level, SPPs could be learned from syndrome streams, extending methods for reconstructing phenomenological Pauli noise from syndrome data~\cite{wagnerOptimalNoiseEstimation2021,wagnerPauliChannelsCan2022}. At the circuit level, scalable Pauli noise learning protocols~\cite{erhardCharacterizingLargescaleQuantum2019,flammiaEfficientEstimationPauli2020,harperLearningCorrelatedNoise2023,hockingsScalableNoiseCharacterization2025} could be extended to explicitly multi-time processes. For smaller systems, process tensor tomography~\cite{whiteNonMarkovianQuantumProcess2022,whiteUnifyingNonMarkovianCharacterization2025} provides a complementary route from non-Markovian characterization to QEC-compatible Pauli models. Once learned or derived, SPPs support interpretable diagnostics including transfer spectra, correlation times, and burst statistics. One could also define a \emph{logical SPP} for temporally correlated logical Pauli errors, connecting microscopic noise correlations to emergent logical non-Markovianity~\cite{ziyadEmergentNonMarkovianityLogical2026}.

Several open questions remain concerning the algorithmic details of compressing, contracting, learning, and sampling from SPP tensor networks, particularly in higher-dimensional settings. One avenue is to adapt existing protocols, including boundary-MPS methods~\cite{jordanClassicalSimulationInfinitesize2008,vanderstraetenVariationalMethodsContracting2022}, tensor renormalization~\cite{levinTensorRenormalizationGroup2007,evenblyAlgorithmsEntanglementRenormalization2009}, and compressed contraction~\cite{grayHyperoptimizedApproximateContraction2024}, or to integrate the framework with importance sampling~\cite{bravyiSimulationRareEvents2013,myersSimulatingGeneralNoise2025}. SPP tensor networks can also be connected to factor graph representations of correlated Pauli noise~\cite{flammiaEfficientEstimationPauli2020}, providing a route to extend graphical-model approaches to multi-time processes. Related questions arise for SPP HMM realizations. Natural extensions include determining when compact exact or approximate HMM representations can be constructed, and linking the scaling and dynamics of the latent state space to the physical correlations they generate. This connects SPPs to the hidden Markov process and latent-state literature~\cite{leonarde.baumMaximizationTechniqueOccurring1970,vidyasagarCompleteRealizationProblem2011}.

The SPP framework enables these capabilities to feed directly into QEC protocol design. A learned or derived SPP can serve as a prior over spacetime Pauli fault trajectories conditioned on the observed syndrome, enabling correlation-aware decoding beyond marginalized detector models. Depending on the structure of the process, such priors could be incorporated through reweighted MWPM, HMM filtering, belief propagation, or neural decoders~\cite{nickersonAnalysingCorrelatedNoise2019,sivakOptimizationDecoderPriors2024,nayakIterativeDecodingStabilizer2026,bauschLearningHighaccuracyError2024,gicevFullyConvolutional3D2025}. Beyond decoding, SPPs can act as generative test beds for tailoring codes, syndrome extraction circuits, and mitigation strategies to hardware-specific spatiotemporal correlations~\cite{tanggaraStrategicCodeUnified2024,mcewenRelaxingHardwareRequirements2023,debroyLUCISurfaceCode2025}.

Finally, we delineate the scope of the framework and of our numerical results. As with ordinary Pauli channels, SPPs should be interpreted as effective stochastic noise models defined at a chosen operational level. This level may be set by Pauli-frame randomization, randomized compiling, stabilizer measurements, or simply by a Pauli approximation to the relevant dynamics. Hardware-specific limitations of these procedures, such as imperfect randomization or device-dependent control constraints, are not addressed here. Two further caveats delimit our numerical results. Decoding throughout uses minimum-weight perfect matching with marginalized detector error models (current standard practice). The reported degradation may therefore conflate two effects: decoder mismatch under independence assumptions, which the correlation-aware decoding discussed above could partially recover, and a genuine loss of correctability intrinsic to the correlated noise itself. We expect the intrinsic component to dominate in the pseudo-critical regime, where avalanches on the scale of the code distance defeat any decoding strategy at fixed marginal rates; quantifying the recoverable fraction more generally is an open problem for which SPPs provide a natural setting. Likewise, we characterize the QCA-induced transition phenomenologically, leaving its statistical-mechanical classification to future work.

Ultimately, the promise of fault tolerance rests on the premise that once physical error rates are pushed below threshold, scaling does the rest. Our results sharpen the caveat: below-threshold average error rates alone do not guarantee scalable error suppression, because correlations can blunt or even reverse the benefit of code distance. As devices enter the regime where rare correlated events dominate logical failure, frameworks that connect device physics to logical performance will be essential for diagnosing these failure modes, decoding around them, and ultimately engineering them away. SPPs provide such a bridge: a practical, operationally grounded route from microscopic correlated dynamics to circuit-level QEC.

\section*{Data and code availability}
The data and source code used to generate the results presented in this study are openly available at \url{https://github.com/jkfids/corrqec2}. The stability experiment circuit construction is adapted from the source code accompanying Ref.~\cite{gidneyStabilityExperimentsOverlooked2022}, archived at Ref.~\cite{gidneyDataStabilityExperiments2022} and licensed under CC BY 4.0.

\begin{acknowledgments}
This research was supported by the Commonwealth through an Australian Government Research Training Program Scholarship [DOI: \url{https://doi.org/10.82133/C42F-K220}]. This research is supported by the Ministry of Education, Singapore, under its Academic Research Fund (AcRF) Tier 1 grant, and funded through the SUTD Kickstarter Initiative (SKI 2021\_07\_02). This research is supported by the National Research Foundation, Singapore's Centre for Quantum Technologies grant CQT\_SUTD\_2025\_01. John F. Kam is supported by a CSIRO Future Science Platform Scholarship and the Agency for Science, Technology and Research (A*STAR) Research Attachment Programme (ARAP) at the Quantum Innovation Centre (Q.InC).
\end{acknowledgments}

\bibliographystyle{apsrev4-2}
\bibliography{references}

\appendix
\crefalias{section}{appendix}
\crefalias{subsection}{appendix}

\onecolumngrid

\clearpage
\section{Dictionary of notations}\label{app:notations}
\newcommand{\paddedvalignbox}{\adjustbox{valign=c,padding=0 4.6pt}}

\begingroup
\setlength{\abovecaptionskip}{0pt}
\captionof{table}{\textbf{Translation table of common quantum objects and operations.} Equivalent notation of quantum objects and operations in Dirac notation, abstract index notation, and graphical (tensor network) calculus.}
\label{tab:notations}
\centering
    \begin{tabular*}{\textwidth}{@{\extracolsep{\fill}}@{\hspace{20pt}}ccc@{\hspace{20pt}}}
        \toprule
        Dirac notation & Abstract index notation & Graphical calculus \\

        \midrule

        \makecell{State vector (ket)\\ \(\ket{\psi}\)} & \(\psi^a\) & \paddedvalignbox{\includegraphics[]{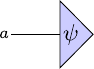}} \\

        \makecell{Dual vector (bra)\\\(\bra{\psi}\)} & \(\psi_a\) & \paddedvalignbox{\includegraphics[]{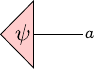}} \\

        \makecell{Inner product\\\(\braket{\phi | \psi}\)} & \(\phi_a\psi^a\) & \paddedvalignbox{\includegraphics[]{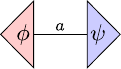}} \\

        \makecell{Outer product\\\(\ketbra{\psi}{\phi}\)} & \(\psi^a\phi_{a'}\) & \paddedvalignbox{\includegraphics[]{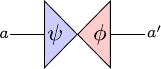}} \\

        \makecell{Operator\\\(A\)} & \(A^a{}_b\) & \paddedvalignbox{\includegraphics[]{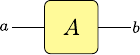}} \\

        \makecell{Multipartite operator\\\(A\)} & \(A^{a_0\dots a_{n-1}}{}_{b_0\dots b_{n-1}}\) & \paddedvalignbox{\includegraphics[]{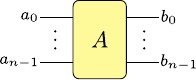}} \\

        \makecell{Composition\\\(A \circ B\)} & \(A^a{}_iB^i{}_b\) & \paddedvalignbox{\includegraphics[]{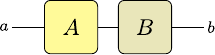}} \\

        \makecell{Tensor product\\\(A\otimes B\)} & \(A^a{}_bB^c{}_d\) & \paddedvalignbox{\includegraphics[]{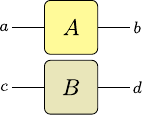}} \\

        \makecell{Identity \\ \(\mathbb{I}\)} & \(\delta^a{}_b\) & \paddedvalignbox{\includegraphics[]{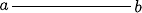}} \\

        \makecell{SWAP} & \(\delta^{a}{}_{d} \delta^{c}{}_{b}\) & \paddedvalignbox{\includegraphics[]{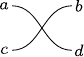}} \\

        \makecell{Unnormalized Bell state \\\(\ket{\Phi^+} = \dket{\mathbb{I}}\)} & \(\delta^{ab}\) & \paddedvalignbox{\includegraphics[]{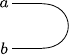}} \\

        \makecell{Vectorized operator\\\(\dket{\rho}\)} & \(\rho^{a'a} = \rho^a{}_i\delta^{a'i}\) & \paddedvalignbox{\includegraphics[]{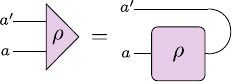}} \\

        \makecell{Trace\\\(\Tr{(A)}\)} & \(A^i{}_i\) & \paddedvalignbox{\includegraphics[]{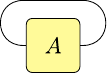}} \\


        \makecell{Transpose\\\(A^T\)} & \(A^a{}_b \to A^{b}{}_{a}\) & \paddedvalignbox{\includegraphics[]{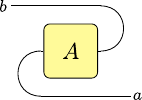}} \\

        \bottomrule
    \end{tabular*}
\endgroup

\clearpage

\section{Proof of process-separability of SPPs}\label{app:proof-pt-to-spp}
Here, we provide the proof of \cref{thm:pt-to-spp}.
\begin{proof}
    The dynamics of a \(k\)-slot process tensor are described by its Choi operator \(\Upsilon_{0:k}\in\mathcal{B}(\mathcal{H}_{\Upsilon_{0:k}})\), where \(\mathcal{H}_{\Upsilon_{0:k}}:=\bigotimes_{j=0}^k(\mathcal{H}_{\inp_j}\otimes\mathcal{H}_{\out_j})\). We expand \(\Upsilon_{0:k}\) in the Pauli basis on each input-output pair as
    \begin{equation}\label{eq:pauli-expansion}
        \Upsilon_{0:k} = \sum_{\vec{\mu},\vec{\nu}} \chi_{\vec{\mu},\vec{\nu}} \bigotimes_{j=0}^k (P_{\mu_j}\otimes P_{\nu_j}),
    \end{equation}
    denoting \(\vec{\mu} = (\mu_0,\mu_1,\cdots,\mu_k)\) and \(\vec{\nu} = (\nu_0,\nu_1,\cdots,\nu_k)\). Here the Pauli operators \(P_{\mu_j}\) and \(P_{\nu_j}\) act on \(\mathcal{H}_{\inp_j}\) and \(\mathcal{H}_{\out_j}\), respectively. Up to a convention-dependent normalization factor, \(\chi_{\vec{\mu},\vec{\nu}}\) are real coefficients given by
    \begin{equation}
        \chi_{\vec{\mu},\vec{\nu}} = \Tr\left[P_{\vec{\mu},\vec{\nu}}\Upsilon_{0:k}\right], 
        \qquad
        P_{\vec{\mu},\vec{\nu}} \coloneq \bigotimes_{j=0}^k (P_{\mu_j}\otimes P_{\nu_j}).
    \end{equation}
    The multi-time Pauli twirl \(\mathcal{T}_P^{(k)}\) is defined in \cref{def:multi-time-twirl}, which we restate here,
    \begin{equation}
        \Upsilon_{0:k}^{\mathcal{T}_P} := \mathcal{T}_P^{(k)}(\Upsilon_{0:k}) = \frac{1}{|\mathbb{P}^{(n)}|^{k+1}} \sum_{Q_0,\dots,Q_k\in\mathbb{P}^{(n)}} \bigg(\bigotimes_{j=0}^k Q_j\otimes Q_j\bigg) \Upsilon_{0:k} \bigg(\bigotimes_{j=0}^k Q_j\otimes Q_j\bigg).
    \end{equation}
    Substituting the Pauli expansion from \cref{eq:pauli-expansion} yields
    \begin{equation}\label{eq:pauli-sub}
        \Upsilon_{0:k}^{\mathcal{T}_P} = \frac{1}{|\mathbb{P}^{(n)}|^{k+1}} \sum_{Q_0,\dots,Q_k\in\mathbb{P}^{(n)}} \sum_{\vec{\mu},\vec{\nu}} \chi_{\vec{\mu},\vec{\nu}} \bigotimes_{j=0}^k (Q_j P_{\mu_j} Q_j \otimes Q_j P_{\nu_j} Q_j).
    \end{equation}
    Using the Pauli conjugation averaging identity,
    \begin{equation}
        \frac{1}{|\mathbb{P}^{(n)}|}\sum_{Q\in\mathbb{P}^{(n)}} (Q P_{\mu} Q) \otimes (Q P_{\nu} Q) = \delta_{\mu,\nu} (P_{\mu} \otimes P_{\nu}),
    \end{equation}
    all cross terms in \(\chi_{\vec{\mu},\vec{\nu}}\) with \(\vec{\mu} \neq \vec{\nu}\) in \cref{eq:pauli-sub} vanish, simplifying the expression to
    \begin{equation}
        \Upsilon_{0:k}^{\mathcal{T}_P} = \sum_{\vec{\xi}} \tilde{\chi}_{\vec{\xi}} \bigotimes_{j=0}^k (P_{\xi_j} \otimes P_{\xi_j}),
    \end{equation}
    where we have relabelled the remaining coefficients as \(\tilde{\chi}_{\vec{\xi}} = \chi_{\vec{\xi},\vec{\xi}}\), \(\vec{\xi} = (\xi_0, \xi_1, \dots, \xi_k)\).

    To prove the absence of quantum temporal correlations, we must show that \(\Upsilon_{0:k}^{\mathcal{T}_P}\) is separable across the bipartitions between distinct time steps. The key insight is that the Bell basis states, \(\ket{\beta_i} \in \{\ket{\Phi^+}, \ket{\Phi^-}, \ket{\Psi^+}, \ket{\Psi^-}\}\) (for a single-qubit example), are simultaneous eigenvectors of operators of the form \(P_\xi \otimes P_\xi\). Consequently, \(\Upsilon_{0:k}^{\mathcal{T}_P}\) is diagonal in the Bell product basis \(\ket{\mathcal{B}_{\vec{\alpha}}}=\bigotimes_{j=0}^k \ket{\beta_{\alpha_j}}\). Explicitly, we perform the change of basis
    \begin{equation}
        \Upsilon_{0:k}^{\mathcal{T}_P} 
        =
        \sum_{{\vec{\alpha}}{\vec{\beta}}} \lambda_{\vec{\alpha},\vec{\beta}} \ketbra{\mathcal{B}_{\vec{\alpha}}}{\mathcal{B}_{\vec{\beta}}}, 
        \qquad
        \lambda_{\vec{\alpha},\vec{\beta}} 
        \coloneq 
        \bra{\mathcal{B}_{\vec{\alpha}}} 
        \sum_{\vec{\xi}} \tilde{\chi}_{\vec{\xi}} \bigotimes_{j=0}^k (P_{\xi_j} \otimes P_{\xi_j})
        \ket{\mathcal{B}_{\vec{\beta}}},
    \end{equation}
    and since \(\ket{\mathcal{B}_{\vec{\alpha}}}\) are eigenvectors of \(\bigotimes_{j=0}^k (P_{\xi_j} \otimes P_{\xi_j})\), we have \(\lambda_{\vec{\alpha},\vec{\beta}} = 0\) for all \(\vec{\alpha} \neq \vec{\beta}\). Hence, we may write
    \begin{equation}\label{eq:bell-basis-decomposition}
        \Upsilon_{0:k}^{\mathcal{T}_P} 
        =
        \sum_{\vec{\alpha}} \lambda_{\vec{\alpha}} \ketbra{\mathcal{B}_{\vec{\alpha}}}{\mathcal{B}_{\vec{\alpha}}},
        \qquad
        \lambda_{\vec{\alpha}} 
        \coloneq
        \bra{\mathcal{B}_{\vec{\alpha}}} 
        \sum_{\vec{\xi}} \tilde{\chi}_{\vec{\xi}} \bigotimes_{j=0}^k (P_{\xi_j} \otimes P_{\xi_j})
        \ket{\mathcal{B}_{\vec{\alpha}}}.
    \end{equation}

    Because \(\Upsilon_{0:k}^{\mathcal{T}_P}\) is positive semidefinite and the vectors \(\ket{\mathcal{B}_{\vec{\alpha}}}\) form an eigenbasis, the corresponding eigenvalues \(\lambda_{\vec{\alpha}}\) in \cref{eq:bell-basis-decomposition} are non-negative. Moreover, since \(\Upsilon_{0:k}^{\mathcal{T}_P}\) is diagonal in the Bell product basis, its trace is the sum of these eigenvalues,
    \begin{equation}
        \sum_{\vec{\alpha}} \lambda_{\vec{\alpha}} = \Tr[\Upsilon_{0:k}^{\mathcal{T}_P}] = \Tr[\Upsilon_{0:k}],
    \end{equation}
    where the equality follows from the trace preservation of the twirl. Hence, \(\Upsilon_{0:k}^{\mathcal{T}_P}\) is manifestly process-separable (cf. \cref{eq:process-separable}), and the normalized weights \(p_{\vec{\alpha}} = \lambda_{\vec{\alpha}} / \Tr[\Upsilon_{0:k}]\) define a probability distribution as in \cref{eq:spp-probabilities}.

    Finally, the coefficients \(\lambda_{\vec{\alpha}}\) can be related to the Pauli decomposition coefficients \(\tilde{\chi}_{\vec{\xi}}\) via a Walsh-Hadamard transform,
    \begin{equation}
        \lambda_{\vec{\alpha}} = \sum_{\vec{\xi}} \tilde{\chi}_{\vec{\xi}} \prod_{j=0}^k M_{\alpha_j\xi_j},
         \qquad
         M_{\alpha\xi} \coloneq \bra{\beta_\alpha}(P_\xi\otimes P_\xi)\ket{\beta_\alpha}.
    \end{equation}
\end{proof}

\section{Derivation of QCA to PCA HMM mapping}\label{app:qca-pca-mapping}
In this appendix, we derive the effective probabilistic cellular automaton (PCA) HMM from the microscopic quantum cellular automaton (QCA) noise model under the system Pauli twirl. We first summarize the setting and assumptions in \cref{app:assumptions}, then derive the effective environment dephasing in \cref{app:env-dephasing}, the PCA kernel in \cref{app:pca-kernel}, and the system Pauli emission channel in \cref{app:system-pauli-emission}. Finally, we summarize the full mapping in \cref{app:summary-mapping}.

\subsection{Setting and assumptions}\label{app:assumptions}
We fix the microscopic QCA noise model from \cref{sec:qca-details} and collect here the
standing assumptions and notation used to derive the effective PCA HMM.
The environment consists of qubits on a bipartite graph \(V = R \cup B\) (red/black sublattices). Throughout, \(\{\ket{s}\}\) denotes the computational basis of the environment Hilbert space, with
\(s\in\{0,1\}^{|V|}\), and we write \(\Pi_s := \ketbra{s}{s}\).
For the bipartition we use
\begin{equation}
s=(r,b)\in \{0,1\}^{|R|}\times\{0,1\}^{|B|},\qquad \ket{s}=\ket{r}_R\otimes\ket{b}_B,
\end{equation}
with corresponding projectors \(\Pi_r^R:=\ketbra{r}{r}_R\) and \(\Pi_b^B:=\ketbra{b}{b}_B\).
For a site \(j\in B\) we denote its red neighbors by \(N_R(j)\subseteq R\), and similarly
for \(i\in R\) we denote its black neighbors by \(N_B(i)\subseteq B\). We make the following assumptions, labelled (I)--(V).

\begin{enumerate}
    \item[(I)] \textbf{Classical environment initialization}. Before any dynamics, the environment is initialized as a mixture of computational basis states
    \begin{equation}\label{eq:initial_rho_env}
        \rho_E = \sum_s \Pr(s) \Pi_s.
    \end{equation}

    \item[(II)] \textbf{Single cycle structure}. One cycle of the microscopic dynamics is the composition of (i) an environment-only storm channel \(\mathcal{S}^E\), (ii) two QCA half-steps \(\mathcal{Q}^E = \mathcal{Q}_{B\to R}^E \circ \mathcal{Q}_{R\to B}^E\) on the environment, and (iii) a local system-environment unitary interaction \(\mathcal{U}^{SE}\), as detailed in \cref{sec:qcamodel}.
    
    \item[(III)] \textbf{QCA unitaries}. Denote the QCA half-step unitaries as \(Q_{R\to B}^E(\theta)\) and \(Q_{B\to R}^E(\theta)\) for \(\mathcal{Q}_{R\to B}^E\) and \(\mathcal{Q}_{B\to R}^E\) respectively. Fix \(\theta \in \mathbb{R}\), and for each directed edge \(i \to j\) with \(i \in R\), \(j \in B\), define the controlled \(X\)-rotation on the target \(E_j\),
    \begin{equation}
        U_{i\to j}(\theta):=\ketbra{0}{0}_{E_i}\otimes I_{E_j}+\ketbra{1}{1}_{E_i}\otimes e^{-i\theta X_{E_j}}.
    \end{equation}
    The unitary \(Q^{E}_{R \to B}(\theta)\) is the product of these (commuting) gates over all such edges,
    \begin{equation}
        Q^{E}_{R\to B}(\theta):=\prod_{j\in B}\ \prod_{i\in N_R(j)} U_{i\to j}(\theta).
    \end{equation}
    Similarly, \(Q^{E}_{B\to R}(\theta)\) is defined by swapping the roles of \(R\) and \(B\).

    \item[(IV)] \textbf{System-environment interaction}.
    At each site, the system interacts locally with the corresponding environment qubit via a controlled unitary
    \begin{equation}\label{eq:se-unitary}
        U_{SE}=\sum_{z\in\{0,1\}}\Pi_z^E\otimes V_z^S,
    \end{equation}
    where \(\Pi^{E}_{z}:=\ketbra{z}{z}_E\). In general, the conditional unitaries \(\{V_z\}\) must be orthogonal under the Hilbert-Schmidt inner product,
    \begin{equation}\label{eq:hs-orthogonal}
        \Tr(V_zV_{z'}^\dagger) = d_S \delta_{z,z'},
    \end{equation}
    where \(d_S\) is the local system dimension. In this work, we specialize to
    \begin{equation}\label{eq:unitary-decomp}
        V_0 = I, \quad V_1 = \vec{n} \cdot \vec{\sigma} = n_X X + n_Y Y + n_Z Z, \quad \|\vec{n}\|^2 = 1,
    \end{equation}
    for some \(\vec{n} \in \mathbb{R}^3\).

    \item[(V)] \textbf{System Pauli twirl}.
    Each cycle includes Pauli-frame randomization on the system, yielding the effective (local) system-environment interaction
    \begin{equation}
        \widetilde{\mathcal{U}}^{SE}[\rho] = \frac{1}{4}\sum_{P\in\{I,X,Y,Z\}} (I \otimes P) U_{SE} (I \otimes P) \rho (I \otimes P) U_{SE}^\dagger (I \otimes P).
    \end{equation}
    For later use, it is convenient to record the generalized operator-sum representation for \(\widetilde{\mathcal{U}}^{SE}\). Inserting \cref{eq:se-unitary} into the above,
    \begin{equation}\label{eq:twirled-kraus-ops}
    \widetilde{\mathcal{U}}^{SE}[\rho] = \sum_P \sum_{z,z'} K_{P, z} \rho K_{P, z'}^\dagger, \quad K_{P, z} = \frac{1}{2} \Pi_z \otimes (P V_z P).
\end{equation}
\end{enumerate}
We now proceed to derive the effective PCA kernel and system Pauli emission channel under these assumptions.

\subsection{Environment re-classicalization induced by system twirling}\label{app:env-dephasing}
We first show that, under the system Pauli twirl, the reduced action of the system-environment interaction on the environment is exactly a \(Z\)-basis dephasing channel. This is the mechanism by which the environment is re-classicalized at each cycle.

For a single site (suppressing the site index), let \(\rho_S\) be an arbitrary (fixed) system state, and define the reduced environment channel induced by the twirled interaction \(\widetilde{\mathcal{U}}^{SE}\) in (V) as
\begin{equation}\label{eq:system-partial-trace}
    \mathcal{E}[\rho_E] = \Tr_S(\widetilde{\mathcal{U}}^{SE}[\rho_E \otimes \rho_S]),
\end{equation}
Using the Kraus representation in \cref{eq:twirled-kraus-ops}, we may write
\begin{equation}
    \begin{split}
        \mathcal{E}[\rho_E] 
        &= \sum_P \sum_{z,z'} \Tr_S(K_{P, z} (\rho_E \otimes \rho_S) K_{P, z'}^\dagger) \\
        &= \sum_P \sum_{z,z'} \frac{1}{4} \Tr_S((\Pi_z \otimes PV_zP) (\rho_E \otimes \rho_S) (\Pi_{z'} \otimes PV_{z'}P)^\dagger) \\
        &= \sum_{z,z'} \Pi_z \rho_E \Pi_{z'} \cdot \frac{1}{4} \sum_P \Tr(P V_z P \rho_S P V_{z'}^\dagger P).
    \end{split}
\end{equation}
By the cyclicity of the trace and \(P^2 = I\), we express
\begin{equation}\label{eq:post-cyclic}
    \mathcal{E}[\rho_E] = \sum_{z,z'} \Pi_z \rho_E \Pi_{z'} \cdot \Tr(V_z \frac{1}{4}\sum_P(P \rho_S P) V_{z'}^\dagger).
\end{equation}
The Pauli average inside the trace is the completely depolarizing channel on \(\rho_S\),
\begin{equation}
    \frac{1}{4}\sum_{P \in \{I,X,Y,Z\}} P \rho_S P = \frac{\mathbb{I}}{2}.
\end{equation}
So \cref{eq:post-cyclic} becomes,
\begin{equation}
    \mathcal{E}[\rho_E] = \sum_{z,z'} \Pi_z \rho_E \Pi_{z'} \cdot \frac{1}{2} \Tr(V_z V_{z'}^\dagger),
\end{equation}
Invoking the Hilbert-Schmidt orthogonality in \cref{eq:hs-orthogonal}, we obtain
\begin{equation}
    \mathcal{E}[\rho_E] = \sum_z \Pi_z \rho_E \Pi_z.
\end{equation}
That is, the reduced environment update is exactly dephasing in the computational basis, and independent of the choice of \(\rho_S\). For the full environment lattice, the interaction \(U_{SE} = \bigotimes_{i\in V} U_{SE}^{(i)}\) factorizes across sites and the twirl is applied locally, so the reduced environment update also factorizes as
\begin{equation}
    \Delta_E[\rho] 
    \coloneq \bigotimes_{i\in V} \sum_{z\in\{0,1\}} \Pi_z^{E_i} \rho \Pi_z^{E_i}
    = \sum_{s \in \{0,1\}^{|V|}} \Pi_s \rho \Pi_s,
\end{equation}
For use in the next section, we also define the sublattice dephasing channels
\begin{equation}
    \Delta_B[\rho] \coloneq \sum_b(\Pi_b^B \otimes I_R) \rho (\Pi_b^B \otimes I_R), \quad
    \Delta_R[\rho] \coloneq \sum_r(I_B \otimes \Pi_r^R) \rho (I_B \otimes \Pi_r^R),
\end{equation}
such that \(\Delta_E = \Delta_B \circ \Delta_R = \Delta_R \circ \Delta_B\).

\subsection{Derivation of the PCA kernel}\label{app:pca-kernel}
We now derive the effective classical update rule generated by the coherent QCA step under the per-cycle environment dephasing derived in \cref{app:env-dephasing}. Concretely, we evaluate the map \(\rho_E \mapsto (\Delta_E \circ \mathcal{Q}^E)[\rho_E]\) for an initial classical state \(\rho_E\) as in (I). We show that this is equivalent to a two-step PCA update with flip probabilities \(\sin^2(k\theta)\), where \(k\) is the number of excited neighbors in the opposite sublattice. Note that although the storm channel \(\mathcal{S}^E\) precedes \(\mathcal{Q}^E\), it does not affect the form of the PCA kernel, since it is a classical channel that preserves the diagonal structure of \(\rho_E\).

First, we fix an environment basis configuration \(s = (r,b)\) and consider the action of \(Q_{R\to B}^E(\theta)\) from (III). For a single directed edge (gate) \(i \to j\) with \(i \in R\), \(j \in B\),
\begin{equation}\label{eq:single-i-single-j}
    U_{i\to j}(\theta)[\ket{r_i} \otimes \ket{\psi}_j] = \ket{r_i} \otimes (e^{-i\theta X_{E_j}})^{r_i} \ket{\psi}_j.
\end{equation}
The control bit \(r_i \in \{0,1\}\) controls whether to apply an \(X\)-rotation to the target \(j\) or to do nothing. For a fixed target \(j \in B\), the product over its red neighbors yields
\begin{equation}\label{eq:single-j}
    \prod_{i \in N_R(j)} U_{i\to j}(\theta)[\ket{r} \otimes \ket{\psi}_j] = \ket{r} \otimes (e^{-ik_j(r)\theta X_{E_j}}) \ket{\psi}_j,
\end{equation}
where \(k_j(r) \coloneq \sum_{i \in N_R(j)} r_i\) is the number of excited red neighbors. Because gates on distinct targets \(j\) commute, the full first half-step acts as
\begin{equation}\label{eq:first-half-step-factor}
    Q_{R\to B}(\theta)[\ket{r} \otimes \ket{b}] = \ket{r} \otimes \bigotimes_{j \in B} (e^{-ik_j(r)\theta X_{E_j}}) \ket{b_j}.
\end{equation}
By the exact same reasoning, \(Q_{B\to R}(\theta)\) can be written as
\begin{equation}\label{eq:second-half-step-factor}
    Q_{B\to R}(\theta)[\ket{r} \otimes \ket{b}] = \bigotimes_{i \in R} (e^{-ik_i(b)\theta X_{E_i}}) \ket{r_i} \otimes \ket{b},
\end{equation}
The key observation is that \(Q_{B\to R}(\theta)\) (the \emph{second} half-step) in \cref{eq:second-half-step-factor} uses \(B\) only as a computational-basis control. It leaves each basis state \(\ket{b}\) of \(B\) unchanged and applies a \(b\)-dependent unitary to \(R\), without coupling distinct basis states \(\ket{b}\) and \(\ket{b'}\). It can therefore be written as the block-diagonal operator
\begin{equation}
    Q_{B\to R}(\theta) = \sum_b \Pi_b^B \otimes W^R_b,
\end{equation}
for some unitaries \(W_b^R\) on \(R\). In this form, we can express the action of \(\Delta_B\) on the second half-step as
\begin{equation}
    \Delta_B[Q_{B \to R} \rho Q_{B \to R}^\dagger] = Q_{B \to R} \Delta_B[\rho] Q_{B \to R}^\dagger,
\end{equation}
i.e., the commutation relation \(\Delta_B \circ \mathcal{Q}_{B \to R} = \mathcal{Q}_{B \to R} \circ \Delta_B\). 

Now using \(\Delta_E = \Delta_R \circ \Delta_B\), we may write
\begin{equation}\label{eq:maps-commutation}
    \Delta_E \circ \mathcal{Q} = \Delta_R \circ \Delta_B \circ \mathcal{Q}_{B \to R} \circ \mathcal{Q}_{R \to B} = \Delta_R \circ \mathcal{Q}_{B \to R} \circ \Delta_B \circ \mathcal{Q}_{R \to B}.
\end{equation}
This expression allows us to insert \(\Delta_B\) immediately after the first half-step \(\mathcal{Q}_{R \to B}\). Crucially, this enables us to show that the intermediate coherences in \(B\) generated by the first half-step do not affect the final distribution, and that the two QCA half-step unitaries can be treated as separate PCA updates.

We now evaluate the action of \(\Delta_B \circ \mathcal{Q}_{R \to B}\) on a basis state \(\ket{r}\!\ket{b}\). For a single site \(j \in B\) and any \(\varphi \in \mathbb{R}\), recall
\begin{equation}
    e^{-i\varphi X} \ket{0} = \cos(\varphi) \ket{0} - i \sin(\varphi) \ket{1}, \quad
    e^{-i\varphi X} \ket{1} = \cos(\varphi) \ket{1} - i \sin(\varphi) \ket{0}.
\end{equation}
Hence, for \(z \in \{0,1\}\),
\begin{equation}
    \Delta_z[e^{-i\varphi X} \ketbra{z}{z} e^{i\varphi X}] = \cos^2(\varphi) \ketbra{z}{z} + \sin^2(\varphi) \ketbra{1\oplus z}{1\oplus z},
\end{equation}
since off-diagonal terms are removed by the dephasing.

We now take \(\varphi = k_j(r) \theta\). Using the factorized form of \(\mathcal{Q}_{R \to B}\) in \cref{eq:first-half-step-factor}, we can therefore interpret \(\Delta_B \circ \mathcal{Q}_{R \to B}\) as a classical flip rule on each black site \(j\in B\), conditional on \(r\). Explicitly, the bit \(b_j \mapsto b'_j\) is flipped with probability
\begin{equation}
    \Pr(b'_j = b_j \oplus 1 | r) = \sin^2(k_j(r) \theta), \quad
    \Pr(b'_j = b_j | r) = \cos^2(k_j(r) \theta).
\end{equation}
Furthermore, because \(\mathcal{Q}_{R \to B}\) factorizes across \(j\) in \cref{eq:first-half-step-factor}, the per-site updates can be applied independently across \(j\).

By the exact same logic, the map \(\Delta_R \circ \mathcal{Q}_{B \to R}\) implements a classical flip rule on each red site \(i \in R\), conditional on the updated black configuration \(b'\). Here, the bit \(r_i \mapsto r'_i\) is flipped with probability
\begin{equation}
    \Pr(r'_i = r_i \oplus 1 | b') = \sin^2(k_i(b') \theta), \quad
    \Pr(r'_i = r_i | b') = \cos^2(k_i(b') \theta),
\end{equation}

Combining the two half-steps, the QCA update \(\Delta_E \circ \mathcal{Q}\) under the system Pauli twirl induces the following two-step PCA kernel on configurations \(s = (r,b) \mapsto s'= (r',b')\):
\begin{enumerate}
    \item Update \(b \mapsto b'\) by flipping each \(j \in B\) independently with probability \(\sin^2(k_j(r) \theta)\);
    \item Update \(r \mapsto r'\) by flipping each \(i \in R\) independently with probability \(\sin^2(k_i(b') \theta)\).
\end{enumerate}
This is precisely the two-step PCA kernel defined in \cref{sec:pca-details}.

\subsection{Conditional system Pauli emission}\label{app:system-pauli-emission}
We now derive the induced system Pauli channel conditioned on the (updated) environment configuration. Consider a single site and suppress the site index. Let \(\widetilde{\mathcal{U}}^{SE}\) be the twirled interaction from (V), and define the reduced system channel
\begin{equation}
    \mathcal{F}[\rho_S] \coloneq \Tr_E(\widetilde{\mathcal{U}}^{SE}[\rho_E \otimes \rho_S]).
\end{equation}
From \cref{app:env-dephasing,app:pca-kernel}, the environment state \(\rho_E\) entering the effective interaction is always a mixture of computational basis states. Inserting the Kraus representation from \cref{eq:twirled-kraus-ops},
\begin{equation}
    \begin{split}
        \mathcal{F}[\rho_S]
        &= \sum_P \sum_{z,z'} \Tr_E(K_{P,z} (\rho_E \otimes \rho_S) K_{P,z'}^\dagger) \\
        &= \sum_P \sum_{z,z'} \frac{1}{4} \Tr_E((\Pi_z \otimes PV_zP) (\rho_E \otimes \rho_S) (\Pi_{z'} \otimes PV_{z'}P)^\dagger) \\
        &= \sum_{z,z'} \Tr(\Pi_z \rho_E \Pi_{z'}) \cdot \frac{1}{4} \sum_P P V_z P \rho_S P V_{z'}^\dagger P.
    \end{split}
\end{equation}
By the cyclicity of the trace and \(\Pi_z\Pi_{z'} = \delta_{z,z'} \Pi_z\),
\begin{equation}
    \mathcal{F}[\rho_S] = \sum_z \Tr(\Pi_z \rho_E) \cdot \frac{1}{4} \sum_P P V_z P \rho_S P V_z^\dagger P,
\end{equation}
with \(\sum_z \Tr(\Pi_z \rho_E) = 1\). As a result, \(\mathcal{F}\) is a convex combination of \emph{conditional} system channels,
\begin{equation}
    \mathcal{F}[\rho_S] = \sum_z p_z \mathcal{F}_z[\rho_S], \quad p_z \coloneq \Tr(\Pi_z \rho_E), \quad \mathcal{F}_z[\rho_S] \coloneq \frac{1}{4} \sum_P P V_z P \rho_S P V_z^\dagger P.
\end{equation}
For \(z=0, V_0 = I\), we immediately have
\begin{equation}
    \mathcal{F}_0[\rho_S] = \rho_S.
\end{equation}
For \(z=1, V_1 = \vec{n} \cdot \vec{\sigma}\), we observe that \(\mathcal{F}_1\) is exactly the Pauli twirl of a unitary channel with single Kraus operator \(V_1\). The Pauli twirl of a channel with Kraus operators \(\{K_i\}\) can be written as
\begin{equation}
    \mathcal{T}_P(\mathcal{E})[\rho] = \frac{1}{d^2} \sum_P \sum_i |\Tr(P K_i)|^2 P \rho P,
\end{equation}
where \(d\) is the system dimension. Substituting \(K_1 = V_1\) and \(d=2\) yields
\begin{equation}
    \mathcal{F}_1[\rho_S] 
    = \sum_P \frac{1}{4}|\Tr(PV_1)|^2 P \rho_S P 
    = \sum_P \frac{1}{4}|\Tr(P (n_X X + n_Y Y + n_Z Z))|^2 P \rho_S P.
\end{equation}
Using orthogonality of Pauli operators under the Hilbert-Schmidt inner product, we obtain the local system Pauli channel
\begin{equation}
    \mathcal{F}_1[\rho_S] = n_X^2 X \rho_S X + n_Y^2 Y \rho_S Y + n_Z^2 Z \rho_S Z,
\end{equation}
with \(n_X^2 + n_Y^2 + n_Z^2 = 1\) by \cref{eq:unitary-decomp}.

For the full lattice, \(U_{SE} = \bigotimes_{i\in V} U_{SE}^{(i)}\) and the twirl is applied locally, so conditional on the updated environment configuration \(s_{t+1} \in \{0,1\}^{|V|}\), the emitted Pauli string \(x_t \in \{I, X, Y, Z\}^{|V|}\) factorizes across sites with the above single-site channel.

\subsection{Summary of full mapping}\label{app:summary-mapping}
We now summarize the full mapping from the microscopic QCA noise model to the effective PCA HMM induced by the system Pauli twirl. We iterate through the PCA HMM steps as defined in \cref{sec:pca-details}, and justify each in reference to the assumptions and derivations throughout \cref{app:assumptions,app:env-dephasing,app:pca-kernel,app:system-pauli-emission}.

\begin{enumerate}
    \item[0.] \textbf{Environment initialization}. Assumed in (I), the initialization is identical in both models.

    \item \textbf{Storm update.}
    Assumed in (II). The storm channel \(\mathcal{S}^E\) is only on the environment and preserves a diagonal \(\rho_E\), so it is simulated exactly as a classical update on \(s_t\) and does not affect the subsequent reductions.

    \item \textbf{PCA half-step (black)}.
    By (II)–-(III), the microscopic model applies \(\mathcal{Q}_{R\to B}^E\). Under the twirl, \cref{app:env-dephasing} dephases the environment in the computational basis, and \cref{app:pca-kernel} shows \(\Delta_B\circ \mathcal{Q}_{R\to B}^E\) is exactly the conditional independent flip update on \(B\) with probability \(\sin^2(k\theta)\).

    \item \textbf{PCA half-step (red)}.
    Likewise, \cref{app:pca-kernel} shows \(\Delta_R\circ \mathcal{Q}_{B\to R}^E\) is exactly the conditional independent flip update on \(R\) with probability \(\sin^2(k\theta)\), conditioned on the updated black configuration.
    
    \item \textbf{Pauli emission}.
    By (IV)–-(V) and \cref{app:system-pauli-emission}, conditional on \(s_{t+1}(i)\) the emitted Pauli is \(I\) if \(s_{t+1}(i)=0\), and \(P\in\{X,Y,Z\}\) with probabilities \(n_P^2\) if \(s_{t+1}(i)=1\), independently across sites.
    
    \item \textbf{Repeat}. 
    By (II) and \cref{app:env-dephasing}, tracing out the system after the twirled interaction applies \(\Delta_E\), so the environment entering the next cycle is again diagonal; hence the above steps iterate cycle-by-cycle.
\end{enumerate}
This completes the mapping from the microscopic QCA model in \cref{sec:qca-details} to the effective PCA HMM in \cref{sec:pca-details}.

\end{document}